
\documentclass[11pt]{article}

\usepackage{fullpage,amsthm}
\usepackage[disable]{todonotes}
\usepackage{graphicx} % support the \includegraphics command and options
\usepackage{array} % for better arrays (eg matrices) in maths
\usepackage{amsmath, amssymb, color, verbatim}
\usepackage{hyphenat,epsfig,subfigure,multirow}
\usepackage{hyperref}
\usepackage{bbold}
\usepackage{footnote}
\makesavenoteenv{table}
\makesavenoteenv{tabular}

\usepackage{algorithm2e}

\newtheorem{theorem}{Theorem}[section]
%[section]
\newtheorem{lemma}[theorem]{Lemma}
\newtheorem{proposition}[theorem]{Proposition}
\newtheorem{corollary}[theorem]{Corollary}

\newtheorem{fact}[theorem]{Fact}
\newtheorem{conj}[theorem]{Conjecture}
\newtheorem{definition}[theorem]{Definition}
%[section]
%\theoremstyle{definition}
\newtheorem{remark}[theorem]{Remark}

\DeclareMathOperator*{\argmax}{arg\,max}

\renewcommand{\qed}{\nobreak \ifvmode \relax \else
      \ifdim\lastskip<1.5em \hskip-\lastskip
      \hskip1.5em plus0em minus0.5em \fi \nobreak
      \vrule height0.75em width0.5em depth0.25em\fi}

\newcommand{\eps}{\epsilon}

\def\E{\mathop{\mathbb{E}}\displaylimits}

\newcommand{\cA}{\mathcal A}
\newcommand{\cB}{\mathcal B}
\newcommand{\cS}{\mathcal S}

\newcommand{\ds}{D^{lin}}

\newcommand{\fr}{\text{freq }}

\newcommand{\R}[0]{{\ensuremath{\mathbb{R}}}}

 %product distribution symbol in comm. complexity

\DeclareMathOperator{\argmin}{argmin}

\newcommand{\dl}{D^{lin}}
\newcommand{\rl}[1]{R^{lin}_{#1}}
\newcommand{\distl}[2]{\mathcal D^{lin, #2}_{#1}}
\newcommand{\distcm}[1]{\mathcal D^{\rightarrow}_{#1}}
\newcommand{\distlm}[1]{\mathcal D^{lin}_{#1}}
\newcommand{\distc}[2]{\mathcal D^{\rightarrow, #2}_{#1}}
\newcommand{\distlu}[1]{\distl{#1}{U}}
\newcommand{\distcu}[1]{\distc{#1}{U}}
\newcommand{\dc}{D^{\rightarrow}}
\newcommand{\rc}[1]{R^{\rightarrow}_{#1}}

\newcommand{\F}{\mathbb F}

\newcommand{\oo}{\{+1,-1\}}
\newcommand{\ftwo}{\F_2}
\def\E{\mathop{\mathbb{E}}\displaylimits}

\newcommand{\fplus}[1]{f^{+#1}}
\newcommand{\epsu}[2]{\eps_{#1}(#2)}
\newcommand{\dgap}[2]{\Delta_{#1}(#2)}

\newcommand{\mt}{Maj_3}
\newcommand{\mtk}[1]{\mt^{\circ #1}}

\title{Linear Sketching over $\mathbb F_2$}

\author{ Sampath Kannan \thanks{University of Pennsylvania, \texttt{kannan@cis.upenn.edu}}
	\and Elchanan Mossel \thanks{Massachusetts Institute of Technology, \texttt{elmos@mit.edu}. E.M. acknowledges the support of grant N00014-16-1-2227 from Office of Naval Research and of NSF award CCF 1320105 as well as support from Simons Think Tank on Geometry \& Algorithms.}
	\and Grigory Yaroslavtsev \thanks{Indiana University, Bloomington \texttt{grigory@grigory.us}}
	}

\begin{document}

\listoftodos

%\section*{Notation}
%\begin{itemize}
%\item $\dl$ = Deterministic linear sketch complexity.
%\item $\rl{\delta}$ = Randomized linear sketch complexity with error $\delta$.
%\item $\distl{\delta}{U}$ = Distributional linear sketch complexity over uniform distribution.
%\item $\distlm{\delta}$ = Distributional linear sketch complexity.
%\item $\dc$ = Deterministic one-way communication complexity.
%\item $\rc{\delta}$ = Randomized one-way communication complexity with error $\delta$.
%\item $\distc{\delta}{U}$ = Distributional one-way communication complexity over uniform distribution.
%\item $\distcm{\delta}$ = Distributional one-way communication complexity.
%\end{itemize}

\newpage

\maketitle
\thispagestyle{empty}
\begin{abstract}
	We initiate a systematic study of linear sketching over $\ftwo$.
	For a given Boolean function $f \colon \{0,1\}^n \to \{0,1\}$ a randomized $\ftwo$-sketch is a distribution $\mathcal M$ over $d \times n$ matrices with elements over $\ftwo$ such that $\mathcal Mx$ suffices for computing $f(x)$ with high probability. We study a connection between $\ftwo$-sketching and a two-player one-way communication game for the corresponding XOR-function. Our results show that this communication game characterizes $\ftwo$-sketching under the uniform distribution (up to dependence on error).
	Implications of this result include: 1) a composition theorem for $\ftwo$-sketching complexity of a recursive majority function, 2) a tight relationship between $\ftwo$-sketching complexity and Fourier sparsity, 3) lower bounds for a certain subclass of symmetric functions.
	We also fully resolve a conjecture of Montanaro and Osborne regarding one-way communication complexity of linear threshold functions by designing an $\ftwo$-sketch of optimal size.
	
	Furthermore, we show that (non-uniform) \todo{add some discussion of non-uniformity?} streaming algorithms that have to process random updates over $\ftwo$ can be constructed as $\ftwo$-sketches for the uniform distribution with only a minor loss. In contrast with the previous work of Li, Nguyen and Woodruff (STOC'14) who show an analogous result for linear sketches over integers in the adversarial setting our result doesn't require the stream length to be triply exponential in $n$ and holds for streams of length $\tilde O(n)$ constructed through uniformly random updates.
	Finally, we state a conjecture that asks whether optimal one-way communication protocols for XOR-functions can be constructed as $\ftwo$-sketches with only a small loss.
\end{abstract}

\newpage
\thispagestyle{empty}
\tableofcontents

\newpage

\setcounter{page}{1}

\section{Introduction}

Linear sketching is the underlying technique behind many of the biggest algorithmic breakthroughs of the past two decades. 
It has played a key role in the development of streaming algorithms since~\cite{AMS99}\todo{more citations} and most recently has been the key to modern randomized algorithms for numerical linear algebra (see survey~\cite{W14}), graph compression (see survey~\cite{M14}), dimensionality reduction\todo{cite some of the Jelani's papers}, etc.
Linear sketching is robust to the choice of a computational model and can be applied in settings as seemingly diverse as streaming, MapReduce as well as various other distributed models of computation~\cite{HPPSS15}, allowing to save computational time, space and reduce communication in distributed settings. 
This remarkable versatility is based on properties of linear sketches enabled by linearity: simple and fast updates and mergeability of sketches computed on distributed data. Compatibility with fast numerical linear algebra packages makes linear sketching particularly attractive for applications.

Even more surprisingly linear sketching over the reals is known to be the best possible algorithmic approach (unconditionally) in certain settings. Most notably, under some mild conditions linear sketches are known to be almost space optimal for processing dynamic data streams~\cite{G08,LNW14,AHLW16}.
Optimal bounds for streaming algorithms for a variety of computational problems can be derived through this connection by analyzing linear sketches rather than general algorithms. Examples include approximate matchings~\cite{AKLY16}, additive norm approximation~\cite{AHLW16} and frequency moments~\cite{LNW14}. 

In this paper we study the power of linear sketching over $\ftwo$.
\footnote{It is easy to see that sketching over finite fields can be significantly better than linear sketching over integers for certain computations. As an example, consider a function $(x \mod 2)$ (for an integer input $x$) which can be trivially sketched with $1$ bit over the field of two elements while any linear sketch over the integers requires word-size memory.}
To the best of our knowledge no such systematic study currently exists as prior work focuses on sketching over the field of reals (or large finite fields as reals are represented as word-size bounded integers).
Formally, given a function $f \colon \{0,1\}^n \rightarrow \{0,1\}$ that needs to be evaluated over an input $x = (x_1, \dots, x_n)$ we are looking for a distribution over $k$ subsets $\mathbf{S}_1, \dots, \mathbf{S}_k \subseteq [n]$ such that the following holds: for any input $x$ given parities computed over these sets and denoted as $\chi_{\mathbf{S}_1}(x), \chi_{\mathbf{S}_2}(x), \dots, \chi_{\mathbf{S}_k}(x)$\footnote{Here we use notation $\chi_S(x) = \oplus_{i \in S} x_i$.} it should be possible to compute $f(x)$ with  probability $ 1- \delta$.
In the matrix form sketching corresponds to multiplication over $\mathbb F_2$ of the row vector $x$ by a random $n \times k$ matrix  whose $i$-th column is a characteristic vector of the random parity $\chi_{\mathbf{S}_i}$:
\vspace{-15pt}
$$
\bordermatrix {
	&&&&\cr
	& x_1     & x_2     & \ldots & x_n  \cr
}
\bordermatrix{
	&     &      &  &      \cr
	& \vdots & \vdots & \vdots & \vdots     \cr
	& \chi_{\mathbf{S}_1}     & \chi_{\mathbf{S}_2}     & \ldots & \chi_{\mathbf{S}_k}     \cr
	& \vdots & \vdots & \vdots & \vdots \cr
}
=\bordermatrix {
	&&&&\cr
	& \chi_{\mathbf{S}_1}(x)     & \chi_{\mathbf{S}_2}(x)     & \ldots & \chi_{\mathbf{S}_k}(x)  \cr
}
$$
This sketch alone should then be sufficient for computing $f$ with high probability for any input $x$. 
This motivates us to define the \textit{randomized linear sketch} complexity of a function $f$ over $\ftwo$  as the smallest $k$ which allows to satisfy the above guarantee.\todo{switch notation here}
\begin{definition}[$\ftwo$-sketching]\label{def:rand-f2-sketch}
For a function $f \colon \ftwo^n \to \ftwo$ we define its \emph{randomized linear sketch complexity}\todo{maybe replace linear sketch with $\ftwo$-sketch everywhere?}
\footnote{In the language of decision trees this can be interpreted as randomized non-adaptive parity decision tree complexity. We are unaware of any systematic study of this quantity either. Since heavy decision tree terminology seems excessive for our applications (in particular, sketching is done in one shot so there isn't a decision tree involved) we prefer to use a shorter and more descriptive name.} over $\ftwo$ with error $\delta$ (denoted as $\rl{\delta}(f)$) as the smallest integer $k$ such that there exists a distribution $\chi_{\mathbf{S}_1},\chi_{\mathbf{S}_2},\ldots, \chi_{\mathbf{S}_k}$ over $k$ linear functions over $\ftwo$ and a postprocessing function $g:\ftwo^k \rightarrow \ftwo$\footnote{If a random family of functions is used here then the definition is changed accordingly. In this paper all $g$ are deterministic.} which satisfies:
$$
\forall x \in \ftwo^n \colon \Pr_{\mathbf{S}_1, \dots, \mathbf{S}_k}[f(x_1,x_2,\ldots,x_n) =
g(\chi_{\mathbf{S}_1}(x),\chi_{\mathbf{S}_2}(x),\ldots, \chi_{\mathbf{S}_k}(x))] \ge 1-\delta.
$$
\end{definition}

As we show in this paper the study of $\rl{\delta}(f)$ is closely related to a certain communication complexity problem.
For $f \colon \ftwo^n \to \ftwo$ define the XOR-function $\fplus{}\colon \ftwo^n \times \ftwo^n \to \ftwo$ as $\fplus{}(x,y) = f(x + y)$ where $x,y \in \ftwo^n$.
Consider a communication game between two players Alice and Bob holding inputs $x$ and $y$ respectively.
Given access to a shared source of random bits Alice has to send a single message to Bob so that he can compute $\fplus{}(x,y)$.
This is known as the one-way communication complexity problem for XOR-functions. 
\begin{definition}[Randomized one-way communication complexity of XOR function]
For a function $f \colon \ftwo^n \to \ftwo$ the \emph{randomized one-way communication complexity} with error $\delta$ (denoted as $\rc{\delta}(\fplus{})$) of its XOR-function is defined as the smallest size\footnote{Formally the minimum here is taken over all possible protocols where for each protocol the size of the message $M(x)$ refers to the largest size (in bits) of such message taken over all inputs $x \in \ftwo^n$. See~\cite{KN97} for a formal definition.} (in bits) of the (randomized using public randomness) message $M(x)$ from Alice to Bob which allows Bob to evaluate $\fplus{}(x,y)$ for any $x,y \in \ftwo^n$ with error probability at most $\delta$.
\end{definition}
Communication complexity complexity of XOR-functions has been recently studied extensively in the context of the log-rank conjecture (see e.g.~\cite{SZ08,ZS10,MO09,LZ10,LLZ11,SW12,LZ13,TWXZ13,L14,HHL16}). However, such studies either mostly focus on deterministic communication complexity or are specific to the two-way communication model. We discuss implications of this line of work for our $\ftwo$-sketching model in our discussion of prior work.

It is easy to see that $\rc{\delta}(\fplus{}) \le \rl{\delta}(f)$ as using shared randomness Alice can just  send $k$ bits $\chi_{\mathbf{S}_1}(x),\chi_{\mathbf{S}_2}(x),\ldots, \chi_{\mathbf{S}_k}(x)$ to Bob
who can for each $i \in [k]$ compute $\chi_{\mathbf{S}_i}(x + y) = \chi_{\mathbf{S}_i}(x) + \chi_{\mathbf{S}_i}(y)$, which is an $\ftwo$-sketch of $f$ on $x + y$ and hence suffices for computing $\fplus{}(x,y)$ with probability $1 - \delta$. The main open question raised in our work is whether the reverse inequality holds (at least approximately), thus implying the equivalence of the two notions.
\begin{conj}\label{conj:main}
Is it true that $\rc{\delta}(\fplus{}) = \tilde \Theta\left(\rl{\delta}(f)\right)$ for every $f \colon \ftwo^n \to \ftwo$ and $0 < \delta < 1/2$?
\end{conj}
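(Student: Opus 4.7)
The inequality $\rc{\delta}(\fplus{}) \le \rl{\delta}(f)$ is already observed, so the content of the conjecture is the reverse direction. Moreover, the paper establishes that $\rc{\delta}(\fplus{})$ equals, up to $\tilde O(1)$ factors and error dependence, the $\ftwo$-sketching complexity of $f$ \emph{under the uniform input distribution}; the remaining gap to close is therefore between the worst-case $\rl{\delta}(f)$ used in the conjecture and its uniform-distribution counterpart. My plan is to close this gap directly via a structural analysis of $f$ rather than by manipulating the protocol.

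The vehicle would be a period-subgroup argument. Let $H \le \ftwo^n$ denote the period subgroup of $f$, i.e.\ the largest subgroup under which $f$ is translation-invariant. On the one hand, any deterministic $\ftwo$-linear sketch must have kernel contained in $H$, so its cost is at least $\log |\ftwo^n/H|$. On the other hand, the one-way protocol for $\fplus{}$ implicitly separates cosets of $H$: whenever $x, x'$ lie in different cosets of $H$ there exists $y$ with $f(x \oplus y) \ne f(x' \oplus y)$, and the protocol must correctly distinguish the input pairs $(x,y)$ and $(x',y)$, forcing Alice's message distributions to be statistically separated. Quantifying this yields $\log|\ftwo^n/H| \le \tilde O(k)$, after which any $\ftwo$-linear map onto $(\ftwo^n/H)^*$ combined with a tabulated decoder storing $f$ on one representative of each coset is a worst-case correct sketch of the required size.

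The main obstacle is making the ``implicitly separates'' step quantitative in the randomized regime: the ensemble $\{M_r\}_r$ need not separate every pair of distinct cosets of $H$ with probability close to $1$ uniformly across inputs, it need only succeed on average for each fixed input pair. The challenge is to amplify this per-pair conditional guarantee into an unconditional cross-coset separation without paying more than a polylogarithmic factor. I would attempt this using hypercontractivity or a two-source extractor argument applied to the communication matrix $M_{x,y} = f(x \oplus y)$, whose $\ftwo$-rank and Fourier sparsity (both highlighted in the abstract) should simultaneously control the two sides of the conjectured equivalence. If such a bound can be established cleanly, the conjecture follows; if it resists, the conjecture could conceivably fail on functions with a large period-free component but small Fourier complexity, and the most likely place to look for a counterexample would be in that regime.
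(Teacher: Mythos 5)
First, a point of framing: the statement you are proving is Conjecture~\ref{conj:main}, which the paper poses as an \emph{open problem} and does not prove. The paper only establishes the uniform-distribution analogue (Theorem~\ref{thm:linear-sketch-uniform-main} via Theorem~\ref{thm:linear-sketch-uniform}), confirmations for special classes (recursive majority, the address function, symmetric functions, LTFs), and a one-bit lower bound (Section~\ref{sec:one-bit-lb}). So there is no paper proof to match; the question is whether your plan is sound on its own, and it is not.

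The central step of your plan is false, not merely unquantified. Your period subgroup $H$ satisfies $\mathrm{span}(Spec(f)) = H^{\perp}$, so $\log|\ftwo^n/H| = dim(f) = \dl(f)$, the \emph{deterministic} sketch complexity; your kernel argument only lower bounds deterministic sketches, and your proposed upper bound (a linear map onto $(\ftwo^n/H)^{*}$ plus a coset table) is again a deterministic sketch of size $dim(f)$. Hence your route would prove $dim(f) \le \tilde O\bigl(\rc{\delta}(\fplus{})\bigr)$, which is simply wrong: take $f(x) = \mathbb{1}[x = 0]$, so that $\fplus{}(x,y)$ is the equality function. Here $H = \{0\}$ and $\log|\ftwo^n/H| = n$, yet $\rc{\delta}(\fplus{}) = O(\log(1/\delta))$ with public randomness, and likewise $\rl{\delta}(f) = O(\log(1/\delta))$ by the random-parity protocol (Fact~\ref{prop:l0-bound}); the conjecture holds for this $f$, but your intermediate inequality $\log|\ftwo^n/H| \le \tilde O(k)$ fails by an exponential margin. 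The reason is exactly the obstacle you flag but underestimate: a randomized one-way protocol need not statistically separate Alice's message distributions for every pair of distinct cosets of $H$ --- it only needs per-input-pair average success --- and with $2^n$ cosets and $O(\log(1/\delta))$-bit messages no amplification, hypercontractive or otherwise, can force such separation. The deterministic identity $\dc(\fplus{}) = dim(f)$ of~\cite{MO09} is precisely the statement your plan reproves; the whole content of Conjecture~\ref{conj:main} is whether randomization can be handled, and that requires comparing randomized sketches to randomized protocols directly (e.g., via distributional complexity over suitable input distributions), not collapsing both to the deterministic quantity $dim(f)$. Note also that your reading of the remaining gap as ``worst-case versus uniform-distribution sketching'' glosses over a documented difficulty: Lemma~\ref{lem:rand-sketch-complexity-converse} shows that the structural argument behind the uniform-distribution result (Fourier concentration implies cheap sketching) genuinely breaks for arbitrary distributions, so the passage from $\distcu{}{}$-type statements to $\rc{\delta}$ versus $\rl{\delta}$ is where the open difficulty lives, and your proposal does not engage with it.
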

In fact all known one-way protocols for XOR-functions can be seen as $\ftwo$-sketches so it is natural to ask whether this is always true. In this paper we further motivate this conjecture through a number of examples of classes of functions for which it holds.
One important such example from the previous work is a function $Ham_{\ge k}$ which evaluates to $1$ if and only if the Hamming weight of the input string is at least $k$. The corresponding XOR-function $Ham_{\ge k}^+$ can be seen to have one-way communication complexity of $\Theta(k \log k)$ via the small set disjointness lower bound of~\cite{DKS12} and a basic upper bound based on random parities~\cite{HSZZ06}.
Conjecture~\ref{conj:main} would imply that in order to prove a one-way disjointness lower bound it suffices to only consider $\ftwo$-sketches.

In the discussion below using Yao's principle we switch to the equivalent notion of distributional complexity of the above problems denoted as $\distcm{\delta}$ and $\distlm{\delta}$ respectively.
For the formal definitions we refer to the reader to Section~\ref{sec:communication-complexity} and a standard textbook on communication complexity~\cite{KN97}.
Equivalence between randomized and distributional complexities allows us to restate Conjecture~\ref{conj:main} as $\distcm{\delta} = \tilde{\Theta}(\distlm{\delta})$.

For a fixed distribution $\mu$ over $\ftwo^n$ we define $\distl{\delta}{\mu}(f)$ to be the smallest dimension of an $\ftwo$-sketch that correctly outputs $f$ with probability $1- \delta$ over $\mu$.
Similarly for a distribution $\mu$ over $(x,y)\in \ftwo^n\times \ftwo^n$ we denote distributional one-way communication complexity of $f$ with error $\delta$ as $\distc{\delta}{\mu}(\fplus{})$ (See Section~\ref{sec:prelims} for a formal definition). Our first main result is an analog of Conjecture~\ref{conj:main} for the uniform distribution $U$ over $(x,y)$ that matches the statement of the conjecture up to dependence on the error probability: 

\begin{theorem}\label{thm:linear-sketch-uniform-main}
	For any $f \colon \ftwo^n \to \ftwo$ it holds that
	$\distcu{\Theta(\frac1n)}(\fplus{}) \ge \distlu{\frac13}(f)$.
\end{theorem}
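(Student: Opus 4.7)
The plan is to convert a deterministic one-way protocol for $\fplus{}$ of communication $c$ with error $\epsilon=\Theta(1/n)$ under the uniform distribution into an $\ftwo$-linear sketch for $f$ of dimension $\le c$ with error $\le 1/3$. By Yao's principle I may fix such a deterministic protocol; let $M \colon \ftwo^n \to \{0,1\}^c$ be Alice's message function and $g$ Bob's decoder, so the preimages $A_m := M^{-1}(m)$ partition $\ftwo^n$ into $2^c$ blocks.

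First I extract a key quantitative estimate. Writing $\delta_f(d):=\Pr_w[f(w)\neq f(w+d)]$ and using that Bob's optimal response against the distribution of $x \in A_m$ is a majority vote, the correctness condition gives
\[
  \E_{m,y}\Pr_{x,x'\in A_m}\bigl[f(x+y)\neq f(x'+y)\bigr] \;\le\; 2\epsilon.
\]
Reindexing via the change of variables $w:=y+x'$ (which is uniform whenever $y$ is), this becomes $\E_{m,\,x,x'\in A_m}[\delta_f(x-x')] \le 2\epsilon$, with $m$ drawn proportionally to $|A_m|/2^n$. Thus the inner-difference distribution of the partition is concentrated on directions along which $f$ barely changes.

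Next let $G := \{d \in \ftwo^n : \delta_f(d) \le \tau\}$ for a threshold $\tau = \Theta(1/n)$ tuned later, and let $V := \vecspan(G)$. The central structural claim is that $\codim(V) \le c$ (up to an additive constant absorbed into $\Theta(1/n)$). I would prove this by combining Markov's inequality on the difference-distribution bound with a coset-concentration argument: most blocks $A_m$ have nearly all internal differences in $G \subseteq V$, and by a Cauchy--Schwarz argument each such block is therefore mostly supported on a single coset of $V$. Summing the masses of these dominant cosets gives $(1-o(1))\,2^n \le 2^c \cdot |V|$, which yields $\codim(V)\le c$.

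The sketch itself is the canonical projection $L\colon \ftwo^n \to \ftwo^n/V$, realized as a $\codim(V) \times n$ matrix over $\ftwo$, with decoder outputting the majority value of $f$ on each fiber. Since $V=\vecspan(G)$, it admits a basis $b_1,\ldots,b_k$ (with $k \le n$) whose vectors all lie in $G$; the triangle inequality $\delta_f(v_1+v_2)\le \delta_f(v_1)+\delta_f(v_2)$ then yields $\delta_f(v)\le n\tau$ uniformly over $v\in V$, which is $\le 1/6$ when $\tau$ is chosen with a small enough constant. Standard plurality-decoder analysis gives expected error $\le 2\,\E_{v\in V}[\delta_f(v)] \le 1/3$ on uniform input. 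I expect the main obstacle to be the coset-concentration step of the third paragraph, where the $\Theta(1/n)$ error budget must be carefully split between Markov's inequality (to make most blocks nearly unicoset in $V$) and the triangle-inequality loss of factor $n$ over a basis of $V$ (to convert per-direction approximate invariance into global sketch accuracy).
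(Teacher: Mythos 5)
Your proposal is correct, but it takes a genuinely different route from the paper's. The paper proves Theorem~\ref{thm:linear-sketch-uniform-main} Fourier-analytically: it introduces the approximate Fourier dimension, establishes the two-sided characterization of Theorem~\ref{thm:linear-sketch-uniform} (a sketch upper bound at dimension $d$ with error $(1-\epsu{d}{f})/2$ obtained by projecting $f$ onto the concentrating subspace and thresholding at a random level, and a communication lower bound with error $\dgap{d}{f}/4$ obtained from a rectangle argument built on Lemma~\ref{lem:linear-sketch-concentration-lb}), and then gets the $\Theta(1/n)$ error via the averaging-over-$d$ argument of Corollary~\ref{cor:linear-sketch-uniform}. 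You instead argue protocol-first and purely combinatorially: from a $c$-bit deterministic protocol with error $\Theta(1/n)$ you pass to the directional sensitivity $\delta_f(d)=\Pr_w[f(w)\neq f(w+d)]$, span the low-sensitivity directions into a subspace $V$, show via Markov plus the collision bound $\max_i p_i\ge\sum_i p_i^2$ that each of Alice's message classes is essentially contained in one coset of $V$, conclude $\codim(V)\le c$, and decode by coset-plurality, with the factor $n$ entering through the triangle inequality $\delta_f(v_1+v_2)\le\delta_f(v_1)+\delta_f(v_2)$ over a basis of $V$ drawn from the low-sensitivity set. All of these steps check out (note that $\delta_f(d)=\tfrac12\bigl(1-\sum_S\hat f(S)^2\chi_S(d)\bigr)$, so your $V^\perp$ is implicitly a subspace on which $f$ is Fourier-concentrated and your decoder plays the role of the paper's sign trick, but you never need the Fourier expansion). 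What the paper's heavier machinery buys is the per-dimension characterization that is then reused for the applications in Section~\ref{sec:applications} and the tightness discussion in Appendix~\ref{app:tightness}; what your argument buys is a shorter, self-contained proof of the comparison itself. One small correction: the additive constant in $\codim(V)\le c+O(1)$ cannot literally be ``absorbed into $\Theta(1/n)$,'' but it vanishes for free by integrality --- your counting gives $2^{n-c}\le \alpha\,|V|$ for a slack factor $\alpha=\bigl((1-2\epsilon/t)(1-t/\tau)\bigr)^{-1}$, and choosing, say, $t=\tau/4$ and the constant in $\epsilon$ small enough that $\epsilon\le\tau/32$ makes $\alpha<2$, whence $\codim(V)\le c$ exactly since both sides are powers of two.
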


A deterministic analog of Definition~\ref{def:rand-f2-sketch} requires that $f(x) = g(\chi_{\alpha_1}(x),\chi_{\alpha_2}(x),\ldots, \chi_{\alpha_k}(x))$ for a fixed choice of $\alpha_1, \dots, \alpha_k \in \ftwo^n$.
The smallest value of $k$ which satisfies this definition is known to be equal to the Fourier dimension of $f$ denoted as $dim(f)$. It corresponds to the smallest dimension of a linear subspace of $\ftwo^n$ that contains the entire spectrum of $f$ (see Section~\ref{sec:fourier} for a formal definition).
In order to keep the notation uniform we also denote it as $\dl(f)$.
Most importantly, as shown in~\cite{MO09} an analog of Conjecture~\ref{conj:main} holds without any loss in the deterministic case, i.e. $\dc(\fplus{}) = dim(f) = \dl(f)$, where $\dc$ denotes the deterministic one-way communication complexity.
This striking fact is one of the reasons why we suggest Conjecture~\ref{conj:main} as an open problem.

In order to prove Theorem~\ref{thm:linear-sketch-uniform-main} we introduce a notion of an \textit{approximate Fourier dimension} (Definition~\ref{def:approx-fourier-dim}) that extends the definition of exact Fourier dimension to allow that only $1 - \epsilon$ fraction of the total ``energy'' in $f$'s spectrum should be contained in the linear subspace.
The key ingredient in the proof is a structural theorem Theorem~\ref{thm:linear-sketch-uniform} that characterizes both $\distlu{\delta}(f)$ and $\distcu{\delta}(\fplus{})$ in terms of $f$'s approximate Fourier dimension.

\subsection*{Previous work and our results}
Using Theorem~\ref{thm:linear-sketch-uniform} we derive a number of results that confirm Conjecture~\ref{conj:main} for specific classes of functions.

\paragraph{Recursive majority}
For an odd integer $n$ the majority function $Maj_n$ is defined as to be equal $1$ if and only if the Hamming weight of the input is greater than $n/2$. Of particular interest is the recursive majority function $Maj_3^{\circ k}$ that corresponds to $k$-fold composition of $Maj_3$ for $k = \log_3 n$.
This function was introduced by Boppana~\cite{SW86} and serves as an important example of various properties of Boolean functions, most importantly in randomized decision tree complexity ~(\cite{SW86,JKS03,MNSX11,L13,MNSSTX13}) and most recently deterministic parity decision tree complexity~\cite{BTW15}.

In Section~\ref{sec:rec-majority} we show to use Theorem~\ref{thm:linear-sketch-uniform} to obtain the following result:
\begin{theorem}\label{thm:rec-majority}
	For any $\epsilon \in [0,1]$, $\gamma < \frac12 - \epsilon$ and $k = \log_3 n$ it holds that: $$\distcu{\frac1n \left(\frac14 - \eps^2\right)}({\mtk{k}}^+) \ge \epsilon^2 n + 1.$$ 
\end{theorem}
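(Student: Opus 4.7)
The plan is to apply the structural Theorem~\ref{thm:linear-sketch-uniform} -- which characterizes both $\distcu{\delta}(f^+)$ and $\distlu{\delta}(f)$ in terms of the approximate Fourier dimension of $f$ -- and then to lower bound this approximate Fourier dimension for $f=\mtk{k}$ directly from its spectrum. Concretely, it will suffice to show that for every linear subspace $V\subseteq\ftwo^n$ of dimension $d\le \epsilon^2 n$, a controlled amount of Fourier mass of $\mtk{k}$ lies outside $V$.

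The key Fourier fact is the rigid spectrum of recursive majority: in the $\pm 1$ encoding $\mt(x_1,x_2,x_3)=\tfrac12(x_1+x_2+x_3)-\tfrac12 x_1 x_2 x_3$, so every non-zero Fourier coefficient of $\mt$ has magnitude exactly $1/2$. By the standard composition formula for Fourier coefficients of composed functions, every non-zero Fourier coefficient of $\mtk{k}$ has magnitude $(1/2)^k$ and therefore squared weight $1/4^k=1/n^{\log_3 4}$. In particular each of the $n$ singleton characters $\chi_{\{x_\ell\}}$ (one per leaf of the recursive tree, obtained by making the ``descend into the single subtree containing $\ell$'' choice at every internal node along the root-to-leaf path) carries squared Fourier weight exactly $1/4^k$. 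Since the singletons form the standard basis of $\ftwo^n$ and are therefore linearly independent, any subspace $V$ of dimension $d$ contains at most $d$ of them, giving
\[
\sum_{S\notin V}\widehat{\mtk{k}}(S)^2 \;\ge\; \frac{n-d}{4^k}.
\]

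To convert this Fourier-mass bound into an error estimate for the optimal postprocessor of a deterministic sketch with row-span $V$ I would use the identity $\mathrm{err}(V)\ge \mu_f - \sum_{S\in V}\hat f(S)^2$, derived from $\min(p,1-p)\ge p(1-p)$ applied to the conditional bias $p(y)=\Pr[f=1\mid Mx=y]$, combined with Parseval. For the balanced function $\mtk{k}$ (so $\mu_f=1/2$ and $\hat f(\emptyset)^2=1/4$ in the $\{0,1\}$-encoding) this simplifies to $\mathrm{err}(V)\ge (1-W_V^{\pm,*})/4 \ge (n-d)/(4\cdot 4^k)$, where $W_V^{\pm,*}=\sum_{S\in V,\,S\neq\emptyset}\hat F(S)^2$ for $F=(-1)^f$. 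Plugging $d\le \epsilon^2 n$ and using $n=3^k$ so that $n^2/4^k=(9/4)^k\ge 9/4$ for $k\ge 1$, a short algebraic manipulation gives $\mathrm{err}(V)\ge (1/4-\epsilon^2)/n$; averaging over the sketch distribution preserves this, and the structural theorem then upgrades the resulting sketching lower bound into the claimed lower bound $\distcu{(1/4-\epsilon^2)/n}(\mtk{k}^+)\ge\epsilon^2 n+1$. The main obstacle is tracking constants sharply enough to obtain the \emph{linear}-in-$n$ dimension bound: this forces the use of the tight estimate $(1-W_V)/4$ rather than the weaker Cauchy--Schwarz bound $(1-\sqrt{W_V})/2$, and requires invoking the structural theorem at the refined error scale $\Theta(1/n)$ rather than at constant error.
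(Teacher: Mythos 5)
Your proposal establishes, at best, a \emph{sketching} lower bound, but the statement to prove is a lower bound on the one-way communication complexity $\distcu{\frac1n(\frac14-\eps^2)}({\mtk{k}}^+)$. Your inequality $\mathrm{err}(V)\ge(1-W_V)/4$ bounds the error of the best postprocessor of a deterministic sketch with row span $V$, and averaging over the sketch distribution indeed gives $\distlu{\gamma}(\mtk{k})\ge\eps^2n+1$ for suitable $\gamma$ (this part is a fine, slightly sharper variant of Part~2 of Theorem~\ref{thm:linear-sketch-uniform}). But since $\distcu{\delta}(\fplus{})\le\distlu{\delta}(f)$, a sketching lower bound implies nothing about one-way protocols; the passage from sketches to arbitrary one-way protocols is precisely the open Conjecture~\ref{conj:main}. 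Theorem~\ref{thm:linear-sketch-uniform} does not ``upgrade'' a sketching bound into a communication bound: Part~3 and Corollary~\ref{cor:linear-sketch-uniform-lb} derive the communication lower bound from a Fourier-concentration statement of the form $\epsu{d}{\mtk{k}}\le\theta$ (no $d$-dimensional subspace captures more than $\theta$ of the weight), via the rectangle argument of Lemmas~\ref{lem:linear-sketch-concentration-lb} and~\ref{lem:rectangle-error}, and the error threshold it delivers is $\frac{1-\theta}{4(n-d)}$. To reach the claimed $\Theta(1/n)$ threshold you therefore need $\theta$ bounded away from $1$ by a constant.

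This is where your Fourier estimate is quantitatively far too weak. (As an aside, the claim that \emph{every} nonzero coefficient of $\mtk{k}$ has magnitude $2^{-k}$ is false --- already for $k=2$ there are coefficients of magnitude $1/4$ and $1/16$ --- though you only use the singletons, which do have magnitude $2^{-k}$.) Counting only the $n$ singleton characters shows that the mass outside any $d$-dimensional subspace is at least $\frac{n-d}{4^k}\ge(1-\eps^2)(3/4)^k=\Theta\bigl(n^{-\log_3(4/3)}\bigr)$, i.e.\ only $\epsu{d}{\mtk{k}}\le 1-o(1)$. Fed into Corollary~\ref{cor:linear-sketch-uniform-lb}, this yields a communication lower bound only at error threshold about $\frac{1}{4\cdot4^k}=\Theta\bigl(n^{-\log_3 4}\bigr)=o(1/n)$, a strictly weaker statement than the theorem (it only rules out protocols with much smaller error). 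Your numerics appear to work only because you compare this tiny quantity against the target $\frac1n(\frac14-\eps^2)$ on the sketching side, where the comparison goes the harmless way. The missing ingredient is exactly the hard core of the paper's proof, Theorem~\ref{thm:rand-sketch-rec-majority}: $\epsu{d}{\mtk{k}}\le 4d/n$ for \emph{arbitrary} subspaces, proved via the standard-subspace bound (Lemma~\ref{lem:recursive-majority-standard-subspace}), the domination lemma (Lemma~\ref{lem:standard-subspace-domination}), and monotonicity of the spectrum on odd sets (Proposition~\ref{prop:fourier-monotonicity-rec-majority}). A singleton count cannot exclude the possibility that some $\eps^2n$-dimensional subspace captures a $1-o(1)$ fraction of the Fourier weight, and excluding that is what the theorem requires.
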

In particular, this confirms Conjecture~\ref{conj:main} for $Maj_3^{\circ k}$ with at most logarithmic gap as for constant $\epsilon$ we get $\distcu{\Theta(\frac1n)}({\mtk{k}}^+) = \Omega(n)$.
By Yao's principle $\rc{\Theta(\frac1n)}({\mtk{k}}^+) = \Omega(n)$.
Using standard error reduction ~\cite{KN97} for randomized communication this implies that $\rc{\delta}({\mtk{k}}^+) = \tilde \Omega(n)$ for constant $\delta < 1/2$ almost matching the trivial upper bound.

\paragraph{Address function and Fourier sparsity}
The number $s$ of non-zero Fourier coefficients of $f$ (known as Fourier sparsity) is one of the key quantities in the analysis of Boolean functions. It also plays an important role in the recent work on log-rank conjecture for XOR-functions~\cite{TWXZ13,STV14}. 
A remarkable recent result by Sanyal~\cite{S15} shows that for Boolean functions $dim(f) = O(\sqrt{s} \log s)$, namely all non-zero Fourier coefficients are contained in a subspace of a polynomially smaller dimension.
This bound is almost tight as the \textit{address function} (see Section~\ref{sec:address} for a definition) exhibits a quadratic gap.
A direct implication of Sanyal's result is a deterministic $\ftwo$-sketching upper bound of $O(\sqrt{s}\log s)$ for any $f$ with Fourier sparsity $s$.
As we show in Section~\ref{sec:address} this dependence on sparsity can't be improved even if randomization is allowed.

\paragraph{Symmetric functions}
A function $f$ is symmetric if it only depends on the Hamming weight of its input.
In Section~\ref{sec:symmetric} we show that Conjecture~\ref{conj:main} holds (approximately) for symmetric functions which are not too close to a constant function or the parity function $\sum_i x_i$ where the sum is taken over $\ftwo$.

\paragraph{Applications to streaming}
In the turnstile streaming model of computation an vector $x$ of dimension $n$ is updated through a sequence of additive updates applied to its coordinates and the goal of the algorithm is to be able to output $f(x)$ at any point during the stream while using space that is sublinear in $n$.
In the real-valued case we have either $x \in [0,m]^n$ or $x \in [-m,m]^n$ for some universal upper bound $m$ and updates can be increments or decrements to $x$'s coordinates of arbitrary magnitude.

For $x \in \ftwo^n$ additive updates have a particularly simple form as they always flip the corresponding coordinate of $x$.
As we show in Section~\ref{sec:adversarial-streaming} it is easy to see based on the recent work of~\cite{G08,LNW14,AHLW16} that in the adversarial streaming setting the space complexity of turnstile streaming algorithms over $\ftwo$ is determined by the $\ftwo$-sketch complexity of the function of interest.
However, this proof technique only works for very long streams which are unrealistic in practice -- the length of the adversarial stream has to be triply exponential in $n$ in order to enforce linear behavior.
Large stream length requirement is inherent in the proof structure in this line of work and while one might expect to improve triply exponential dependence on $n$ at least an exponential dependence appears necessary, which is a major limitation of this approach.

As we show in Section~\ref{sec:random-streaming} it follows directly from our Theorem~\ref{thm:linear-sketch-uniform-main} that turnstile streaming algorithms that achieve low error probability under random $\ftwo$ updates might as well be $\ftwo$-sketches.
For two natural choices of the random update model short streams of length either $O(n)$ or $O(n \log n)$ suffice for our reduction.
We stress that our lower bounds are also stronger than the worst-case adversarial lower bounds as they hold under an average-case scenario.
Furthermore, our Conjecture~\ref{conj:main} would imply that space optimal turnstile streaming algorithms over $\ftwo$ have to be linear sketches for adversarial streams of length only $2n$.

\paragraph{Linear Threshold Functions}

Linear threshold functions (LTFs) are one of the most studied classes of Boolean functions as they play a central role in circuit complexity, learning theory and machine learning (See Chapter 5 in ~\cite{OD14} for a comprehensive introduction to properties of LTFs).
Such functions are parameterized by two parameters $\theta$ and $m$ known as threshold and margin respectively (See Definition~\ref{def:ltf} for a formal definition).
We design an $\ftwo$-sketch for LTFs with complexity $O(\theta /m \log (\theta/m))$.
By applying the sketch in the one-way communication setting this fully resolves an open problem posed in~\cite{MO09}.
Our work shows that dependence on $n$ is not necessary which is an improvement over previously best known protocol due to~\cite{LZ13} which achieves communication $O(\theta/m \log n)$.
Our communication bound is optimal due to~\cite{DKS12}. See Section~\ref{sec:ltf} for details.

\paragraph{Other previous work}
Closely related to ours is work on communication protocols for XOR-functions started in~\cite{SZ08,MO09}.
In particular~\cite{MO09} presents two basic one-way communication protocols based on random parities.
First one, stated as Fact~\ref{prop:l0-bound} generalizes the classic protocol for equality.
Second one uses the result of Grolmusz~\cite{G97} and implies that $\ell_1$-sampling of Fourier characters gives a randomized $\ftwo$-sketch of size $O(\|\hat f\|^2_1)$ (for constant error).
Another line of work that is closely related to ours is the study of the two-player simultaneous message passing model (SMP). This model can also allow to prove lower bounds on $\ftwo$-sketching complexity.
However, in the context of our work there is no substantial difference as for product distributions the two models are essentially equivalent.
Recent results in the SMP model include~\cite{MO09,LLZ11,LZ13}.

While decision tree literature is not directly relevant to us since our model doesn't allow adaptivity we remark that there has been interest recently in the study of (adaptive) deterministic parity decision trees~\cite{BTW15} and non-adaptive deterministic parity decision trees~\cite{STV14,S15}.
As mentioned above, our model can be interpreted as non-adaptive randomized parity decision trees and to the best of our knowledge it hasn't been studied explicitly before.
Another related model is that of \textit{parity kill numbers}. In this model a composition theorem has recently been shown by~\cite{OWZST14} but the key difference is again adaptivity.
\todo{cite all these recent lifting theorems?}

\paragraph{Organization}
The rest of this paper is organized as follows.
In Section~\ref{sec:prelims} we introduce the required background from communication complexity and Fourier analysis of Boolean functions.
In Section~\ref{sec:uniform} we prove Theorem~\ref{thm:linear-sketch-uniform-main}.
In Section~\ref{sec:applications} we give applications of this theorem for recursive majority (Theorem~\ref{thm:rec-majority}), address function and symmetric functions.
In Section~\ref{sec:streaming} we describe applications to streaming.
In Section~\ref{sec:ltf} we describe our $\ftwo$-sketching protocol for LTFs.
In Section~\ref{sec:one-bit-lb} we show a lower bound for one-bit protocols making progress towards resolving Conjecture~\ref{conj:main}.

In Appendix~\ref{app:deterministic} we give some basic results about deterministic $\ftwo$-sketching (or Fourier dimension) of composition and convolution of functions. We also present a basic lower bound argument based on affine dispersers.
In Appendix~\ref{app:randomized} we give some basic results about randomized $\ftwo$-sketching including a lower bound based on extractors and a classic protocol based on random parities which we use as a building block in our sketch for LTFs. We also present evidence for why an analog of Theorem~\ref{thm:linear-sketch-uniform} doesn't hold for arbitrary distributions.
In Appendix~\ref{app:tightness} we argue that the parameters of Theorem~\ref{thm:linear-sketch-uniform} can't be substantially improved.

\section{Preliminaries}\label{sec:prelims}

For an integer $n$ we use notation $[n] = \{1, \dots, n\}$. For integers $n \le m$ we use notation $[n,m] = \{n, \dots, m\}$.
For an arbitrary domain $\mathcal D$ we denote the uniform distribution over this domain as $U(\mathcal D)$.
For a vector $x$ and $p \ge 1$ we denote the $p$-norm of $x$ as $\|x\|_p$ and reserve the notation $\|x\|_0$ for the Hamming weight.

\subsection{Communication complexity}\label{sec:communication-complexity}
Consider a function $f \colon \ftwo^n \times \ftwo^n \to \ftwo$ and a distribution $\mu$ over $\ftwo^n \times \ftwo^n$.
The \textit{one-way distributional complexity} of $f$ with respect to $\mu$, denoted as $\distc{\delta}{\mu}(f)$ is the smallest communication cost of a one-way deterministic protocol that outputs $f(x,y)$ with probability at least $1 - \delta$ over the inputs $(x,y)$ drawn from the distribution $\mu$.
The \textit{one-way distributional complexity} of $f$ denoted as $\distcm{\delta}(f)$ is defined as $\distcm{\delta}(f) = \sup_\mu \distc{\delta}{\mu}(f)$.
By Yao's minimax theorem~\cite{Y83} it follows that $\rc{\delta}(f) = \distcm{\delta}(f)$.
\textit{One-way communication complexity over product distributions} is defined as $\distc{\delta}{\times}(f) = \sup_{\mu = \mu_x \times \mu_y} \distc{\delta}{\mu}(f)$ where $\mu_x$ and $\mu_y$ are distributions over $\ftwo^n$.

With every two-party function $f \colon \ftwo^n \times \ftwo^n$ we associate with it the \textit{communication matrix} $M^f \in \ftwo^{2^n \times 2^n}$ with entries $M^f_{x,y} = f(x,y)$.
We say that a deterministic protocol $M(x)$ with length $t$ of the message that Alice sends to Bob partitions the rows of this matrix into $2^t$ \textit{combinatorial rectangles} where each rectangle contains all rows of $M^f$ corresponding to the same fixed message $y \in \{0,1\}^t$.

\subsection{Fourier analysis}\label{sec:fourier}
We consider functions from $\ftwo^n$ to 
$\R$\footnote{
In all Fourier-analytic arguments Boolean functions are treated as functions of the form $f : \ftwo^n \to \oo$ where $0$ is mapped to $1$ and $1$ is mapped to $-1$. Otherwise we use these two notations interchangeably.}.
For any fixed $n \ge 1$, the space of these functions forms an inner product space
with the inner product
$\left<f, g\right> = \E_{x \in \ftwo^n}[ f(x) g(x) ] = \frac1{2^n} \sum_{x \in \ftwo^n} f(x)g(x)$.
The $\ell_2$ norm of $f : \ftwo^n \to \R$ is
$\| f \|_2 = \sqrt{ \left< f, f \right>} = \sqrt{\E_x[ f(x)^2 ]}$
and the $\ell_2$ distance between two functions $f, g : \ftwo^n \to \R$ is 
the $\ell_2$ norm of the function $f - g$.
In other words, $\|f - g \|_2 = \sqrt{\left< f-g, f-g \right>} 
= \frac1{\sqrt{|\ftwo^n|}} \sqrt{\sum_{x \in \ftwo^n} (f(x) - g(x))^2}$. \todo{Do we need the distance?}

%We let {\em Hamming distance} of a pair of functions $\phi,\psi$ with a common domain $\mathcal{D}$ as $\distHam(\phi,\psi)=\Pr_{x \in \mathcal{D}}[\phi(x) \neq \psi(x)]$. 

For $x,y\in \ftwo^n$ we denote the inner product as $x \cdot y = \sum_{i=1}^n x_i y_i$.
For $\alpha \in \ftwo^n$, the \emph{character} 
$\chi_\alpha : \ftwo^n \to \oo$ is the function defined by
$
\chi_\alpha(x) = (-1)^{\alpha \cdot x}.
$
Characters form an orthonormal basis as $\langle \chi_\alpha, \chi_\beta \rangle = \delta_{\alpha\beta}$ where $\delta$ is the Kronecker symbol.
The \emph{Fourier coefficient} of $f : \ftwo^n \to \R$ corresponding to $\alpha$ is
$
\hat{f}(\alpha) = \E_x[ f(x) \chi_\alpha(x)].
$
The \emph{Fourier transform} of $f$ is the function $\hat{f} : \ftwo^n \to \R$
that returns the value of each Fourier coefficient of $f$. 
We use notation $Spec(f) = \{\alpha \in \ftwo^n : \hat f(\alpha) \neq 0\}$ to denote the set of all non-zero Fourier coefficients of $f$.

The set of Fourier transforms of functions mapping $\ftwo^n \to \R$ forms an inner
product space with inner product
$
\left< \hat{f}, \hat{g} \right> = \sum_{\alpha \in \ftwo^n} \hat{f}(\alpha) \hat{g}(\alpha).
$
The corresponding $\ell_2$ norm is
$
\| \hat{f} \|_2 = \sqrt{\left< \hat{f}, \hat{f}\right>} = 
\sqrt{\sum_{\alpha \in \ftwo^n} \hat{f}(\alpha)^2}.
$
Note that the inner product and $\ell_2$ norm are weighted differently for a function $f : \ftwo^n \to \R$ and its Fourier transform $\hat{f} : \ftwo^n \to \R$.

\begin{fact}[Parseval's identity]
	For any $f : \ftwo^n \to \R$ it holds that
	$
	\|f \|_2 = \| \hat{f} \|_2 
	= \sqrt{ \sum_{\alpha \in \ftwo^n} \hat{f}(\alpha)^2 }.
	$
	Moreover, if $f : \ftwo^n \to \oo$ then $\|f\|_2 = \|\hat f\|_2 = 1$.
\end{fact}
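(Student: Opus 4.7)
The plan is to derive Parseval's identity directly from the orthonormality of the character basis, which the excerpt has already established via $\langle \chi_\alpha, \chi_\beta \rangle = \delta_{\alpha\beta}$. First I would invoke Fourier inversion to write $f = \sum_{\alpha \in \ftwo^n} \hat f(\alpha) \chi_\alpha$; this inversion formula is the standard consequence of the characters forming an orthonormal basis, and I would justify it (if needed) by noting that the function $\sum_\alpha \hat f(\alpha) \chi_\alpha$ has the same inner product with every $\chi_\beta$ as $f$ does, namely $\hat f(\beta)$, and therefore coincides with $f$ since the characters span the space of real-valued functions on $\ftwo^n$.

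Next I would compute $\|f\|_2^2 = \langle f, f \rangle$ by substituting the two Fourier expansions and using bilinearity of the inner product, so that
\[
\|f\|_2^2 = \Big\langle \sum_{\alpha} \hat f(\alpha) \chi_\alpha,\ \sum_{\beta} \hat f(\beta) \chi_\beta \Big\rangle = \sum_{\alpha,\beta} \hat f(\alpha)\hat f(\beta)\, \langle \chi_\alpha, \chi_\beta \rangle.
\]
Applying orthonormality collapses the double sum to the diagonal, yielding $\|f\|_2^2 = \sum_\alpha \hat f(\alpha)^2 = \|\hat f\|_2^2$, which gives the first claimed equality after taking square roots.

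For the moreover clause, when $f : \ftwo^n \to \oo$ every value satisfies $f(x)^2 = 1$, so $\|f\|_2^2 = \E_x[f(x)^2] = 1$, and the already-proved identity forces $\|\hat f\|_2 = 1$ as well. I do not anticipate a genuine obstacle here: the only step worth a sentence of care is the Fourier inversion formula, but that is immediate from the two pieces the excerpt already gives (orthonormality of characters and the definition $\hat f(\alpha) = \langle f, \chi_\alpha \rangle$), since these together imply both that the $\chi_\alpha$ form a basis of the $2^n$-dimensional inner product space and that the coordinates of $f$ in this basis are exactly the Fourier coefficients.
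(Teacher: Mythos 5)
Your proof is correct: the paper states Parseval's identity as a standard fact without proof, and your argument (Fourier inversion from orthonormality of the characters, bilinearity to collapse the double sum to the diagonal, and $f(x)^2 = 1$ for the $\oo$-valued case) is exactly the standard derivation one would expect here. No gaps.
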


We use notation $A \le \ftwo^n$ to denote the fact that $A$ is a linear subspace of $\ftwo^n$.
\begin{definition}[Fourier dimension]
The \emph{Fourier dimension} of $f \colon \ftwo^n \to \oo$ denoted as $dim(f)$ is the smallest integer $k$ such that there exists $A \le \ftwo^n$ of dimension $k$ for which $Spec(f) \subseteq A$.
\end{definition}
We say that $A \le \mathbb F_2^n$ is a \textit{standard subspace} if it has a basis $v_1, \dots, v_d$ where each $v_i$ has Hamming weight equal to $1$.
An \textit{orthogonal subspace} $A^{\perp}$ is defined as:
\begin{align*}
A^\perp = \{\gamma \in \ftwo^n : \forall x\in A \quad \gamma \cdot x = 0\}.
\end{align*}
An \textit{affine subspace} (or coset) of $\ftwo^n$ of the form $A = H + a$ for some $H \le \ftwo^n$ and $a \in \ftwo^n$ is defined as:
\begin{align*}
A = \{\gamma \in \ftwo^n : \forall x\in H^\perp \quad \gamma \cdot x = a \cdot x\}.
\end{align*}

We now introduce notation for restrictions of functions to affine subspaces.
\begin{definition}
Let $f : \ftwo^n \to \R$ and $z \in \ftwo^n$. We define $\fplus{z}: \ftwo^n \to \mathbb R$ as $\fplus{z}(x) = f(x + z)$.
\end{definition}
\begin{fact}\label{fact:fourier-shift}
Fourier coefficients of $\fplus{z}$ are given as $\widehat{\fplus{z}}(\gamma) = (-1)^{\gamma \cdot z} \hat f(\gamma)$ and hence:
\begin{align*}
\fplus{z}  = \sum_{S \in \ftwo^n} \hat f(S) \chi_S(z) \chi_S.
\end{align*}
\end{fact}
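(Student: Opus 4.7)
The plan is to prove both formulas by direct computation from the definitions, exploiting the multiplicativity of characters. The first identity for Fourier coefficients of $\fplus{z}$ does all the work, and the second identity follows from Fourier inversion.

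First I would start from the definition of the Fourier coefficient,
\begin{align*}
\widehat{\fplus{z}}(\gamma) = \E_x\left[\fplus{z}(x)\chi_\gamma(x)\right] = \E_x\left[f(x+z)\chi_\gamma(x)\right],
\end{align*}
and apply the change of variable $y = x + z$ (which is a bijection on $\ftwo^n$ since $-z = z$ over $\ftwo$). This yields
\begin{align*}
\widehat{\fplus{z}}(\gamma) = \E_y\left[f(y)\chi_\gamma(y+z)\right].
\end{align*}
Next I would use the multiplicativity of characters, namely $\chi_\gamma(y+z) = (-1)^{\gamma\cdot(y+z)} = (-1)^{\gamma\cdot y}(-1)^{\gamma \cdot z} = \chi_\gamma(y)\chi_\gamma(z)$, to pull the factor depending on $z$ out of the expectation:
\begin{align*}
\widehat{\fplus{z}}(\gamma) = \chi_\gamma(z)\,\E_y\left[f(y)\chi_\gamma(y)\right] = \chi_\gamma(z)\,\hat f(\gamma) = (-1)^{\gamma \cdot z}\hat f(\gamma).
\end{align*}
This gives the first equation.

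For the second equation, I would simply invoke Fourier inversion on $\fplus{z}$, written in terms of the characters $\chi_S$ as a basis for functions $\ftwo^n \to \R$. Substituting the formula just derived for $\widehat{\fplus{z}}(S)$ gives
\begin{align*}
\fplus{z} = \sum_{S \in \ftwo^n} \widehat{\fplus{z}}(S)\,\chi_S = \sum_{S \in \ftwo^n} \hat f(S)\,\chi_S(z)\,\chi_S,
\end{align*}
which is the claimed identity.

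There is no real obstacle here — this is a routine computation and the only thing to keep track of is that arithmetic is done over $\ftwo$, so $-z = z$ and the substitution $y = x+z$ is a bijection. The key algebraic ingredient is simply that $\alpha \mapsto \chi_\alpha$ and $z \mapsto \chi_\alpha(z)$ are both multiplicative, which turns the translation $x \mapsto x+z$ into multiplication of each Fourier coefficient by the scalar $(-1)^{\gamma \cdot z}$.
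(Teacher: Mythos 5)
Your proof is correct: the change of variables $y = x+z$ together with multiplicativity of characters gives the coefficient formula, and Fourier inversion gives the expansion. The paper states this as a standard fact without proof, and your argument is exactly the canonical computation it implicitly relies on.
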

\begin{definition}[Coset restriction]
For $f : \ftwo^n \to \R, z \in \ftwo^n$ and $ H \le \ftwo^n$ we write $\fplus{z}_H \colon H \to \R$ for the restriction of $f$ to $H + z$.
\end{definition}\todo{Do we need this? Make sure we use it if so.}

\begin{definition} [Convolution]
	For two functions $f,g \colon \ftwo^n \to \mathbb R$ their convolution $(f * g) \colon \ftwo^n\rightarrow \mathbb R$ is defined as $(f * g)(x) = \mathbb E_{y \sim U(\ftwo^n)} \left[f(x) g(x + y)\right]$. 
\end{definition}
For $S \in \ftwo^n$ the corresponding Fourier coefficient of convolution is given as $\widehat {f * g}(S) = \hat f(S) \hat g(S)$.

%\section{Randomized Linear Sketching over G$\mathbb F_2$}

\section{$\mathbb F_2$-sketching over the uniform distribution}\label{sec:uniform}
We use the following definition of Fourier concentration that plays an important role in learning theory~\cite{KM93}.
\begin{definition}[Fourier concentration]
	The spectrum of a function $f:\ftwo^n \to \oo$ is $\eps$-concentrated on a collection of Fourier coefficients $Z \subseteq \ftwo^n$ if $\sum_{S \in Z} \hat f^2(S) \ge \eps.$
\end{definition}

For a function $f \colon \ftwo^n \to \oo$ and a parameter $\epsilon>0$ we introduce a notion of \emph{approximate Fourier dimension} as the smallest integer for which $f$ is $\epsilon$-concentrated on some linear subspace of dimension $d$.\todo{propagate off by one shift everywhere}

\begin{definition}[Approximate Fourier dimension]\label{def:approx-fourier-dim}
	Let $\mathcal A_k$ be the set of all linear subspaces of $\mathbb F_2^n$ of dimension $k$.
	For $f \colon \ftwo^n \to \oo$ and $\epsilon > 0$ the approximate Fourier dimension $dim_{\epsilon}(f)$ is defined as:
	$$dim_\epsilon(f) = \min_k \left\{ \exists A \in \mathcal A_k \colon \sum_{S \in A} \hat f(S)^2 \ge \epsilon\right\}.$$
\end{definition}

\begin{definition}[Approximate Fourier dimension gap]
	For $f \colon \ftwo^n \to \oo$ and $1 \le d \le n$ we define:
	\begin{align*}
%	\epsl{f} = \min_\eps \left\{dim_\eps(f) \ge d\right\}, \quad\quad\quad
	\epsu{d}{f} = \max_\eps \left\{dim_\eps(f) = d\right\}, \quad\quad\quad
	\dgap{d}{f} = \epsu{d}{f} - \epsu{d-1}{f},
	\end{align*}
	where we refer to $\dgap{d}{f}$ as the \em{approximate Fourier dimension gap} of dimension $d$. 
\end{definition}

The following theorem shows that (up to some slack in the dependence on the probability of error) the one-way communication complexity under the uniform distribution matches the linear sketch complexity. We note that the theorem can be applied to all possible values of $d$ and show how to pick specific values of $d$ of interest in Corollary~\ref{cor:linear-sketch-uniform}.
We illustrate tightness of Part 3 of this theorem in Appendix~\ref{app:tightness}.
We also note that the lower bounds given by this theorem are stronger than the basic extractor lower bound given in Appendix~\ref{sec:extractor}. See Remark~\ref{rem:extractor-vs-concentration} for further discussion.

\begin{theorem}\label{thm:linear-sketch-uniform}
	For any $f \colon \ftwo^n \to \oo$, $1 \le d \le n$ and $\epsilon_1 = \epsu{d}{f}$, $ \gamma < \frac{1 - \sqrt{\epsilon_1}}{2}$, $\delta = \dgap{d}{f} / 4$:
	\begin{align*}
1.\quad \distcu{(1 - \epsilon_1)/2}(\fplus{}) \le \distlu{(1 - \epsilon_1)/2}(f) \le d, \quad\quad\quad
2.\quad  \distlu{\gamma}(f) \ge d + 1, \quad\quad\quad
3.\quad  \distcu{\delta}(\fplus{}) \ge d.
	\end{align*}
\end{theorem}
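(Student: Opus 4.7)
The plan is to use a shared Bayes/Fourier template across all three parts, with Part 3 containing the principal technical obstacle. Fix a dimension-$d$ subspace $A\le\ftwo^n$ with $\sum_{S\in A}\hat f(S)^2=\epsilon_1$, and set
$$f_A(x)\;:=\;\sum_{S\in A}\hat f(S)\,\chi_S(x)\;=\;\E_{z\in A^\perp}[f(x+z)],$$
which is well-defined as a function of $Mx$ for any $M$ whose row-span contains $A$, and satisfies $|f_A|\le 1$ as a conditional expectation of a $\pm 1$-valued function.

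For Part 1, use a basis of $A$ as the rows of the sketching matrix $M$ and post-process via $g(Mx)=\sgn f_A(x)$. The per-coset Bayes error is $(1-|f_A(x)|)/2$ and, since $|f_A|\le 1$ implies $|f_A|\ge f_A^2$, averaging in $x$ and applying Parseval give $\Pr_x[g(Mx)\ne f(x)]\le (1-\E_x f_A(x)^2)/2=(1-\epsilon_1)/2$. The communication inequality $\distcu{\cdot}(\fplus{})\le\distlu{\cdot}(f)$ is already observed in the introduction, completing Part 1.

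For Part 2, reverse the calculation. Condition on any randomized sketch matrix $M$ of rank at most $d$ with row-span $A$. The Bayes-optimal post-processing achieves error $(1-\E_x|f_A(x)|)/2$, and Cauchy-Schwarz together with the maximality in Definition~\ref{def:approx-fourier-dim} yields $\E|f_A|\le\sqrt{\E f_A^2}=\sqrt{\sum_{S\in A}\hat f(S)^2}\le\sqrt{\epsilon_1}$. Averaging over $M$ preserves the strict bound $(1-\sqrt{\epsilon_1})/2>\gamma$, so no dimension-$d$ sketch attains error $\gamma$.

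For Part 3, invoke Yao's minimax to reduce to a deterministic $(d-1)$-bit protocol in which Alice partitions $\ftwo^n$ into $\{A_\mu\}_\mu$ and Bob responds with $g_\mu:\ftwo^n\to\oo$. With $p_\mu:=|A_\mu|/2^n$ and $\beta_\mu(S):=\E_{x\in A_\mu}\chi_S(x)$, the best Bob can do is $g_\mu=\sgn f_\mu$ with $f_\mu(y):=\E_{x\in A_\mu}f(x+y)=\sum_S\hat f(S)\beta_\mu(S)\chi_S(y)$, giving protocol correlation $\sum_\mu p_\mu\E_y|f_\mu(y)|$. Applying Cauchy-Schwarz ($\E_y|f_\mu|\le\sqrt{\E_y f_\mu^2}$), then Parseval on each $f_\mu$, and finally Jensen's inequality (concavity of $\sqrt{\cdot}$) to the $\mu$-average yields
$$\mathrm{corr}^2\le\sum_S\hat f(S)^2\,q(S),\qquad q(S):=\sum_\mu p_\mu\,\beta_\mu(S)^2,$$
where $q\ge 0$, $q(0)=1$, $q\le 1$, and $\sum_S q(S)=2^{d-1}$ (Parseval on each $\1_{A_\mu}$). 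The heart of the proof is the key lemma $\sum_S\hat f(S)^2 q(S)\le\epsu{d-1}{f}$, which I would try to establish by showing that $q$ lies in the convex hull of indicators $\1_H$ of $(d-1)$-dimensional subspaces $H\le\ftwo^n$: this is transparent when the partition is a coset decomposition (there $\beta_\mu^2=\1_{V_\mu^\perp}$ with $\codim V_\mu=d-1$), and small-case computations suggest the general case reduces to the affine one by an averaging/refinement argument, but making this rigorous is the principal technical obstacle. Granted the lemma, $\mathrm{corr}^2\le\epsilon_1-4\delta$, and $\epsilon_1\le 1$ forces $\epsilon_1-4\delta\le(1-2\delta)^2$, so protocol error is $\ge(1-\sqrt{\epsilon_1-4\delta})/2\ge\delta$, contradicting any $(d-1)$-bit protocol achieving error strictly less than $\delta$; therefore $\distcu{\delta}(\fplus{})\ge d$.
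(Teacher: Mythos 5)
Your Parts 1 and 2 are correct and essentially the paper's own argument. For Part 1 the paper truncates the spectrum to $A$ and then applies a random-threshold ``sign trick'' to get error $\tfrac12\|f-g\|_2^2\le(1-\epsilon_1)/2$; your Bayes/majority decoder $\sgn f_A$ together with the pointwise inequality $|f_A|\ge f_A^2$ gives exactly the same bound, and the reduction from sketching to one-way communication is the one already noted in the introduction. Your Part 2 is the same second-moment-plus-Jensen computation as the paper's, phrased via the conditional mean on each coset instead of the constant Fourier coefficient of the restriction $f|_{(S,b)}$; both give error at least $(1-\sqrt{\epsilon_1})/2>\gamma$ for every fixed sketch, hence for the randomized sketch as well.

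Part 3, however, has a genuine gap, and it sits exactly where you say it does: the ``key lemma'' $\sum_S\hat f(S)^2 q(S)\le\epsu{d-1}{f}$ for $q(S)=\sum_\mu p_\mu\beta_\mu(S)^2$ is the entire content of the lower bound, and nothing in your proposal establishes it beyond the two special cases (coset partitions, and --- as one can check --- partitions into two parts, where $q=\1_{\{\emptyset\}}+r$ with $r\ge 0$ of total mass $1$ is indeed a convex combination of indicators of one-dimensional subspaces). For general partitions into $2^{d-1}$ parts the constraints you list ($q(\emptyset)=1$, $0\le q\le 1$, $\sum_S q(S)\le 2^{d-1}$) are far from forcing membership in the convex hull of $(d-1)$-dimensional subspace indicators, and no averaging/refinement reduction to the coset case is given. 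Note also that the lemma is not a routine missing step: granted it, your argument yields error at least $(1-\sqrt{\epsu{d-1}{f}})/2$ for every $(d-1)$-bit protocol, i.e.\ that one-way protocols under the uniform distribution can do no better than the Part 2 bound for $(d-1)$-dimensional sketches --- a lossless uniform-distribution version of Conjecture~\ref{conj:main} at that error level, strictly stronger than both Part 3 (whose bound $(\epsilon_1-\epsilon_2)/4$ never exceeds $(1-\sqrt{\epsilon_2})/2$) and Theorem~\ref{thm:linear-sketch-uniform-main}, which loses a factor of $n$ in the error. So either the lemma is hard (or false), and in any case it cannot be waved through.

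The paper avoids any structural claim about arbitrary partitions. It fixes the $d$-dimensional subspace $\mathcal A_d$ on which $f$ is $\epsilon_1$-concentrated, proves the two-point statement (Lemma~\ref{lem:linear-sketch-concentration-lb}) that whenever $x_1,x_2$ are distinguished by $\mathcal A_d$ one has $\Pr_z[f(x_1+z)\ne f(x_2+z)]\ge\epsilon_1-\epsilon_2$, and then lower-bounds the error inside each message class $R_i$ with $|R_i|>2^{n-d}$ by the pairwise-disagreement quantity $\tfrac12\E_{x_1,x_2\sim U(R_i)}\Pr_y[f(x_1+y)\ne f(x_2+y)]$ (Lemma~\ref{lem:rectangle-error}), using that two random points of a large class are distinguished by $\mathcal A_d$ with probability $(r_i-2^{n-d})/r_i$; summing over the at most $2^{d-1}$ classes gives error at least $(\epsilon_1-\epsilon_2)/4=\delta$. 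If you cannot close your key lemma, this is the kind of argument you need to fall back on; if you can close it, you will have proven a strictly stronger statement than Part 3 and should present it as such.
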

\begin{proof}
\textbf{Part 1\footnote{This argument is a refinement of the standard ``sign trick'' from learning theory which approximates a Boolean function by taking a sign of its real-valued approximation under $\ell_2$.}.} 
By the assumptions of the theorem we know that there exists a $d$-dimensional subspace $A \le \ftwo^n$ which satisfies $\sum_{S \in A} \hat f^2(S) \ge \epsilon_1$.
Let $g \colon \ftwo^n \to \R$ be a function defined by its Fourier transform as follows:
\begin{align*}
\hat g(S) = 
\begin{cases}
\hat f(S) \text{, if } S \in A \\
0 \text{, otherwise}.
\end{cases}
\end{align*}
Consider drawing a random variable $\theta$ from the distribution with p.d.f $1 - |\theta|$ over $[-1,1]$.
\begin{proposition}\label{prop:theta-choice}
For all $t$ such that  $-1 \le t \le 1$ and $z \in \oo$ random variable $\theta$ satisfies:
$$\Pr_\theta[sgn(t - \theta) \neq z] \le \frac12 (z - t)^2.$$
\end{proposition}
\begin{proof}
W.l.o.g we can assume $z = 1$ as the case $z = -1$ is symmetric. Then we have:
$$\Pr_\theta[sgn(t - \theta) \neq 1] = \int_t^1 (1 - |\gamma|) d\gamma \le \int_t^1 (1 - \gamma) d\gamma= \frac12 (1 - t)^2.\qedhere$$
\end{proof}

Define a family of functions $g_\theta : \ftwo^n \to \oo$ as $g_\theta(x) = sgn(g(x) - \theta)$.
Then we have:
\begin{align*}
\E_\theta\left[\Pr_{x \sim \ftwo^n}[g_\theta(x) \neq f(x)]\right] &= \E_{x \sim \ftwo^n}\left[\Pr_{\theta}[g_\theta(x) \neq f(x)]\right]  \\
&= \E_{x \sim \ftwo^n}\left[\Pr_{\theta}[sgn(g(x) - \theta) \neq f(x)]\right] \\ 
&\le \E_{x \sim \ftwo^n}\left[\frac12 (f(x) - g(x))^2\right]  \text{(by Proposition~\ref{prop:theta-choice})}\\
&= \frac12 \|f - g\|_2^2.
\end{align*}
Using the definition of $g$ and Parseval we have:
\begin{align*}
\frac12 \|f - g\|_2^2  = \frac12 \|\widehat{f - g}\|_2^2  = \frac12 \|\hat{f} - \hat{g}\|_2^2 = \frac12 \sum_{S \notin A} \hat f^2(S)  \le \frac{1 - \epsilon_1}{2}.
\end{align*}
Thus, there exists a choice of $\theta$ such that $g_\theta$ achieves error at most $\frac{1 - \epsilon_1}{2}$.
Clearly $g_\theta$ can be computed based on the $d$ parities forming a basis for $A$ and hence $\distlu{(1 - \epsilon_1)/2}(f) \le d$.

\paragraph{Part 2.}
Fix any deterministic sketch that uses $d$ functions $\chi_{S_1}, \dots, \chi_{S_d}$ and let $S = (S_1, \dots, S_d)$.
For fixed values of these sketches $b = (b_1, \dots, b_d)$ where $b_i = \chi_{S_i}(x)$ we denote the restriction on the resulting coset as $f|_{(S,b)}$. Using the standard expression for the Fourier coefficients of an affine restriction the constant Fourier coefficient of the restricted function is given as:
\todo{promote this fact to prelims}
\begin{align*}
\widehat{f|_{(S,b)}}(\emptyset) = \sum_{Z \subseteq [d]} (-1)^{\sum_{i \in Z} b_i} \hat f\left(\sum_{i \in Z} S_i\right).
\end{align*}
Thus, we have:
\begin{align*}
\widehat{f|_{(S,b)}}(\emptyset)^2 =  \sum_{Z \subseteq [d]} \hat f^2(\sum_{i \in Z} S_i) + \sum_{Z_1 \neq Z_2 \subseteq [d]} (-1)^{\sum_{i \in Z_1 \Delta Z_2} b_i} \hat f(\sum_{i \in Z_1} S_i) \hat f(\sum_{i \in Z_2} S_i).
\end{align*}
Taking expectation over a uniformly random $b \sim U(\ftwo^d)$ we have:
\begin{align*}
\mathbb E_{b \sim U(\ftwo^d)}\left[\widehat{f|_{(S,b)}}(\emptyset)^2 \right] &= 
\mathbb E_{b \sim U(\ftwo^d)}\left[\sum_{Z \subseteq [d]} \hat f^2\left(\sum_{i \in Z} S_i\right)+ 
\sum_{Z_1 \neq Z_2 \subseteq [d]} (-1)^{\sum_{i \in Z_1 \Delta Z_2} b_i} \hat f\left(\sum_{i \in Z_1} S_i\right) \hat f\left(\sum_{i \in Z_2} S_i\right)\right] \\
&= \sum_{Z \subseteq [d]} \hat f^2\left(\sum_{i \in Z} S_i\right) .
\end{align*}

The latter sum is the sum of squared Fourier coefficients over a linear subspace of dimension $d$ and hence is at most $\epsilon_1$ by the assumption of the theorem.
Using Jensen's inequality: 
$$\mathbb E_{b \sim U(\ftwo^d)} \left[|\widehat{f|_{(S,b)}}(\emptyset)|\right]   \le \sqrt{\mathbb E_{b \sim U(\ftwo^d)}\left[\widehat{f|_{(S,b)}}(\emptyset)^2 \right]} \le \sqrt{\epsilon_1}.$$
For a fixed restriction $(S,b)$ if $|\hat f|_{(S,b)}(\emptyset)| \le \alpha$ then $|Pr[f|_{(S,b)} = 1] - Pr[f|_{(S,b)} = -1]| \le \alpha$ and hence no algorithm can predict the value of the restricted function on this coset with probability greater than $\frac{1 + \alpha}{2}$. 
Thus no algorithm can predict $f|_{(S_1, b_1), \dots, (S_d, b_d)}$ for a uniformly random choice of $(b_1, \dots, b_d)$ and hence also on a uniformly at random chosen $x$ with probability greater than $\frac{1 + \sqrt{\epsilon_1}}{2}$.

\paragraph{Part 3.} Let $\epsilon_2 = \epsu{d-1}{f}$ and recall that $\epsilon_1 = \epsu{d}{f}$.
\begin{definition}
	We say that $\mathcal A \le \ftwo^n$ \textit{distinguishes} $x_1, x_2 \in \ftwo^n$ if $\exists S \in \mathcal A \colon  \chi_{S}(x_1) \neq \chi_{S}(x_2).$
\end{definition}

We first prove the following auxiliary lemma.\todo{be careful with $\eps_1, \eps_2$.}

\begin{lemma}\label{lem:linear-sketch-concentration-lb}
	Fix $\epsilon_1 > \epsilon_2 \ge 0$ and $x_1, x_2 \in \ftwo^n$.
	If there exists a subspace $\mathcal A_d \le \ftwo^n$ of dimension $d$ which distinguishes $x_1$ and $x_2$ such that $f : \ftwo^n \to \oo$ is $\epsilon_1$-concentrated on $\mathcal A_d$ but is not $\epsilon_2$-concentrated on any $d-1$ dimensional linear subspace then: 
	$$\Pr_{z \in U(\ftwo^n)}[\fplus{x_1}(z) \neq \fplus{x_2}(z)] \ge \epsilon_1 - \epsilon_2.$$
\end{lemma}
\begin{proof}
	Note that for a fixed $x \in \ftwo^n$ (by Fact~\ref{fact:fourier-shift}) the Fourier expansion of $\fplus{x}$ can be given as:
	$$\fplus{x}(z) = \sum_{S \in \ftwo^n} \hat f(S) \chi_{S}(z + x) = \sum_{S \in \ftwo^n} \hat f(S) \chi_S(z) \chi_S(x).$$
	Thus we have:
	\begin{align*}
	\Pr_{z \in U(\ftwo^n)}[\fplus{x_1}(z) \neq \fplus{x_2}(z)] &= \frac{1}{2} \left(1 - \langle \fplus{x_1}, \fplus{x_2}\rangle\right)& \\
	& = \frac{1}{2} \left(1 - \left\langle \sum_{S_1 \in \ftwo^n} \hat f(S_1) \chi_{S_1} \chi_{S_1}(x_1), \sum_{S_2 \in \ftwo^n} \hat f(S_2) \chi_{S_2} \chi_{S_2}(x_2)\right\rangle\right)&\\
	& = \frac{1}{2} \left(1 -  \sum_{S \in \ftwo^n} \hat f(S)^2 \chi_{S}(x_1) \chi_{S}(x_2)\right) \text{(by orthogonality of characters)}& 
	\end{align*}
	
	We now analyze the expression $\sum_{S \in \ftwo^n} \hat f(S)^2 \chi_{S}(x_1) \chi_{S}(x_2)$.
	Breaking the sum into two parts we have:
	\begin{align*}
	\sum_{S \in \ftwo^n} \hat f(S)^2 \chi_{S}(x_1) \chi_{S}(x_2) &= \sum_{S \in \mathcal A_d} \hat f(S)^2 \chi_{S}(x_1) \chi_{S}(x_2) + \sum_{S \notin \mathcal A_d} \hat f(S)^2 \chi_{S}(x_1) \chi_{S}(x_2) \\
	& \le \sum_{S \in \mathcal A_d} \hat f(S)^2 \chi_{S}(x_1) \chi_{S}(x_2) + (1 - \epsilon_1).
	\end{align*}
	To give a bound on the first term we will use the fact that $\mathcal A_d$ distinguishes $x_1$ and $x_2$. We will need the following simple fact.
	\begin{proposition}\label{prop:good-basis}
		If $\mathcal A_d$ distinguishes $x_1$ and $x_2$ then there exists a basis $\mathcal S_1, \mathcal S_2, \dots, \mathcal S_d$ in $\mathcal A_d$ such that $\chi_{\mathcal S_1}(x_1) \neq \chi_{\mathcal S_1}(x_2)$ while $\chi_{\mathcal S_i}(x_1) = \chi_{\mathcal S_i}(x_2)$ for all $i \ge 2$.
	\end{proposition}
	\begin{proof}
		Since $\mathcal A_d$ distinguishes $x_1$ and $x_2$ there exists a $\mathcal S \in \mathcal A_d$ such that $\chi_{\mathcal S}(x_1) \neq \chi_{\mathcal S}(x_2)$.
		Fix $\mathcal S_1 = \mathcal S$ and consider an arbitrary basis in $\mathcal A_d$ of the form $(\mathcal S_1, \mathcal T_2, \dots, \mathcal T_d)$.
		For $i \ge 2$ if $\chi_{\mathcal T_i}(x_1) = \chi_{\mathcal T_i}(x_2)$ then we let $\mathcal S_i = \mathcal T_i$. Otherwise, we let $\mathcal S_i = \mathcal T_i + \mathcal S_1$, which preserves the basis and ensures that: 
		$$
		\chi_{\mathcal S_i}(x_1) = \chi_{\mathcal T_i + \mathcal S_1}(x_1) = \chi_{\mathcal T_i}(x_1) \chi_{\mathcal S_1}(x_1) = \chi_{\mathcal T_i}(x_1) \chi_{\mathcal S_1}(x_2) = \chi_{\mathcal S_i}(x_2).\qedhere
		$$
	\end{proof}
	
	Fix the basis $(\mathcal S_1, \mathcal S_2, \dots, \mathcal S_d)$ in $\mathcal A_d$ with the properties given by Proposition~\ref{prop:good-basis}.
	Let $\mathcal A_{d-1} = span(\mathcal S_2, \dots, \mathcal S_d)$ so that for all $S \in \mathcal A_{d-1}$ it holds that $\chi_S(x_1) = \chi_S(x_2)$.
	Then we have:
	\begin{align*}
	\sum_{S \in \mathcal A_d} \hat f(S)^2 \chi_{S}(x_1) \chi_{S}(x_2) & = \sum_{S \in \mathcal A_{d-1}} \hat f(S)^2 \chi_S(x_1) \chi_S(x_2) + \sum_{S \in \mathcal A_{d-1}} \hat f(S + \mathcal S_1)^2 \chi_{S + \mathcal S_1}(x_1) \chi_{S + \mathcal S_1}(x_2)\\
	&= \sum_{S \in \mathcal A_{d-1}} \hat f(S)^2  - \sum_{S \in \mathcal A_{d-1}} \hat f(S + \mathcal S_1)^2 \\ 
	\end{align*}
	The first term in the above summation is at most $\epsilon_2$ since $f$ is not $\epsilon_2$-concentrated on any $(d-1)$-dimensional linear subspace.
	The second is at least $\epsilon_1 - \epsilon_2$ since $f$ is $\epsilon_1$-concentrated on $\mathcal A_d$.
	
	Thus, putting things together we have that 
	\begin{align*}
	\sum_{S \in \ftwo^n} \hat f(S)^2 \chi_{S}(x_1) \chi_{S}(x_2) \le \epsilon_2 - (\epsilon_1 - \epsilon_2) + (1 - \epsilon_1) = 1 - 2 (\epsilon_1 - \epsilon_2). 
	\end{align*}
	This completes that proof showing that $\Pr_{z \in U(\ftwo^n)}[\fplus{x_1}(z) \neq \fplus{x_2}(z)] \ge \epsilon_1 - \epsilon_2$.
\end{proof}

We are now ready to complete the proof of the third part of Theorem~\ref{thm:linear-sketch-uniform}.
	We can always assume that the protocol that Alice uses is deterministic since for randomized protocols one can fix their randomness to obtain the deterministic protocol with the smallest error.
	Fix a $(d - 1)$-bit deterministic protocol that Alice is using to send a message to Bob.
	This protocol partitions the rows of the communication matrix into $t = 2^{d-1}$ rectangles corresponding to different messages. \todo{make this more formal and sync with CC notation}
	 We denote the sizes of these rectangles as $r_1, \dots, r_t$ and the rectangles themselves as $R_1, \dots, R_t \subseteq \ftwo^n$ respectively.
	Let the outcome of the protocol be $P(x,y)$. Then the error is given as:
	\begin{align*}
	\E_{x,y \sim U(\ftwo^n)}\left[\mathbb{1} [P(x,y) \neq f(x + y)]\right] &= \sum_{i = 1}^t \frac{r_i}{2^n} \times \E_{x \sim U(R_i) ,y \sim U(\ftwo^n)}\left[\mathbb{1} [P(x,y) \neq f(x + y)]\right]
	\\
	&\ge  \sum_{i \colon r_i > 2^{n-d}} \frac{r_i}{2^n} \times \E_{x \sim U(R_i) ,y \sim U(\ftwo^n)}\left[\mathbb{1} [P(x,y) \neq f(x + y)]\right],
	\end{align*}
where we only restricted attention to rectangles of size greater than $2^{n - d}$.
Our next lemma shows that in such rectangles the protocol makes a significant error:
	
	\begin{lemma}\label{lem:rectangle-error}
		If $r_i > 2^{n-d}$ then:
		 $$\E_{x \sim U(R_i) ,y \sim U(\ftwo^n)}\left[\mathbb{1} [P(x,y) \neq f(x + y)]\right] \ge \frac12 \frac{r_i - 2^{n - d}}{r_i} (\epsilon_1 - \epsilon_2).$$
	\end{lemma}
	\begin{proof}
		For $y \in \ftwo^n$ let  $p_y(R_i) = \min(\Pr_{x \sim U(R_i)}[f(x + y) = 1], \Pr_{x \sim U(R_i)}[f(x + y) = -1])$.
		We have:
		\begin{align*}
		\E_{x \sim U(R_i) ,y \sim U(\ftwo^n)}\left[\mathbb{1} [P(x,y) \neq f(x + y)]\right] &=  \E_{y \sim U(\ftwo^n)} \E_{x \sim U(R_i)}\left[\mathbb{1} [P(x,y) \neq f(x + y)]\right]\\ 
		&\ge \E_{y \sim U(\ftwo^n)} \left[p_y(R_i)\right] \\
		&\ge \E_{y \sim U(\ftwo^n)} \left[p_y(R_i) (1 - p_y(R_i))\right] \\
		&= \E_{y \sim U(\ftwo^n)} \left[\frac12 \Pr_{x_1, x_2 \sim U(R_i)}\left[f(x_1 + y) \neq f(x_2 + y)\right] \right]\\
		&= \frac12 \E_{x_1, x_2 \sim U(R_i)} \left[ \E_{y \sim U(\ftwo^n)}\left[\mathbb 1 \left[f(x_1 + y) \neq f(x_2 + y)\right]\right] \right]
		\end{align*}
		
		Fix a $d$-dimensional linear subspace $\mathcal A_d$ such that $g$ is $\epsilon_1$-concentrated on $\mathcal A_d$. 
		There are $2^{n-d}$ vectors which have the same inner products with all vectors in $\mathcal A_d$.
		Thus with probability at least $\frac{r_i - 2^{n-d}}{r_i}$ two random vectors $x_1, x_2 \sim U(R_i)$ are distinguished by $\mathcal A_d$.
		Conditioning on this event we have:
		\begin{align*}
		&\frac12 \E_{x_1, x_2 \sim U(R_i)} \left[ \E_{y \sim U(\ftwo^n)}\left[\mathbb 1 \left[f(x_1 + y) \neq f(x_2 + y)\right]\right] \right] \\
		&\ge \frac12  \frac{r_i - 2^{n - d}}{r_i} \E_{y \sim U(\ftwo^n)}\left[\mathbb 1 \left[f(x_1 + y) \neq f(x_2 + y)\right] | \text{$\mathcal A_d$ distinguishes $x_1, x_2$} \right] \\
		& \ge \frac12 \frac{r_i - 2^{n - d}}{r_i} (\epsilon_1 - \epsilon_2),
		\end{align*}
		where the last inequality follows by Lemma~\ref{lem:linear-sketch-concentration-lb}.
	\end{proof}
	
	Using Lemma~\ref{lem:rectangle-error} we have:
	\begin{align*}
	\E_{x,y \sim U(\ftwo^n)}\left[\mathbb{1} [P(x,y) \neq f(x + y)]\right] &\ge \frac{\epsilon_1 - \epsilon_2}{2^{n + 1}} \sum_{i \colon r_i > 2^{n - d}}   \left(r_i - 2^{n - d}\right) \\
	&= \frac{\epsilon_1 - \epsilon_2}{2^{n + 1}} \left(\sum_{i =1}^t   \left(r_i - 2^{n - d}\right) - \sum_{i \colon r_i \le 2^{n - d}}   \left(r_i - 2^{n - d}\right)\right) \\
	& \ge \frac{\epsilon_1 - \epsilon_2}{2^{n + 1}} \left(2^n - 2^{n-1}\right) \\
	& = \frac{\epsilon_1 - \epsilon_2}{4},
	\end{align*}
	where the inequality follows since $\sum_{i = 1}^t r_i = 2^n$, $t = 2^{d-1}$ and all the terms in the second sum are non-positive.
\end{proof}

An important question that arises when applying Theorem~\ref{thm:linear-sketch-uniform} is the choice of the value of $d$.
The following simple corollaries of Theorem~\ref{thm:linear-sketch-uniform} give one particularly simple way of choosing these values for any function in such a way that we obtain a non-trivial lower bound for $O(1/n)$-error.

\begin{corollary}\label{cor:linear-sketch-uniform}
	For any $f \colon \ftwo^n \to \oo$ such that $\hat f(\emptyset) \le \theta$ for some constant $\theta < 1$ there exists an integer $d \ge 1$ such that:
	 \begin{align*}
	 \distcu{\Theta(\frac1n)}(\fplus{}) \ge d \ge \distlu{\frac13}(f)
	 \end{align*}
\end{corollary}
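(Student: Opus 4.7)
The approach is to apply Parts~1 and~3 of Theorem~\ref{thm:linear-sketch-uniform} to a common integer $d \ge 1$ and chain the two conclusions. Part~1 guarantees $\distlu{(1 - \epsu{d}{f})/2}(f) \le d$, so as soon as $\epsu{d}{f} \ge 1/3$ monotonicity of $\distlu{\cdot}(f)$ in the error parameter gives $\distlu{1/3}(f) \le d$. Part~3 gives $\distcu{\dgap{d}{f}/4}(\fplus{}) \ge d$, so to deduce $\distcu{\Theta(1/n)}(\fplus{}) \ge d$ it suffices to secure $\dgap{d}{f} \ge c/n$ for some constant $c = c(\theta) > 0$. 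The entire task thus collapses to producing an integer $d \ge 1$ that simultaneously satisfies $\epsu{d}{f} \ge 1/3$ and $\dgap{d}{f} = \Omega(1/n)$.

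To locate such a $d$, I would examine the nondecreasing ``concentration curve'' $d \mapsto \epsu{d}{f}$, which starts at $\epsu{0}{f} = \hat f(\emptyset)^2 \le \theta^2$ and ends at $\epsu{n}{f} = 1$ by Parseval. Let $d^\star = \min\{d \ge 0 : \epsu{d}{f} \ge 1/3\}$, which is well-defined, and set $d_0 = \max(d^\star, 1) \ge 1$. Telescoping the gaps over $[d_0, n]$ yields
\begin{align*}
\sum_{d = d_0}^{n} \dgap{d}{f} \;=\; \epsu{n}{f} - \epsu{d_0 - 1}{f} \;\ge\; \min\!\bigl(\tfrac{2}{3},\; 1 - \theta^2\bigr) \;=:\; c,
\end{align*}
where the bound is $2/3$ when $d^\star \ge 1$ (since then $\epsu{d^\star - 1}{f} < 1/3$) and $1 - \theta^2$ when $d^\star = 0$ (using $\epsu{0}{f} \le \theta^2$). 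The sum has at most $n$ terms, so pigeonhole produces some $d \in [d_0, n]$ with $\dgap{d}{f} \ge c/n$, and by monotonicity $\epsu{d}{f} \ge \epsu{d_0}{f} \ge 1/3$ automatically. Feeding this $d$ into Parts~1 and~3 of Theorem~\ref{thm:linear-sketch-uniform} closes the chain $\distcu{c/(4n)}(\fplus{}) \ge d \ge \distlu{1/3}(f)$, with hidden constants depending on $\theta$.

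The argument is essentially a one-paragraph averaging calculation and contains no substantive obstacle. The only mild care required is the case split on whether $d^\star = 0$ or $d^\star \ge 1$: in the generic case $d^\star \ge 1$, the threshold $1/3$ alone furnishes an $\Omega(1)$ telescoping sum, whereas in the ``nearly constant'' regime $d^\star = 0$ (which can happen only when $\theta^2 \ge 1/3$), positivity of the sum relies precisely on the standing assumption $\theta < 1$ to guarantee nontrivial Fourier mass outside the empty character.
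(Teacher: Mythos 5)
Your proposal is correct and takes essentially the same route as the paper: both arguments feed a suitable $d$ into Parts 1 and 3 of Theorem~\ref{thm:linear-sketch-uniform} and locate that $d$ by an averaging/pigeonhole argument on the gaps $\dgap{d}{f}$, which are nonnegative and sum to at most $1$ over at most $n$ values of $d$. The only difference is bookkeeping: you telescope above the first dimension where $\epsu{d}{f}\ge 1/3$ (splitting on whether that dimension is $0$), whereas the paper splits on whether the maximum gap $\Delta(f)$ exceeds $(1-\theta)/3$ and averages only in the small-gap case; the conclusions and the dependence of the hidden constants on $\theta$ are the same.
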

\begin{proof}
	%Let $A_d$ be a family of all linear subspaces of dimension $d$ in $\ftwo^n$.
	We have $\epsu{0}{f} < \theta$ and $\epsu{n}{f} = 1$.
	Let $d^*=\argmax_{d = 1}^n \dgap{d}{f}$ and $\Delta(f) = \dgap{d^*}{f}$.
	Consider cases:
	
	\textbf{Case 1.} $\Delta(f) \ge \frac{1 - \theta}{3}$. By Part 3 of Theorem~\ref{thm:linear-sketch-uniform} we have that $\distcu{\frac{1 - \theta}{12n}}(\fplus{}) \ge d^*$.
	Furthermore, $\epsu{d^*}{f} \ge \theta +\epsu{d^*}{f} - \epsu{d^*-1}{f} = \theta + \Delta(f) \ge \frac13 - \frac{2\theta}{3}$.
	By Part 1 of Theorem~\ref{thm:linear-sketch-uniform} we have $\distlu{\frac{1 - \theta}{3}}(f) \le d^*$.
	
	\textbf{Case 2.} $\Delta(f) < \frac{1 - \theta}{3}$.
		In this case there exists $d_1 \ge 1$ such that $\epsu{d_1}{f} \in [\theta_1, \theta_2]$ where $\theta_1 = \theta + \frac{1 - \theta}{3}, \theta_2 = \theta + \frac{2(1 - \theta)}{3}$.
		By averaging there exists $d_2 > d_1$ such that $\dgap{d_2}{f} = \epsu{d_2}{f} - \epsu{d_2 - 1}{f}\ge \frac{1 - \theta_2}{n}  = \Theta(\frac1n)$.
	Applying Part 3 of Theorem~\ref{thm:linear-sketch-uniform} we have that $\distcu{\Theta(\frac1n)}(\fplus{}) \ge d_2$.
	Furthermore, we have $\epsu{d_2}{f} \ge \theta_1$ and hence $\frac{1 - \epsu{d_2}{f}}{2} \le \frac{1 - \theta_1}{2} < \frac{1 - \theta}{3}$. By Part 1 of Theorem~\ref{thm:linear-sketch-uniform} we have $\distlu{\frac{1 - \theta}{3}}(f) \le d_2$.
\end{proof}

The proof of Theorem~\ref{thm:linear-sketch-uniform-main} follows directly from Corollary~\ref{cor:linear-sketch-uniform}.
If $\theta \le \frac13$ then the statement of the theorem holds.
If $\theta \ge \frac13$ then $\epsu{0}{f} \ge \frac13$ so by Part 1 of Theorem~\ref{thm:linear-sketch-uniform} we have $\distlu{\frac13}(f) \le 0$ and the inequality holds trivially.

Furthermore, using the same averaging argument as in the proof of Corollary~\ref{cor:linear-sketch-uniform} we obtain the following generalization of the above corollary that will be useful for our applications.
\begin{corollary}\label{cor:linear-sketch-uniform-lb}
	For any $f \colon \ftwo^n \to \oo$ and $d$ such that $\epsu{d - 1}{f} \le \theta$ it holds that:  
	$$\distcu{\frac{1 - \theta}{4 (n - d)}}(f) \ge d.$$
\end{corollary}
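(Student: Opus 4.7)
The plan is to follow the same averaging recipe that drove Case~2 of the proof of Corollary~\ref{cor:linear-sketch-uniform}, but now with $\theta$ left as a free parameter instead of instantiated at a specific value. The goal is to locate some dimension $d^{\star} \geq d$ at which the approximate Fourier dimension gap $\dgap{d^{\star}}{f}$ is provably large, and then to invoke Part~3 of Theorem~\ref{thm:linear-sketch-uniform} at that $d^{\star}$ to convert this gap into a one-way communication lower bound.

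First I would exploit the telescoping identity
$$\sum_{k=d}^{n} \dgap{k}{f} \;=\; \epsu{n}{f} - \epsu{d-1}{f}.$$
By Parseval the entire Fourier mass of the $\pm 1$-valued $f$ lies in $\ftwo^n$, so $\epsu{n}{f} = 1$; combined with the hypothesis $\epsu{d-1}{f} \leq \theta$, this shows that the right-hand side is at least $1 - \theta$. A pigeonhole argument over the at most $n - d + 1$ gap terms in the sum then yields an index $d^{\star} \in \{d, d+1, \ldots, n\}$ for which
$$\dgap{d^{\star}}{f} \;\geq\; \frac{1 - \theta}{n - d + 1}.$$

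Next I would feed this $d^{\star}$ into Part~3 of Theorem~\ref{thm:linear-sketch-uniform}, which immediately gives $\distcu{\dgap{d^{\star}}{f}/4}(\fplus{}) \geq d^{\star} \geq d$. Plugging in the lower bound on $\dgap{d^{\star}}{f}$, using monotonicity of $\distcu{\cdot}$ in its error parameter, and absorbing the benign off-by-one between $n-d+1$ and $n-d$ into the constant yields the advertised inequality $\distcu{(1-\theta)/(4(n-d))}(\fplus{}) \geq d$.

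I do not expect any real obstacle: all the heavy lifting is done by Theorem~\ref{thm:linear-sketch-uniform}, and the present corollary is essentially a parameterized repackaging of the averaging step already carried out for Corollary~\ref{cor:linear-sketch-uniform}. The only points that need a bit of care are checking that the telescoping sum in fact starts at the right index (so that the hypothesis $\epsu{d-1}{f} \leq \theta$ can be applied) and accounting for the count of summands when invoking pigeonhole; neither of these introduces any new ideas beyond what is already present in the proof of Corollary~\ref{cor:linear-sketch-uniform}.
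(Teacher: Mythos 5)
Your argument is, in substance, the paper's own proof: the paper justifies this corollary by exactly the averaging step you describe (telescoping the gaps $\dgap{k}{f}$ over $k \ge d$, pigeonholing, and invoking Part~3 of Theorem~\ref{thm:linear-sketch-uniform} at the resulting dimension $d^{\star} \ge d$), referring back to the argument in Case~2 of Corollary~\ref{cor:linear-sketch-uniform}.

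One caveat about your last step. Your pigeonhole runs over the $n-d+1$ indices $d,\dots,n$ and therefore yields $\distcu{\frac{1-\theta}{4(n-d+1)}}(\fplus{}) \ge d$, and the appeal to ``monotonicity of $\distcu{\cdot}$ in the error parameter'' to pass to the advertised $\frac{1-\theta}{4(n-d)}$ goes in the wrong direction: $\distcu{\delta}$ is non-increasing in $\delta$, so a lower bound at the \emph{smaller} error $\frac{1-\theta}{4(n-d+1)}$ does not imply one at the \emph{larger} error $\frac{1-\theta}{4(n-d)}$ --- the off-by-one cannot be ``absorbed'' this way, and in fact a function whose gaps at dimensions $d,\dots,n$ are all exactly $\frac{1-\theta}{n-d+1}$ shows that Part~3 alone cannot deliver the $\frac{1-\theta}{n-d}$ gap. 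This is not a missing idea on your part: the paper's own averaging argument produces the same $n-d+1$, so the stated denominator $4(n-d)$ is itself loose by this $\pm 1$, and the discrepancy is immaterial in the applications (e.g.\ in Section~\ref{sec:rec-majority}, where $d = \epsilon^2 n + 1$ and $n-d+1 = n(1-\epsilon^2)$ is exactly the quantity used). Still, as written your final sentence is a non sequitur; either state the conclusion with $n-d+1$ in the denominator or remark explicitly that the corollary's constant should be read up to this off-by-one.
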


%\todo{Add applications for streaming and distributed under unifrom distribution?}

\section{Applications}\label{sec:applications}
\subsection{Composition theorem for majority}\label{sec:rec-majority}
In this section using Theorem~\ref{thm:linear-sketch-uniform} we give a composition theorem for $\ftwo$-sketching of the composed $Maj_3$ function.
Unlike in the deterministic case for which the composition theorem is easy to show (see Lemma~\ref{lem:det-sketch-composition}) in the randomized case composition results require more work.
\begin{definition}[Composition]
	For $f \colon \ftwo^n \to \ftwo$ and $g \colon \ftwo^m \rightarrow \ftwo$ their composition $f \circ g \colon \ftwo^{mn} \rightarrow \ftwo$ is defined as:
	\begin{align*}
	(f \circ g)(x) = f(g(x_1, \dots, x_m), g(x_{m + 1}, \dots, x_{2m}), \dots, g(x_{m(n - 1) + 1}, \dots, x_{mn})).
	\end{align*}
\end{definition}

Consider the recursive majority function $\mtk{k} \equiv \mt \circ \mt \circ \dots \circ \mt$ where the composition is taken $k$ times.

\begin{theorem}\label{thm:rand-sketch-rec-majority}
For any $d\le n$ and $k = \log_3 n$ it holds that $\epsu{d}{\mtk{k}} \le \frac{4d}{n}$.
\end{theorem}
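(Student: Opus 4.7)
The plan is to prove the bound by induction on $k$, using the Fourier recursion that arises from viewing $f_k := \mtk{k}$ as $\mt$ composed with three independent copies of $f_{k-1}$. The base case $k=0$ is trivial: $f_0$ has a single nonzero Fourier coefficient (squared to $1$), so $\epsu{0}{f_0}=0$ and $\epsu{1}{f_0}=1\le 4$.

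For the inductive step, fix $A \le \ftwo^{3^k}$ of dimension $d$ and decompose the ambient space into three blocks $V_1,V_2,V_3 \cong \ftwo^{3^{k-1}}$. Since $\hat{\mt}$ is supported on $\{\{1\},\{2\},\{3\},\{1,2,3\}\}$ with $\hat{\mt}(T)^2=1/4$ for each such $T$, and $\hat{f_{k-1}}(\emptyset)=0$, the Fourier energy of $A$ decomposes as
\begin{align*}
\sum_{v \in A} \hat{f_k}(v)^2 = \frac{1}{4}\sum_{i=1}^3 \sum_{s \in A_i} \hat{f_{k-1}}(s)^2 + \frac{1}{4}\sum_{\substack{(v_1,v_2,v_3)\in A \\ v_i\neq 0 \,\forall i}} \prod_{i=1}^3 \hat{f_{k-1}}(v_i)^2,
\end{align*}
where $A_i := \{s \in \ftwo^{3^{k-1}} : (0,\ldots,s,\ldots,0) \in A\}$ is the ``pure block $i$'' subspace of dimension $a_i$, and I write $e_i := \dim \pi_i(A)$ for the dimension of the projection of $A$ onto block $i$.

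The first (singleton) sum is bounded term-by-term by the inductive hypothesis applied to each $A_i$, giving a total of $\sum_i a_i/3^{k-1}$. For the second (branching) sum I first drop the ``all nonzero'' constraint (only adds nonnegative terms); the result equals $\Pr[(\mathbf{T}^{(1)},\mathbf{T}^{(2)},\mathbf{T}^{(3)}) \in A]$ for i.i.d.\ copies $\mathbf{T}^{(i)}$ with $\Pr[\mathbf{T}^{(i)}=s] = \hat{f_{k-1}}(s)^2$. For any $i$ this probability is at most $\Pr[\mathbf{T}^{(i)} \in \pi_i(A)] \le 4e_i/3^{k-1}$ by the inductive hypothesis applied to $\pi_i(A)$; choosing the smallest $e_i$, the branching term contributes at most $\min_i e_i/3^{k-1}$.

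Combining, the total is at most $(\sum_i a_i + \min_i e_i)/3^{k-1} = 3(\sum_i a_i + \min_i e_i)/3^k$, so the inductive step reduces to the combinatorial inequality $\sum_i a_i + \min_i e_i \le 4d/3$. This is the crux of the proof and will be the main obstacle: the kernel of $\pi_i|_A$ equals $\{v \in A : v_i=0\}$, has dimension $d - e_i$, and contains the internal direct sum $A_j \oplus A_k$ (genuinely a direct sum since $A_j$ and $A_k$ live in disjoint blocks). This gives $a_j + a_k \le d - e_i$, equivalently $e_i \le d - \sum_\ell a_\ell + a_i$ for each $\{i,j,k\}=\{1,2,3\}$. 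Choosing $i^*$ to minimize $a_{i^*}$ yields $\min_i e_i \le e_{i^*} \le d - \sum_\ell a_\ell + \min_\ell a_\ell$, hence $\sum_\ell a_\ell + \min_i e_i \le d + \min_\ell a_\ell \le d + d/3 = 4d/3$, closing the induction.
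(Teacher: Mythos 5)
Your proof is correct, and it takes a genuinely different route from the paper's. The paper never handles arbitrary subspaces directly: it first bounds the weight captured by standard (coordinate) subspaces by $|S|/n$ via a recursion on the composition (Lemma~\ref{lem:recursive-majority-standard-subspace}), then reduces an arbitrary $d$-dimensional subspace to three standard subspaces of dimensions $d-1$, $d$ and $2d$ through the domination lemma (Lemma~\ref{lem:standard-subspace-domination}) combined with the fact that the spectrum of $\mtk{k}$ is monotone decreasing under inclusion on odd sets (Proposition~\ref{prop:fourier-monotonicity-rec-majority}); there the constant $4$ arises as $(d-1)+d+2d\le 4d$. You instead run the induction on the depth $k$ over arbitrary subspaces, splitting the weight of $A$ into the block-supported part, controlled by the induced subspaces $A_i$, and the branching part, controlled by projecting onto a single block and applying the inductive hypothesis to $\pi_i(A)$ via the probabilistic reformulation; your constant $4$ comes from the linear-algebraic inequality $\sum_i a_i+\min_i e_i\le 4d/3$, proved by rank--nullity and tight exactly when $A$ is an internal direct sum split evenly across the three blocks. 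Your route avoids both the domination lemma and the odd-set monotonicity, which are structural facts specific to $\mtk{k}$, using only balancedness of $\mtk{k-1}$, Parseval, and elementary linear algebra, so it adapts more readily to compositions of other balanced inner functions; the paper's route, in exchange, isolates a clean standard-subspace bound with constant $1$ and domination machinery of possible independent interest. One small gap to patch: at the end you invoke $\min_\ell a_\ell\le d/3$ without justification; it follows since the embedded copies of $A_1,A_2,A_3$ span an internal direct sum inside $A$, so $\sum_\ell a_\ell\le d$ and the minimum is at most $d/3$.
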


%\begin{theorem}\label{thm:rand-sketch-rec-majority}
%	No randomized linear sketch of size less than $\frac{\epsilon^2}{n}$ allows to predict for $\mtk{k}$ with probability better than $\frac{1 - \epsilon}{2}$.
%\end{theorem}

	First, we show a slighthly stronger result for standard subspaces and then extend this result to arbitrary subspaces with a loss of a constant factor.
	Fix any set $S \subseteq [n]$ of variables.
	We associate this set with a collection of standard unit vectors corresponding to these variables.
	Hence in this notation $\emptyset$ corresponds to the all-zero vector.
	
		\begin{lemma}\label{lem:recursive-majority-standard-subspace}
		For any standard subspace whose basis consists of singletons from the set $S \subseteq[n]$ it holds that:
		\begin{align*}
		\sum_{Z \in span(S)} \left(\widehat{\mtk{k}} (Z)\right)^2 \le \frac{|S|}{n}
		\end{align*}
	\end{lemma}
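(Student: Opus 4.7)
\medskip
\noindent\textbf{Proof plan.} I will induct on $k$. The engine of the induction is the Fourier composition formula for $\mtk{k+1} = \mt \circ \mtk{k}$. Partitioning the $3^{k+1}$ input coordinates into three blocks of size $n_k := 3^k$, any Fourier index $Z$ of $\mtk{k+1}$ decomposes as $Z = (Z_1, Z_2, Z_3)$ with $Z_j$ a Fourier index of $\mtk{k}$. Since $\mt$ is balanced and, by induction, so is $\mtk{k}$ (because composition preserves balance), all constant Fourier coefficients vanish. A standard expansion of $(\mt \circ \mtk{k})(x) = \sum_T \widehat{\mt}(T) \prod_{i \in T} \mtk{k}(x^{(i)})$ in terms of the Fourier expansion of $\mtk{k}$ then gives the clean identity
\[
\widehat{\mtk{k+1}}(Z_1, Z_2, Z_3) = \widehat{\mt}\bigl(\mathrm{supp}(Z)\bigr) \prod_{j \in \mathrm{supp}(Z)} \widehat{\mtk{k}}(Z_j),
\]
where $\mathrm{supp}(Z) = \{j : Z_j \neq \emptyset\}$.

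\medskip
\noindent The base case $k=0$ is immediate: $\mtk{0}$ is the identity on one bit, so for $|S| \in \{0,1\}$ the claimed inequality is an equality. For the inductive step, write $S = S_1 \sqcup S_2 \sqcup S_3$ for the intersection of $S$ with each of the three blocks, and note that because $S$ is a set of singleton basis vectors, $\mathrm{span}(S)$ is exactly the family of Fourier indices $Z = (Z_1, Z_2, Z_3)$ with $Z_j \subseteq S_j$. Define
\[
A_j := \sum_{\emptyset \neq Z_j \subseteq S_j} \widehat{\mtk{k}}(Z_j)^2,
\]
which by the inductive hypothesis satisfies $A_j \le |S_j|/n_k$ (the empty $Z_j$ contributes $0$ since $\mtk{k}$ is balanced). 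Plugging the composition identity into the sum and using the explicit Fourier coefficients of $\mt$, namely $\widehat{\mt}(T)^2 = 1/4$ for $|T| \in \{1,3\}$ and $0$ otherwise, collapses the expression to
\[
\sum_{Z \in \mathrm{span}(S)} \widehat{\mtk{k+1}}(Z)^2 = \tfrac14(A_1+A_2+A_3) + \tfrac14 A_1 A_2 A_3.
\]

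\medskip
\noindent The remaining step is an elementary calculation to show this is at most $|S|/n_{k+1} = |S|/(3 n_k)$. The linear term is at most $|S|/(4 n_k)$ by the induction hypothesis, so I need $\tfrac14 A_1 A_2 A_3 \le |S|/(12 n_k)$. Using $A_j \le |S_j|/n_k \le 1$, two of the factors can be bounded by $|S_j|/n_k \le 1$ and the third by $\min_j |S_j|/n_k \le |S|/(3 n_k)$ (by averaging), giving $A_1 A_2 A_3 \le |S|/(3 n_k)$, which matches. Adding the two terms yields exactly $|S|/(3 n_k)$, closing the induction.

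\medskip
\noindent I do not expect a real obstacle in this lemma itself: the composition formula does all the work and the arithmetic balances perfectly. The genuinely harder step, which the paper flags as a follow-up, is upgrading from standard subspaces to arbitrary subspaces with only a constant-factor loss, which will require an extra argument (presumably transforming an arbitrary basis into a ``nearly standard'' one via Gaussian elimination over $\ftwo$ and paying a $2^{O(1)}$ factor); that extension sits outside the present lemma.
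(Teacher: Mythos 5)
Your proof is correct and follows essentially the same route as the paper: decompose $S$ into the three blocks, use the composition formula for $\mt \circ \mtk{k}$ (with balancedness killing the empty and even-support terms) to get $\tfrac14(A_1+A_2+A_3)+\tfrac14 A_1A_2A_3$, and bound the cubic term by an averaging argument; the paper phrases this as an unrolled recursion ending in $\tfrac{1}{3^k}\sum_i \gamma_i$ rather than a formal induction, but the content is identical. Your closing arithmetic ($A_1A_2A_3 \le \min_j A_j \le |S|/(3n_k)$, giving exactly $|S|/n_{k+1}$) matches the paper's bound $\alpha_1\alpha_2\alpha_3 \le \tfrac13(\alpha_1+\alpha_2+\alpha_3)$.
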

	\begin{proof}
		The Fourier expansion of $\mt$ is given as $\mt(x_1, x_2, x_3) = \frac{1}{2}\left(x_1 + x_2 + x_3 - x_1 x_2 x_3\right)$.
		 For $i \in \{1,2,3\}$ let $N_i = \{(i - 1) n/3 + 1, \dots, i n/3\}$. Let $S_i = S \cap N_i$.
		Let $\alpha_i$ be defined as:
		\begin{align*}
		\alpha_i = \sum_{Z \in span(S_i)} \left(\widehat{\mt^{\circ {k - 1}}} (Z)\right)^2.
		\end{align*}
		Then we have:
		\begin{align*}
		\sum_{Z \in span(S)} \left(\widehat{\mtk{k}} (Z)\right)^2 =  \sum_{i = 1}^3 \sum_{Z \in span(S_i)} \left(\widehat{\mtk{k}} (Z)\right)^2 + \sum_{Z \in span(S) - \cup_{i = 1}^3 span(S_i)} \left(\widehat{\mtk{k}} (Z)\right)^2.
		\end{align*} 
	 For each $S_i$  we have 
	 $$ \sum_{Z \in span(S_i)} \left(\widehat{\mtk{k}} (Z)\right)^2 = \frac{1}{4}\sum_{Z \in span(S_i)} \left(\widehat{\mt^{\circ k - 1}} (Z)\right)^2 = \frac{\alpha_i}{4}.$$
		Moreover,  for each $Z \in span(S) - \cup_{i = 1}^3 span(S_i)$ we have: 
		\begin{align*}
		\widehat{\mtk{k}}(Z) = \begin{cases}
		- \frac{1}{2} \widehat{Maj_{3}^{\circ k -1}}(Z_1)\widehat{Maj_{3}^{\circ k -1}}(Z_2)\widehat{Maj_{3}^{\circ k -1}}(Z_3) & \text{if } Z \in \times_{i = 1}^3 (span(S_i) \setminus \emptyset)\\
		0 & \text{otherwise.}
		\end{cases}
		\end{align*}
		Thus, we have:
		\begin{align*}
		&\sum_{Z \in (span(S_1) \setminus \emptyset) \times (span(S_2) \setminus \emptyset) \times (span(S_3) \setminus \emptyset)} \left(\widehat{\mtk{k}}(Z) \right)^2  \\
		&=\sum_{Z \in (span(S_1) \setminus \emptyset) \times (span(S_2) \setminus \emptyset) \times (span(S_3) \setminus \emptyset)}\frac{1}{4} \left(\widehat{Maj_{3}^{\circ k -1}}(Z_1)\right)^2 \left(\widehat{Maj_{3}^{\circ k -1}}(Z_2)\right)^2\left(\widehat{Maj_{3}^{\circ k -1}}(Z_3)\right)^2 \\
		&=\frac{1}{4}    \sum_{Z \in (span(S_1) \setminus \emptyset) } \left(\widehat{Maj_{3}^{\circ k -1}}(Z_1)\right)^2 \sum_{Z \in (span(S_2) \setminus \emptyset) } \left(\widehat{Maj_{3}^{\circ k -1}}(Z_2)\right)^2\sum_{Z \in (span(S_3) \setminus \emptyset) } \left(\widehat{Maj_{3}^{\circ k -1}}(Z_3)\right)^2 \\
		&= \frac{1}{4} \alpha_1 \alpha_2 \alpha_3.
		\end{align*}
		where the last equality holds since $\widehat{Maj_{3}^{\circ k -1}}(\emptyset) = 0$.
		Putting this together we have:
		\begin{align*}
		&\sum_{Z \in span(S)} \left(\widehat{\mtk{k}} (Z)\right)^2 = \frac{1}{4}(\alpha_1 + \alpha_2 + \alpha_3 + \alpha_1 \alpha_2 \alpha_3) \\
		&\le \frac{1}{4}\left(\alpha_1 + \alpha_2 + \alpha_3 + \frac{1}{3}(\alpha_1 + \alpha_2 + \alpha_3)\right) = \frac{1}{3}(\alpha_1 + \alpha_2 + \alpha_3).
		\end{align*}
		Applying this argument recursively to each $\alpha_i$ for $k-1$ times we have:
		\begin{align*}
		\sum_{Z \in span(S)} \left(\widehat{\mtk{k}} (Z)\right)^2 \le \frac{1}{3^k} \sum_{i = 1}^{3^k} \gamma_i,
		\end{align*}
		where $\gamma_i = 1$ if $i \in S$ and $0$ otherwise. Thus, $\sum_{Z \in span(S)} \left(\widehat{\mtk{k}} (Z)\right)^2 \le \frac{|S|}{n}$.
	\end{proof}
	
%	By Lemma~\ref{lem:recursive-majority-standard-subspace} and Lemma~\ref{lem:rand-sketch-complexity} no subset of size less than $\epsilon^2 n$ can be used to predict $\mtk{k}$ with probability better than $\frac{1 - \epsilon}{2}$.

	To extend the argument to arbitrary linear subspaces we show that any such subspace has less Fourier weight than a collection of three carefully chosen standard subspaces.
	First we show how to construct such subspaces in Lemma~\ref{lem:standard-subspace-domination}.
		
	For a linear subspace $L \le \ftwo^n$ we denote the set of all vectors in $L$ of odd Hamming weight as $\mathcal O(L)$ and refer to it as the \textit{odd set} of $L$.
	For two vectors $v_1, v_2 \in \ftwo^n$ we say that $v_1$ \textit{dominates} $v_2$ if the set of non-zero coordinates of $v_1$ is a (not necessarily proper) subset of the set of non-zero coordinates of $v_2$.
	For two sets of vectors $S_1, S_2 \subseteq \ftwo^n$ we say that $S_1$ \textit{dominates} $S_2$ (denoted as $S_1 \prec S_2$) if there is a matching $M$ between $S_1$ and $S_2$ of size $|S_2|$ such that for each $(v_1 \in S_1, v_2 \in S_2) \in M$ the vector $v_1$ dominates $v_2$.
	
	\begin{lemma}[Standard subspace domination lemma]\label{lem:standard-subspace-domination}
		For any linear subspace $L \le \ftwo^n$ of dimension $d$ there exist three standard linear subspaces $S_1, S_2, S_3 \le \ftwo^n$ such that:
		$$\mathcal O(L) \prec \mathcal O(S_1) \cup \mathcal O(S_2) \cup \mathcal O(S_3),$$
		and $dim(S_1) = d-1$, $dim(S_2) = d$, $dim(S_3) = 2d$.
	\end{lemma}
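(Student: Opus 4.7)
The plan is to build the three standard subspaces from a pivot structure of $L$ and to define an explicit injection $\phi \colon \mathcal O(L) \to \mathcal O(S_1) \cup \mathcal O(S_2) \cup \mathcal O(S_3)$ realising the domination relation by a parity-based case analysis. First, I would put $L$ into reduced row echelon form with respect to the coordinate basis of $\ftwo^n$, obtaining a basis $b_1,\dots,b_d$ with distinct pivot coordinates $P = \{p_1,\dots,p_d\}\subseteq[n]$ such that $b_i(p_i)=1$ and $b_j(p_i)=0$ for $j\neq i$. Let $Q = \bigl(\bigcup_i \supp(b_i)\bigr)\setminus P$ denote the non-pivot support. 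Every $v\in L$ has a unique expansion $v=\sum_{i\in I(v)}b_i$, and its pivot projection $\pi(v)=\sum_{i\in I(v)} e_{p_i}$ satisfies $\supp(\pi(v))\subseteq \supp(v)$, with $\pi$ a bijection $L\to \vecspan(P)$.

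Next I would set $S_2=\vecspan\{e_p:p\in P\}$ (dimension $d$) and construct $S_3$ by a second Gaussian elimination on the $d\times |Q|$ matrix whose rows are $b_i|_Q$. This yields at most $d$ fresh pivot coordinates $Q'\subseteq Q$; padding arbitrarily so that $|Q'|=d$, let $S_3=\vecspan\{e_r:r\in P\cup Q'\}$, a standard subspace of dimension $2d$. The key property of $Q'$ is that for every $v\in L$ with $v|_Q\neq 0$, we have $Q'\cap \supp(v)\neq\emptyset$: any non-zero vector in the row span of the non-pivot submatrix is non-zero at one of its RREF pivots in $Q'$.

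Now I would partition $\mathcal O(L)$ into Case~A ($|I(v)|$ odd, equivalently $|\supp(v)\cap Q|$ even, so $\pi(v)\in\mathcal O(S_2)$) and Case~B ($|I(v)|$ even, equivalently $|\supp(v)\cap Q|$ odd, so $v|_Q\neq 0$). Define $\phi(v)=\pi(v)\in\mathcal O(S_2)$ in Case~A, and $\phi(v)=\pi(v)+e_{q(v)}\in\mathcal O(S_3)$ in Case~B, where $q(v)$ is the smallest coordinate in the non-empty set $Q'\cap\supp(v)$. The containment $\supp(\phi(v))\subseteq\supp(v)$ is immediate since $\{p_i:i\in I(v)\}\subseteq\supp(v)$ and $q(v)\in\supp(v)$; injectivity holds because images in Case~A have no coordinate in $Q'$ while images in Case~B carry exactly one, and $\pi$ is injective within each case. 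The subspace $S_1$ of dimension $d-1$ is included for flexibility in the statement and can be taken to be any standard subspace (e.g.\ $\vecspan\{e_{p_i}:i<d\}$) without being actively used by this construction.

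The main obstacle will be producing a single set $Q'$ of $d$ non-pivot coordinates that hits every Case~B vector's support. Case~B vectors may have wildly different non-pivot projections, but viewing $\{v|_Q:v\in L\}$ as a subspace of $\ftwo^Q$ of dimension at most $d$ and choosing $Q'$ to be the pivots of its reduced echelon form guarantees that every non-zero element of this subspace has a $1$ at some coordinate of $Q'$. This RREF-pivot property is precisely what makes the choice $q(v)\in Q'\cap\supp(v)$ always available in Case~B, and it is the linchpin of the argument; the rest of the construction is a routine parity bookkeeping.
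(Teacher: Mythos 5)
Your proof is correct, and it follows the same underlying strategy as the paper's (pivot structure via Gaussian elimination, plus a second elimination on the non-pivot columns to produce at most $d$ extra coordinates that every relevant vector must hit -- this is exactly the paper's Case~3 idea for $S_3$), but your case decomposition is genuinely different and cleaner. The paper first renormalizes the basis so that every row has odd Hamming weight (adding the first column to the even-weight rows), and then splits $\mathcal O(L)$ into three cases according to whether the first row is used and the parities of the intersections with two row blocks, which is why it needs all three subspaces of dimensions $d-1$, $d$ and $\le 2d$. You instead work directly in reduced row echelon form and split only on the parity of $|I(v)|$: odd-coefficient combinations map into $\mathcal O(S_2)$ via the pivot projection $\pi$, and even-coefficient (but odd-weight) combinations necessarily have nonzero non-pivot support and map into $\mathcal O(S_3)$ via $\pi(v)+e_{q(v)}$, so $S_1$ is never needed. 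Your explicit injection $\phi$ with $\supp(\phi(v))\subseteq\supp(v)$, together with the observation that Case~A images avoid $Q$ while Case~B images contain exactly one coordinate of $Q'$, makes the matching and its injectivity more transparent than the paper's rather terse description. One cosmetic point, shared with the paper's own proof: the construction naturally yields $\dim(S_3)\le 2d$ (your padding only works when $n\ge 2d$), but this is all that is used downstream in Theorem~\ref{thm:rand-sketch-rec-majority}, so it is not a gap.
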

	\begin{proof}
		Let $A \in \ftwo^{d \times n}$ be the matrix with rows corresponding to the basis in $L$.
		We will assume that $A$ is normalized in a way described below.
		First, we apply Gaussian elimination to ensure that $A = (I, M)$ where $I$ is a $d \times d$ identity matrix.
		If all rows of $A$ have even Hamming weight then the lemma holds trivially since $\mathcal O(L) = \emptyset$.
		By reordering rows and columns of $A$ we can always assume that for some $k \ge 1$ the first $k$ rows of $A$ have odd Hamming weight and the last $d - k$ have even Hamming weight.
		Finally, we add the first column to each of the last $d - k$ rows, which makes all rows have odd Hamming weight.
		This results in $A$ of the following form:
		\[
		A = \left(
		\begin{array}{c|c|c|c}
		1 & 0 \cdots 0 & 0 \cdots 0 & a\\ \hline
		0 & \raisebox{-15pt}{{\huge\mbox{{$I_{k-1}$}}}} & \raisebox{-15pt}{{\huge\mbox{{$0$}}}} & \raisebox{-15pt}{{\huge\mbox{{$M_1$}}}} \\[-4ex]
		\vdots & & &  \\[-0.5ex]
		0 & & &  \\ \hline 
		1 & \raisebox{-15pt}{{\huge\mbox{{$0$}}}} & \raisebox{-15pt}{{\huge\mbox{{$I_{d - k}$}}}} & \raisebox{-15pt}{{\huge\mbox{{$M_2$}}}} \\[-4ex]
		\vdots & & & \\[-0.5ex]
		1 & & &
		\end{array}
		\right)
		\]
		We use the following notation for submatrices: $A[i_1,j_1;i_2,j_2]$ refers to the submatrix of $A$ with rows between $i_1$ and $j_1$ and columns between $i_2$ and $j_2$ inclusive.
		We denote to the first row as $v$,  the submatrix $A[2,k;1,n]$ as $\mathcal A$ and the submatrix $A[k +1, d; 1,n]$ as $\mathcal B$. Each $x \in \mathcal O(L)$ can be represented as $\sum_{i \in S} A_i$ where the set $S$ is of odd size and the sum is over $\ftwo^n$.
		We consider the following three cases corresponding to different types of the set $S$.
		
		\textbf{Case 1.} \todo{might be possible to simplify these cases} $S \subseteq rows(\mathcal A) \cup rows(\mathcal B)$. This corresponds to all odd size linear combinations of the rows of $A$ that don't include the first row.
		Clearly, the set of such vectors is dominated by $\mathcal O(S_1)$ where $S_1$ is the standard subspace corresponding to the span of the rows of the submatrix $A[2,d;2,d]$.
		
		\textbf{Case 2.} $S$ contains the first row, $|S \cap rows(\mathcal A)|$ and $|S \cap rows(\mathcal B)|$ are even.
		All such linear combinations have their first coordinate equal $1$.
		Hence, they are dominated by a standard subspace corresponding to span of the rows the $d \times d$ identity matrix,  which we refer to as $S_2$.
		
		\textbf{Case 3.} $S$ contains the first row, $|S \cap rows(\mathcal A)|$ and $|S \cap rows(\mathcal B)|$ are odd.
		All such linear combinations have their first coordinate equal $0$.
		This implies that the Hamming weight of the first $d$ coordinates of such linear combinations is even and hence the other coordinates can't be all equal to $0$.
		Consider the submatrix $M = A[1,d; d + 1,n]$ corresponding to the last $n - d$ columns of $A$.
		Since the rank of this matrix is at most $d$ by running Gaussian elimination on $M$ we can construct a matrix $M'$ containing as rows the basis for the row space of $M$ of the following form:
		\begin{align*}
		M' = 
		\begin{pmatrix}
		I_t & M_1 \\
		0 &  0
		\end{pmatrix}
		\end{align*}
		where $t = rank(M)$.
		This implies that any non-trivial linear combination of the rows of $M$ contains $1$ in one of the first $t$ coordinates.
		We can reorder the columns of $A$ in such a way that these $t$ coordinates have indices from $d + 1$ to $d + t$.
		Note that now the set of vectors spanned by the rows of the $(d + t) \times (d + t)$ identity matrix $I_{d + t}$ dominates the set of linear combinations we are interested in.
		Indeed, each such linear combination has even Hamming weight in the first $d$ coordinates and has at least one coordinate equal to $1$ in the set $\{d +1, \dots, d + t\}$. This gives a vector of odd Hamming weight that dominates such linear combination.
		Since this mapping is injective we have a matching. 
		We denote the standard linear subspace constructed this way as $S_3$ and clearly $dim(S_3) \le 2d$.
	\end{proof}

%	To extend our argument to arbitrary linear sketches consider such sketch in a standard form $(I_d, M)$ where $d$ is the dimension of the sketch.
%	Let $S$ be the set of variables corresponding to the columns of $I_d$ in the sketch.
%	Define $S_i = N_i \cap S$ as in the proof of Lemma~\ref{lem:recursive-majority-standard-subspace}.
%	Note that all linear combinations of the rows of a linear sketch can be bijectively mapped to the linear combinations of the rows restricted to the first $d$ columns.
%	The key property of this mapping is that it maps vectors to vectors whose set of non-zero coordinates is a (not necessarily strict) subset of the set of non-zero coordinates of the original vector.
%	Thus, if we can show that such mapping only increases the sum of squared Fourier coefficients in the subspace spanned by the linear sketch on a term by term basis then the general case is reduce to the case already handled by Lemma~\ref{lem:recursive-majority-standard-subspace}.

	The following proposition shows that the spectrum of the $\mtk{k}$ is monotone decreasing under inclusion if restricted to odd size sets only:
	\begin{proposition}\label{prop:fourier-monotonicity-rec-majority}
		For any two sets $Z_1 \subseteq Z_2$ of odd size it holds that:
		\begin{align*}
		\left|\widehat{\mtk{k}}(Z_1)\right| \ge \left|\widehat{\mtk{k}}(Z_2)\right|.
		\end{align*}
	\end{proposition}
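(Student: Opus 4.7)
The plan is to induct on $k$, exploiting the block-recursive Fourier structure of $\mtk{k}$. Identify $[3^k]$ with three disjoint blocks $B_1, B_2, B_3$ of $3^{k-1}$ coordinates each (one per input of the top-level $\mt$), and for any $Z \subseteq [3^k]$ write $Z|_i = Z \cap B_i$ and $T^*(Z) = \{i \in [3] : Z|_i \neq \emptyset\}$. Since $\mt$ is balanced we have $\widehat{\mt}(\emptyset)=0$, and hence inductively $\widehat{\mt^{\circ j}}(\emptyset)=0$ for every $j \ge 1$. The standard formula for Fourier coefficients of a composition then collapses to
\[
\widehat{\mtk{k}}(Z) \;=\; \widehat{\mt}(T^*(Z)) \cdot \prod_{i \in T^*(Z)} \widehat{\mt^{\circ(k-1)}}(Z|_i),
\]
and a short side-induction yields that $\widehat{\mtk{k}}(Z) \neq 0$ forces $|T^*(Z)|$ to be odd and each $\lvert Z|_i \rvert$ (for $i \in T^*(Z)$) to be odd.

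The base case $k=1$ is immediate: $|\widehat{\mt}(Z)| = 1/2$ whenever $|Z| \in \{1,3\}$, so both sides of the desired inequality equal $1/2$ for any odd $Z_1 \subseteq Z_2 \subseteq [3]$. For the inductive step, assume the statement for $k-1$ and take $Z_1 \subseteq Z_2$ of odd sizes. If $\widehat{\mtk{k}}(Z_2) = 0$ there is nothing to prove; otherwise the byproduct above forces $|T^*(Z_2)|$ and each $\lvert Z_2|_i \rvert$ (for $i \in T^*(Z_2)$) to be odd. Block containment gives $T^*(Z_1) \subseteq T^*(Z_2)$ and $Z_1|_i \subseteq Z_2|_i$ for every $i$. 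The clean case is $T^*(Z_1) = T^*(Z_2)$ with each $\lvert Z_1|_i \rvert$ odd: then $|\widehat{\mt}(T^*(Z_1))| = |\widehat{\mt}(T^*(Z_2))| = 1/2$, the inductive hypothesis applied blockwise gives $|\widehat{\mt^{\circ(k-1)}}(Z_1|_i)| \ge |\widehat{\mt^{\circ(k-1)}}(Z_2|_i)|$, and multiplying the three factors yields the inequality.

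The main obstacle will be the mismatched cases, in which either $T^*(Z_1) \subsetneq T^*(Z_2)$ or some block restriction $Z_1|_i$ has even size while $Z_2|_i$ is odd. In those regimes the left-hand side may vanish and a direct blockwise comparison fails. The plan is to leverage the global parity constraint $|Z_1| = \sum_i \lvert Z_1|_i \rvert$ is odd, which forces an odd number of blocks of $Z_1$ to have odd size, and to propagate this constraint recursively: any block-parity mismatch at the top level either causes $\widehat{\mtk{k}}(Z_2)$ itself to vanish or reduces the number of surviving factors on the right by enough to compensate through the top-level $1/2$ factor. Tracking these parities carefully across all $k$ recursion levels is where the proof is the most delicate and is the step I expect to require the most care.
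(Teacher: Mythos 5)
Your framework---the exact composition formula for $\widehat{\mtk{k}}$ using that the inner function is balanced, the side-induction showing nonzero coefficients live only on odd sets whose block restrictions are odd, and the blockwise induction in the matched case $T^*(Z_1)=T^*(Z_2)$ with all restrictions odd---is the paper's argument in cleaner notation, and that part is fine. The gap is exactly the part you defer. Split your ``mismatched'' cases in two. When $T^*(Z_1)$ is a singleton $\{i\}$ while $T^*(Z_2)=\{1,2,3\}$, the inequality is true and easy: $Z_1\subseteq Z_2|_i$, so the inductive hypothesis gives $|\widehat{\mtk{k}}(Z_1)|=\tfrac12\,|\widehat{\mt^{\circ (k-1)}}(Z_1)|\ge\tfrac12\,|\widehat{\mt^{\circ (k-1)}}(Z_2|_i)|\ge|\widehat{\mtk{k}}(Z_2)|$, because the two remaining factors have absolute value at most $1$; this is the paper's Case 4 and should be written out, not postponed. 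In the remaining mismatched configurations ($|T^*(Z_1)|=2$, or $T^*(Z_1)=T^*(Z_2)$ with some nonempty block restriction of $Z_1$ of even size), however, the inequality you are trying to prove is simply false, so no parity bookkeeping across recursion levels can close them. Concretely, for $k=2$ take $Z_1=\{1,2,4\}$ and $Z_2=\{1,2,3,4,7\}$: both have odd size and $Z_1\subseteq Z_2$, yet $\widehat{\mtk{2}}(Z_1)=0$ (it meets exactly two blocks, and $\widehat{\mt}$ vanishes on sets of size two), while $|\widehat{\mtk{2}}(Z_2)|=\tfrac{1}{16}\neq 0$ since every block restriction of $Z_2$ is odd. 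Your hoped-for dichotomy---``either $\widehat{\mtk{k}}(Z_2)$ vanishes or the surviving factors compensate''---fails precisely here: the right-hand side survives and the left-hand side is zero.

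The resolution is that the monotonicity is only meant for (and the paper's proof only ever treats) sets that actually arise from the expansion $\mt=\tfrac12(x_1+x_2+x_3-x_1x_2x_3)$, i.e.\ sets in the Fourier support of $\mtk{k}$: the paper's four cases classify $Z_1$ and $Z_2$ by whether they ``arise from a linear or a cubic term'' and never consider a $Z_1$ arising from neither, which is exactly your problematic regime. If you add the hypothesis $\widehat{\mtk{k}}(Z_1)\neq 0$ (the case $\widehat{\mtk{k}}(Z_2)=0$ being trivial), your own support characterization shows the bad configurations cannot occur---a support set never meets exactly two blocks and never has a nonempty even block restriction---and then your clean case plus the singleton-versus-$\{1,2,3\}$ case above is a complete induction. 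So: your approach is the same as the paper's, but as written the proposal is incomplete, the completion you sketch is impossible for the statement taken literally, and the correct fix is to prove (and apply) the monotonicity on the Fourier support, as the paper's proof implicitly does.
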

	\begin{proof}
		The proof is by induction on $k$. 
		Consider the Fourier expansion of $\mt(x_1, x_2, x_3) = \frac{1}{2}(x_1 + x_2 + x_3 - x_1 x_2 x_3)$. The case $k=1$ holds since all Fourier coefficients have absolute value $1/2$.
		Since $\mtk{k} = \mt \circ (\mtk{k-1})$ all Fourier coefficients of $\mtk{k}$ result from substituting either a linear or a cubic term in the Fourier expansion by the multilinear expansions of $\mtk{k-1}$.
		This leads to four cases.
		
		\textbf{Case 1.} $Z_1$ and $Z_2$ both arise from linear terms.
		In this case if $Z_1$ and $Z_2$ aren't disjoint then they arise from the same linear term and thus satisfy the statement by the inductive hypothesis.  
		
		\textbf{Case 2.} If $Z_1$ arises from a cubic term and $Z_2$ from the linear term then it can't be the case that $Z_1 \subseteq Z_2$ since $Z_2$ contains some variables not present in $Z_1$.
		
		\textbf{Case 3.} If $Z_1$ and $Z_2$ both arise from the cubic term then we have $(Z_1 \cap N_i) \subseteq (Z_2 \cap N_i)$ for each $i$.
		By the inductive hypothesis we then have $\left|\widehat{\mtk{k-1}}(Z_1 \cap N_i)\right| \ge \left|\widehat{\mtk{k-1}}(Z_2 \cap N_i)\right|$.
		Since for $j = 1,2$ we have $\widehat{\mtk{k}}(Z_j) = - \frac{1}{2} \prod_i \widehat{\mtk{k-1}}(Z_j \cap N_i)$ the desired inequality follows.
		
		\textbf{Case 4.} If $Z_1$ arises from the linear term and $Z_2$ from the cubic term then w.l.o.g. assume that $Z_1$ arises from the $x_1$ term.
		Note that $Z_1 \subseteq (Z_2 \cap N_1)$ since $Z_1 \cap (N_2 \cup N_3) = \emptyset$.
		By the inductive hypothesis applied to $Z_1$ and $Z_2 \cap N_1$ the desired inequality holds.

	\end{proof}

We can now complete the proof of Theorem~\ref{thm:rand-sketch-rec-majority}

\begin{proof}[Proof of Theorem~\ref{thm:rand-sketch-rec-majority}]
By combining Proposition~\ref{prop:fourier-monotonicity-rec-majority}
and Lemma~\ref{lem:recursive-majority-standard-subspace} we have that any set $\mathcal T$ of vectors that is dominated by $\mathcal O(\mathcal S)$ for some standard subspace $\mathcal S$ satisfies $\sum_{S \in \mathcal T} \widehat{\mtk{k}}(S)^2 \le \frac{dim(\mathcal S)}{n}$.
By the standard subspace domination lemma (Lemma~\ref{lem:standard-subspace-domination}) any subspace $L \le \ftwo^n$ of dimension $d$ has $\mathcal O(L)$ dominated by a union of three standard subspaces of dimension $2d$, $d$ and $d - 1$ respectively.
Thus, we have $\sum_{S \in \mathcal O(L)} \widehat{\mtk{k}}(S)^2 \le\frac{2d}{n} +\frac{d}{n} + \frac{d - 1}{n} \le  \frac{4d}{n}$.
\end{proof}

We have the following corollary of Theorem~\ref{thm:rand-sketch-rec-majority} that proves Theorem~\ref{thm:rec-majority}.
\begin{corollary}
For any $\epsilon \in [0,1]$, $\gamma < \frac12 - \epsilon$ and $k = \log_3 n$ it holds that:
\begin{align*}
\distlu{\gamma}(\mtk{k}) \ge \epsilon^2 n + 1, \quad\quad\quad \distcu{\frac1n \left(\frac14 - \eps^2\right)}({\mtk{k}}^+) \ge \epsilon^2 n + 1. 
\end{align*} 
\end{corollary}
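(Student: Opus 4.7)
The plan is to combine the Fourier-concentration bound for recursive majority given by Theorem~\ref{thm:rand-sketch-rec-majority} with the two lower-bound mechanisms packaged in Theorem~\ref{thm:linear-sketch-uniform} and Corollary~\ref{cor:linear-sketch-uniform-lb}. Concretely, set $d = \epsilon^2 n$. Theorem~\ref{thm:rand-sketch-rec-majority} says that for any $d$-dimensional subspace $A \le \ftwo^n$ we have $\sum_{S \in A} \widehat{\mtk{k}}(S)^2 \le 4d/n = 4\epsilon^2$, i.e. $\epsu{d}{\mtk{k}} \le 4\epsilon^2$. Both halves of the corollary will follow from this single estimate.

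For the sketching lower bound I would invoke Part~2 of Theorem~\ref{thm:linear-sketch-uniform} with $\epsilon_1 = \epsu{d}{\mtk{k}} \le 4\epsilon^2$. The statement guarantees $\distlu{\gamma}(\mtk{k}) \ge d+1$ whenever $\gamma < \frac{1 - \sqrt{\epsilon_1}}{2}$, and since $\sqrt{\epsilon_1} \le 2\epsilon$ we have $\frac{1 - \sqrt{\epsilon_1}}{2} \ge \frac{1-2\epsilon}{2} = \frac12 - \epsilon$. Hence every $\gamma < \frac12 - \epsilon$ is covered and we obtain $\distlu{\gamma}(\mtk{k}) \ge \epsilon^2 n + 1$.

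For the communication lower bound I would apply Corollary~\ref{cor:linear-sketch-uniform-lb} with $d = \epsilon^2 n + 1$ and $\theta = 4\epsilon^2$. The hypothesis $\epsu{d-1}{\mtk{k}} = \epsu{\epsilon^2 n}{\mtk{k}} \le 4\epsilon^2 = \theta$ is exactly what Theorem~\ref{thm:rand-sketch-rec-majority} supplies. The corollary then yields
\[
\distcu{\frac{1-4\epsilon^2}{4(n-\epsilon^2 n - 1)}}({\mtk{k}}^+) \ge \epsilon^2 n + 1.
\]
Since $n - \epsilon^2 n - 1 \le n$ we have $\frac{1 - 4\epsilon^2}{4(n - \epsilon^2 n - 1)} \ge \frac{1 - 4\epsilon^2}{4n} = \frac{1}{n}\bigl(\tfrac14 - \epsilon^2\bigr)$, and $\distcu{\cdot}$ is monotone non-increasing in the error parameter, so the same lower bound $\epsilon^2 n + 1$ holds at the smaller error $\frac{1}{n}(\tfrac14 - \epsilon^2)$ as well.

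There is essentially no hard step left in this derivation: the actual work is carried by Theorem~\ref{thm:rand-sketch-rec-majority} (Fourier concentration) and the generic lower-bound machinery. The only things to verify are the arithmetic alignments of the error parameters ($\sqrt{4\epsilon^2} = 2\epsilon$ for the sketching side; the crude bound $n - \epsilon^2 n - 1 \le n$ together with monotonicity for the communication side), plus the innocuous rounding issue that $\epsilon^2 n$ need not be an integer, which can be absorbed by replacing $d$ with $\lceil \epsilon^2 n \rceil$ throughout without affecting any of the stated bounds.
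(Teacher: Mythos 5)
Your proposal is correct and follows essentially the same route as the paper: fix $d = \epsilon^2 n$, use Theorem~\ref{thm:rand-sketch-rec-majority} to get $\epsu{d}{\mtk{k}} \le 4\epsilon^2$, apply Part 2 of Theorem~\ref{thm:linear-sketch-uniform} for the sketching bound and Corollary~\ref{cor:linear-sketch-uniform-lb} plus monotonicity in the error parameter for the communication bound. Your version is in fact slightly cleaner than the paper's, since applying the corollary with $d = \epsilon^2 n + 1$ avoids a small off-by-one in the paper's error-parameter bookkeeping, and you note the harmless rounding of $\epsilon^2 n$.
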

\begin{proof}
Fix $d = \epsilon^2 n$. For this choice of $d$ Theorem~\ref{thm:rand-sketch-rec-majority} implies that $\epsilon_d(\mtk{k}) \le 4\epsilon^2$.
The first part follows from Part 2 of Theorem~\ref{thm:linear-sketch-uniform}.
The second part is by Corollary~\ref{cor:linear-sketch-uniform-lb} as by taking $\eps = \sqrt{d/n}$ we can set $\theta  =  4 \eps^2 \ge \epsu{d}{\mtk{k}}$ and hence:
 \begin{align*}
\eps^2n  +1 \le \distcu{\frac{1 - \theta}{4(n - d)}}(\mtk{k}) =  \distcu{\frac{1 - 4 \epsilon^2}{4 n (1 - \epsilon^2)}}(\mtk{k}) \le \distcu{\frac1n \left(\frac14 - \eps^2\right)}(\mtk{k}).
 \end{align*}

\end{proof}

\subsection{Address function and Fourier sparsity}\label{sec:address}
Consider the \textit{addressing function} $Add_n \colon \{0,1\}^{\log n + n} \rightarrow \{0,1\}$\footnote{In this section it will be more convenient to represent both domain and range of the function using $\{0,1\}$ rather than $\ftwo$.} defined as follows:
\begin{align*}
Add_n(x, y_1, \dots, y_n) = y_x, \text{ where } x\in\{0,1\}^{\log n}, y_i \in \{0,1\},
\end{align*}
i.e. the value of $Add_n$ on an input $(x,y)$ is given by the $x$-th bit of the vector $y$ where $x$ is treated as a binary representation of an integer number in between $1$ and $n$.
Addressing function has only $n^2$ non-zero Fourier coefficients. In fact, as shown by Sanyal~\cite{S15} Fourier dimension, and hence by Fact~\ref{fact:fourier-dimension} also the deterministic sketch complexity, of any Boolean function with Fourier sparsity $s$ is $O(\sqrt{s} \log s)$.

Below using the addressing function we show that this relationship is tight (up to a logarithmic factor) even if randomization is allowed, i.e. even for a function with Fourier sparsity $s$ an $\ftwo$ sketch of size $\Omega(\sqrt{s})$ might be required.

\begin{theorem}\label{thm:rand-sketch-complexity-address}
	For the addressing function $Add_n$ and values $1 \le d \le n$ and $\epsilon =d/n$ it holds that:
	$$\distlu{\frac{1 - \sqrt{\epsilon}}{2}}(Add_n^+) \ge d, \quad\quad\quad \distcu{\Theta(\frac{1-\epsilon}{n})}(Add_n) \ge d.$$
\end{theorem}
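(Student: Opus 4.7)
The plan is to compute the Fourier spectrum of $Add_n$ explicitly, derive a tight bound on the approximate Fourier dimension $\epsu{d}{Add_n}$, and then apply Part 2 of Theorem~\ref{thm:linear-sketch-uniform} together with Corollary~\ref{cor:linear-sketch-uniform-lb}.

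First, I compute $\widehat{Add_n}$. Using the identity $\mathbb{1}[x = i] = \frac{1}{n}\sum_{S \subseteq [\log n]}\chi_S(x)\chi_S(i)$ valid for $x, i \in \ftwo^{\log n}$, I can write
\[
Add_n(x,y) = (-1)^{y_x} = \sum_{i \in [n]} \mathbb{1}[x = i]\,(-1)^{y_i} = \frac{1}{n}\sum_{i \in [n],\, S \subseteq [\log n]} \chi_S(i)\,\chi_S(x)\,\chi_{e_i}(y),
\]
where $e_i \in \ftwo^n$ denotes the $i$-th standard basis vector. Hence $Add_n$ has exactly $n^2$ non-zero Fourier coefficients, indexed by pairs $(S, e_i)$ with $S \subseteq [\log n]$ and $i \in [n]$, each of absolute value $1/n$.

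Second, I will show that every linear subspace $A \le \ftwo^{\log n + n}$ of dimension $d$ contains at most $dn$ of these characters, so by Parseval $\epsu{d}{Add_n} \le dn/n^2 = d/n = \eps$. Let $A_x = \{v \in A : \pi_y(v) = 0\}$ and $e = \dim(A_x)$; by rank–nullity $\dim(\pi_y(A)) = d - e$, and since $A_x$ lives inside the address register we have $e \le \log n$. A character $(S, e_i)$ belongs to $A$ only if the singleton $e_i$ belongs to $\pi_y(A)$, and as standard basis vectors in any linear subspace are linearly independent, at most $d - e$ distinct values of $i$ can occur. For each such $i$, the set of $S$ with $(S, e_i) \in A$ is either empty or a coset of the $x$-projection of $A_x$, hence has size $2^e$. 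Combining, $|A \cap \mathrm{Spec}(Add_n)| \le (d - e) 2^e \le d \cdot 2^{\log n} = dn$.

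Third, the two lower bounds follow mechanically from the framework in Section~\ref{sec:uniform}. For the sketching bound, applying Part 2 of Theorem~\ref{thm:linear-sketch-uniform} in dimension $d - 1$ and using $\epsu{d-1}{Add_n} \le (d-1)/n < \eps$ gives $\distlu{\gamma}(Add_n) \ge d$ for every $\gamma < (1 - \sqrt{(d-1)/n})/2$, which in particular covers $\gamma = (1 - \sqrt{\eps})/2$. For the communication bound, applying Corollary~\ref{cor:linear-sketch-uniform-lb} with $\theta = (d-1)/n$ and ambient dimension $N = \log n + n$ gives guaranteed error $(1 - \theta)/(4(N - d)) = \Theta((1-\eps)/n)$ for $d \le n$, yielding $\distcu{\Theta((1-\eps)/n)}(Add_n^+) \ge d$.

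The only non-mechanical content is the combinatorial bound in the second step. The essential observation is that the address register has only $\log n$ bits, which forces $e \le \log n$ and converts the \emph{a priori} exponential quantity $(d - e) 2^e$ into something linear in $d$. Without this constraint the bound on $\epsu{d}{Add_n}$ would be exponential in $d$ and give nothing useful; everything after that is a routine application of the tools already developed in Section~\ref{sec:uniform}.
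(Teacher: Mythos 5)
Your proposal is correct and takes essentially the same approach as the paper: both arguments show that every Fourier coefficient of $Add_n$ has magnitude $1/n$ and that any $d$-dimensional subspace of $\ftwo^{\log n + n}$ contains at most $dn$ spectral elements, hence $\epsu{d}{Add_n} \le d/n$, and then conclude via Part 2 of Theorem~\ref{thm:linear-sketch-uniform} and Corollary~\ref{cor:linear-sketch-uniform-lb}. The only difference is presentational: the paper performs the counting by augmenting the subspace's basis matrix with the address-block identity and Gaussian-eliminating (getting $nd'$ with $d' \le d$), whereas you count directly by rank--nullity on the two projections (getting $(d-e)2^{e} \le dn$), and you make explicit the minor bookkeeping (applying Part 2 at dimension $d-1$ and using ambient dimension $\log n + n$ in the corollary) that the paper leaves implicit.
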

\begin{proof}
	If we apply the standard Fourier notaion switch where we replace $0$ with $1$ and $1$ with $-1$ in the domain and the range of the function then the addressing function $Add_n(x,y)$ can be expressed as the following multilinear polynomial:
	\begin{align*}
	Add_n(x,y) = \sum_{i \in \{0,1\}^{\log n}} y_i \prod_{j \colon i_j =1} \left(\frac{1 - x_j}{2}\right) \prod_{j \colon i_j =0} \left(\frac{1 + x_j}{2}\right),
	\end{align*}
	which makes it clear that the only non-zero Fourier coefficents correspond to the sets that contain a single variable from the addressee block and an arbitrary subset of variables from the address block.
	This expansion also shows that the absolute value of each Fourier coefficient is equal to $\frac{1}{n}$.
	
	Fix any $d$-dimensional subspace $\mathcal A_d$ and consider the matrix $M \in \ftwo^{d \times {(\log n +n)}}$ composed of the basis vectors as rows.
	We add to $M$ extra $\log n$ rows which contain an identity matrix in the first $\log n$ coordinates and zeros everywhere else.
	This gives us a new matrix $M' \in \ftwo^{(d + \log n) \times {(\log n +n)}}$. Applying Gaussian elimination to $M'$ we can assume that it is of the following form:
			\begin{align*}
			M' = 
			\begin{pmatrix}
			I_{\log n} & 0&0 \\
			0 & I_{d'} & M \\
			0&0&0
			\end{pmatrix},
			\end{align*}
	where $d' \le d$. Thus, the total number of non-zero Fourier coefficients spanned by the rows of $M'$ equals $n d'$.
	Hence, the total sum of squared Fourier coeffients in $\mathcal A_d$ is at most $\frac{d'}{n} \le \frac{d}{n}$, i.e. $\epsu{d}{Add_n} \le \frac{d}{n}$.
		By Part 2 of Theorem \ref{thm:linear-sketch-uniform} and Corollary~\ref{cor:linear-sketch-uniform-lb} the statement of the theorem follows.

%	W.l.o.g we can assume that this matrix is diagonalized and is in the standard form $(I_d, M')$ where $I_d$ is a $d \times d$ identity matrix and $M'$ is a $d \times (n - d)$-matrix.
%	Clearly, any linear combination of more than two rows of $M$ has Hamming weight greater than two just from the contribution of the first $d$ coordinates.
%	Thus, we have $|W_2 \cap \mathcal A_d| \le \binom{d}{2}$ and the overall Fourier weight in the subspace is at most $\frac{\binom{d}{2}}{n^2} \le \frac{d^2}{2 n^2}$.
%	Hence, $Add_n$ isn't $\epsilon$-concentrated on any affine subspace of dimension less than $\sqrt{2 \epsilon} n$ and the theorem follows from Lemma~\ref{lem:rand-sketch-complexity}.
\end{proof}

\subsection{Symmetric functions}\label{sec:symmetric}
A function $f:\ftwo^n \to \ftwo$ is symmetric if it can be expressed as $g(\|x\|_0)$ for some function $g : [0,n] \to \ftwo$.
We give the following lower bound for symmetric functions:
\begin{theorem}[Lower bound for symmetric functions]
	For any symmetric function $f \colon \ftwo^n \to \ftwo$ that isn't $(1 - \epsilon)$-concentrated on $\{\emptyset , \{1, \dots, n\}\}$:
	$$\distlu{\epsilon/8}(f) \ge \frac{n}{2e}, \quad\quad\quad  \distcu{\Theta(\frac{1 - \epsilon}{n})}(\fplus{}) \ge \frac{n}{2e}. $$
\end{theorem}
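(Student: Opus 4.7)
The plan is to invoke Part 2 of Theorem~\ref{thm:linear-sketch-uniform} together with Corollary~\ref{cor:linear-sketch-uniform-lb}, applied with $d = \lceil n/(2e) \rceil - 1$; both the sketching and the communication lower bounds reduce to showing $\epsu{d}{f} \le 1 - \epsilon/2$. For sketching, using $\sqrt{1-x} \le 1 - x/2$ gives tolerated error at least $(1 - \sqrt{1-\epsilon/2})/2 \ge \epsilon/8$; for communication, Corollary~\ref{cor:linear-sketch-uniform-lb} with $\theta = 1 - \epsilon/2$ yields error $\Theta((1-\epsilon)/n)$ once the factor $1/(1 - 1/(2e))$ is absorbed into the constant.

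Since $f$ is symmetric, write $a_k^2 = \hat f(S)^2$ for $|S|=k$ and $N_k(A) = |\{S \in A : |S| = k\}|$. Parseval gives $\sum_k \binom{n}{k} a_k^2 = 1$, and the hypothesis gives $a_0^2 + a_n^2 \le 1 - \epsilon$. Splitting $\sum_{S \in A}\hat f(S)^2$ into the contribution from $k\in\{0,n\}$ (where $N_0(A) = 1$ and $N_n(A) \le 1$) and the remainder reduces matters to the following key combinatorial claim: for every $d$-dimensional $A \le \ftwo^n$ and every $k \in \{1,\ldots,n-1\}$, $N_k(A) \le \binom{n}{k}/2$. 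Granted this,
\[
\sum_{S \in A}\hat f(S)^2 \le (a_0^2 + a_n^2) + \tfrac12 \sum_{k=1}^{n-1} \binom{n}{k}a_k^2 = \tfrac12(1 + a_0^2 + a_n^2) \le 1 - \epsilon/2.
\]

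I would prove the key claim by splitting on $k$. In the ``entropy range'' where $\binom{n}{k} \ge 2^{d+1}$, the trivial bound $N_k(A) \le |A| = 2^d$ already gives $N_k(A) \le \binom{n}{k}/2$; via $\binom{n}{k} \ge 2^{n H(k/n)}/\mathrm{poly}(n)$, this handles every $k/n$ bounded away from $0$ and $1$ by the constant $H^{-1}(1/(2e))$. For $k = 1$, the weight-$1$ codewords of $A$ are linearly independent standard basis vectors, so $N_1(A) \le d \le n/(2e) \le n/2$. For $k = 2$, the weight-$2$ codewords $e_i + e_j \in A$ induce an equivalence relation on $[n]$ with at least $n-d$ classes; by convexity of $\binom{\cdot}{2}$, $N_2(A) \le \binom{d+1}{2}$, well below $\binom{n}{2}/2$. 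For remaining small $k \ge 3$ I would apply the level-$k$ inequality of Kahn--Kalai--Linial to the indicator $\1_{A^\perp}$, whose mean is $\mu = 2^{-d}$ and whose Fourier coefficients satisfy $\widehat{\1_{A^\perp}}(S) = 2^{-d}\1[S \in A]$; this yields $N_k(A) \le (2ed \ln 2/k)^k$, which is below $\binom{n}{k}/2 \ge (n/k)^k/2$ precisely because $d \le n/(2e)$. The case $k$ close to $n$ reduces to small $k$ via the complement $v \mapsto v + [n]$: if $[n] \in A$ then $N_k(A) = N_{n-k}(A)$, and otherwise one works inside the subspace spanned by $A$ and the all-ones vector, of dimension at most $d+1 = \lceil n/(2e)\rceil$.

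The main technical obstacle will be aligning the constants across this case split so that $N_k(A) \le \binom{n}{k}/2$ holds uniformly for every $k \in [1,n-1]$: the value $1/(2e)$ is calibrated precisely so that the level-$k$ inequality and the entropy bound together leave no gap, and verifying this at the crossover values of $k$ (where neither bound is individually tight) will require careful bookkeeping.
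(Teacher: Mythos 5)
Your proposal is correct in outline and its overall reduction is the same as the paper's: use the symmetry of $f$ to spread each level's Fourier weight uniformly, bound $\epsu{d}{f}$ for $d \approx n/(2e)$ by counting weight-$k$ vectors inside a $d$-dimensional subspace, and then invoke Part 2 of Theorem~\ref{thm:linear-sketch-uniform} and Corollary~\ref{cor:linear-sketch-uniform-lb}. Where you genuinely diverge is the counting step. The paper's Lemma~\ref{lem:linear-subspace-hamming-weight-intersection} gets a stronger and uniform bound, $|W_k \cap \mathcal A_d| / |W_k| \le ed/n$ for every $k \in [n-1]$, by a two-line argument: put a basis of $\mathcal A_d$ in systematic form $(I_d, M')$, observe that a combination of $i$ basis rows has at least $i$ ones among the first $d$ coordinates, hence $|W_k \cap \mathcal A_d| \le \sum_{i \le k} \binom{d}{i} \le (ed/k)^k$, and compare with $\binom{n}{k} \ge (n/k)^k$ (with a complementation/trivial-cardinality argument for $k > d$). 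You instead aim only for the weaker bound $N_k(A) \le \binom{n}{k}/2$ and assemble it from five cases: the trivial $2^d$ bound in the entropy range, bespoke arguments for $k=1,2$, the level-$k$ (KKL) inequality applied to $\mathbf{1}_{A^\perp}$ for small $k \ge 3$, and complementation near $k = n$. Your cases do cover $[1, n-1]$ with room to spare (the level-$k$ inequality is valid up to $k \le 2d\ln 2 \approx 0.25n$ while the entropy range already takes over around $k \approx 0.03n$), so the crossover bookkeeping you worry about is less delicate than you fear; but the whole level-$k$ machinery is avoidable, and the paper's simpler argument also yields the per-level factor $ed/n$ rather than $1/2$, which is what lets it write the error as $\frac{\epsilon}{4}(1 - \frac{ed}{n})$. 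One small flag on your final step: plugging $\theta = 1 - \epsilon/2$ into Corollary~\ref{cor:linear-sketch-uniform-lb} gives error parameter $\frac{1-\theta}{4(n-d)} = \frac{\epsilon}{8(n-d)} = \Theta(\epsilon/n)$, not $\Theta((1-\epsilon)/n)$; the factor $(1-\epsilon)$ does not arise the way you describe. For constant $\epsilon$ this is immaterial (and the paper's own concluding sentence, ``taking $\theta = \epsilon/8$,'' is similarly loose), but as written your justification of the stated error regime is an arithmetic slip rather than a derivation.
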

\begin{proof}
	First we prove an auxiliary lemma. Let $W_k$ be the set of all vectors in $\ftwo^n$ of Hamming weight $k$.
	\begin{lemma}\label{lem:linear-subspace-hamming-weight-intersection}
		For any $d \in [n/2]$, $k \in [n-1]$ and any $d$-dimensional subspace $\mathcal A_d \le \ftwo^n$:
		\begin{align*}
		\frac{|W_k \cap \mathcal A_d|}{|W_k|} \le \left(\frac{e d}{n}\right)^{min(k, n-k, d)} \le \frac{ed}{n}.
		\end{align*}
	\end{lemma}
	\begin{proof}
		Fix any basis in $\mathcal A_d$ and consider the matrix $M \in \ftwo^{d \times n}$ composed of the basis vectors as rows.
		W.l.o.g we can assume that this matrix is diagonalized and is in the standard form $(I_d, M')$ where $I_d$ is a $d \times d$ identity matrix and $M'$ is a $d \times (n - d)$-matrix.
		Clearly, any linear combination of more than $k$ rows of $M$ has Hamming weight greater than $k$ just from the contribution of the first $d$ coordinates.
		Thus, we have $|W_k \cap \mathcal A_d| \le \sum_{i = 0}^k \binom{d}{i}$.
		
		For any $k \le d$ it is a standard fact about binomials that $\sum_{i = 0}^k \binom{d}{i} \le \left(\frac{e d}{k}\right)^k$.
		On the other hand, we have $|W_k| = \binom{n}{k} \ge (n/k)^k$.
		Thus, we have $\frac{|W_k \cap \mathcal A_d|}{|W_k|} \le \left(\frac{ed}{n}\right)^k$
		and hence for $1 \le k \le d$ the desired inequality holds.
		
		If $d < k$ then consider two cases.
		Since $d \le n/2$ the case $n-d \le k \le n - 1$ is symmetric to $1 \le k \le d$.
		If $d < k < n- d$ then we have $|W_k| > |W_d| \ge (n/d)^d$ and $|W_k \cap \mathcal A_d| \le 2^d$ so that the desired inequality follows.\qedhere
	\end{proof}
	
	Any symmetric function has its spectrum distributed uniformly over Fourier coefficients of any fixed weight.
	Let $w_i = \sum_{S \in W_i} \hat f^2(S)$.
	By the assumption of the theorem we have $\sum_{i = 1}^{n - 1} w_i \ge \epsilon$.
	Thus, by Lemma~\ref{lem:linear-subspace-hamming-weight-intersection} any linear subspace $\mathcal A_d$ of dimension at most $d \le n/2$ satisfies that: 
	\begin{align*}
	\sum_{S \in \mathcal A_d} f^2(S) &\le \hat f^2(\emptyset) + \hat f^2(\{1, \dots, n\}) + \sum_{i = 1}^{n - 1} w_i \frac{|W_i \cap \mathcal A_d|}{|W_i|} \\
	&\le \hat f^2(\emptyset) + \hat f^2(\{1, \dots, n\}) + \sum_{i =1}^{n -1} w_i \frac{ed}{n} \\ 
	&\le (1 - \epsilon) + \epsilon \frac{ed}{n}.
	\end{align*}
	
	Thus, $f$ isn't $1 - \epsilon(1- \frac{ed}{n})$-concentrated on any $d$-dimensional linear subspace, i.e. $\epsu{d}{f} <  1 - \epsilon(1- \frac{ed}{n})$.
	By Part 2 of Theorem \ref{thm:linear-sketch-uniform} this implies that $f$ doesn't have randomized sketches of dimension at most $d$
	which err with probability less than:
	\begin{align*}
	\frac{1}{2} - \frac{\sqrt{1 - \epsilon(1- \frac{ed}{n})}}{2} \ge \frac{\epsilon}{4}\left(1 - \frac{ed}{n}\right) \ge \frac{\epsilon}{8}
	\end{align*}
	where the last inequality follows by the assumption that $d \le \frac{n}{2e}$.
The communication complexity lower bound follows by Corollary~\ref{cor:linear-sketch-uniform-lb} by taking $\theta = \epsilon/8$.

\end{proof}

\section{Turnstile streaming algorithms over $\ftwo$}\label{sec:streaming}
Let $e_i$ be the standard unit vector in $\ftwo^n$.
In the turnstile streaming model the input $x \in \ftwo^n$ is represented as a stream $\sigma = (\sigma_1, \sigma_2, \dots)$ where $\sigma_i \in \{e_1, \dots, e_n\}$.
For a stream $\sigma$ the resulting vector $x$ corresponds to its frequency vector $\fr \sigma \equiv \sum_i \sigma_i$.
Concatenation of two streams $\sigma$ and $\tau$ is denoted as $\sigma \circ \tau$.

\subsection{Random streams}\label{sec:random-streaming}
We consider the following two natural models of random streams over $\ftwo$:

\textbf{Model 1.} In the first model we start with $x \in \ftwo^n$ that is drawn from the uniform  distribution over $\ftwo^n$ and then apply a uniformly random update $y \sim U(\ftwo^n)$ obtaining $x + y$.
In the streaming language this corresponds to a stream $\sigma = \sigma_1 \circ \sigma_2$ where $\fr \sigma_1  \sim U(\ftwo^n)$ and $ \fr \sigma_2 \sim U(\ftwo^n)$.
A specific example of such stream would be one where for both $\sigma_1$ and $\sigma_2$ we flip an unbiased coin to decide whether or not to include a vector $e_i$ in the stream for each value of $i$.
The expected length of the stream in this case is $n$.

\textbf{Model 2.} In the second model we consider a stream $\sigma$ which consists of uniformly random updates.
Let $\sigma_i = e_{r(i)}$ where $r(i) \sim U([n])$.
This corresponds to each update being a flip in a coordinate of $x$ chosen uniformly at random.
This model is equivalent to the previous model but requires longer streams to mix.
Using coupon collector's argument such streams of length $\Theta(n \log n)$ can be divided into two substreams $\sigma_1$ and $\sigma_2$ such that with high probability both $\fr \sigma_1$ and $\fr \sigma_2$ are uniformly distributed over $\ftwo^n$ and $\sigma = \sigma_1 \circ \sigma_2$.

\begin{theorem}\label{thm:random-streaming}
Let $f \colon \ftwo^n \to \ftwo$ be an arbitrary function. 
In the two random streaming models for generating $\sigma$ described above any algorithm that computes $f(\fr \sigma)$ with probability at least $1 - \Theta(1/n)$ in the end of the stream has to use space that is at least $\distlu{1/3}(f)$.
\end{theorem}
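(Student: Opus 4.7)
The plan is to reduce random streaming to one-way communication under the uniform distribution and then invoke Theorem~\ref{thm:linear-sketch-uniform-main}. Let $\mathcal A$ be any streaming algorithm using space $s$ that succeeds with probability $1 - \Theta(1/n)$ on streams drawn from one of the two models. From $\mathcal A$ I will construct a one-way distributional protocol of cost $s$ for $\fplus{}$ under the uniform distribution $U$ on $\ftwo^n \times \ftwo^n$, which by Theorem~\ref{thm:linear-sketch-uniform-main} forces $s \ge \distlu{1/3}(f)$.

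For Model~1 the reduction is direct. Alice, holding uniformly random $x$, samples (using private randomness) a substream $\sigma_1$ from the marginal distribution of the first half of Model~1 conditioned on $\fr \sigma_1 = x$; symmetrically, Bob samples $\sigma_2$ conditioned on $\fr \sigma_2 = y$. Because $(x,y)$ is uniform, the concatenation $\sigma_1\circ\sigma_2$ has exactly the Model~1 distribution. Alice feeds $\sigma_1$ into $\mathcal A$, transmits the resulting $s$-bit memory state to Bob, who continues the simulation on $\sigma_2$ and outputs the final answer. This answer equals $f(\fr \sigma_1 + \fr \sigma_2) = f(x+y) = \fplus{}(x,y)$ with probability $\ge 1 - \Theta(1/n)$, so $\distcu{\Theta(1/n)}(\fplus{}) \le s$.

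For Model~2 the structure is the same, but the stream is a single sequence of i.i.d.\ uniform updates of length $L = \Theta(n \log n)$ rather than two independent pieces. Here I invoke the coupon-collector argument noted above Theorem~\ref{thm:random-streaming}: with probability $1 - o(1/n)$ one can split the stream as $\sigma = \sigma_1 \circ \sigma_2$ so that both $\fr \sigma_1$ and $\fr \sigma_2$ are (essentially) uniformly distributed over $\ftwo^n$. Hence, as in Model~1, Alice and Bob can jointly produce a stream whose distribution matches Model~2 while conditioning on their inputs, and simulate $\mathcal A$ with a single $s$-bit message. The extra $o(1/n)$ failure probability from the splitting step is absorbed into the overall $\Theta(1/n)$ error, again giving $\distcu{\Theta(1/n)}(\fplus{}) \le s$.

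Combining either reduction with Theorem~\ref{thm:linear-sketch-uniform-main} yields $s \ge \distlu{1/3}(f)$, completing the proof. The only genuinely delicate point is the Model~2 split: one must verify that after $\Theta(n \log n)$ random updates a prefix/suffix pair with uniform frequency vectors can be exhibited, which is precisely the justification the authors have already given for the stream length requirement, and is the reason Model~2 needs a $\log n$ factor more updates than Model~1. Everything else is bookkeeping on probabilities and the standard streaming-to-one-way simulation.
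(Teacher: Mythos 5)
Your proposal is correct and follows essentially the same route as the paper: split the random stream into $\sigma_1 \circ \sigma_2$ with uniform frequency vectors, let Alice and Bob simulate the streaming algorithm with the $s$-bit memory state as the one-way message, and apply Theorem~\ref{thm:linear-sketch-uniform-main} to get $s \ge \distcu{\Theta(1/n)}(\fplus{}) \ge \distlu{1/3}(f)$. The only difference is that you spell out the conditional sampling of $\sigma_1,\sigma_2$ and the handling of the Model~2 splitting error, details the paper leaves implicit.
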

\begin{proof}
The proof follows directly from Theorem~\ref{thm:linear-sketch-uniform-main} as in both models we can partition the stream into $\sigma_1$ and $\sigma_2$ such that $\fr \sigma_1$ and $\fr \sigma_2$ are both distributed uniformly over $\ftwo^n$.
We treat these two frequency vectors as inputs of Alice and Bob in the communication game.
Since communication $\distcu{\Theta(1/n)}(\fplus{}) \ge \distlu{1/3}(f)$ is required no streaming algorithm with less space exists as otherwise Alice would transfer its state to Bob with less communication.
\end{proof}

\subsection{Adversarial streams}\label{sec:adversarial-streaming}
We  now show that any randomized turnstile streaming algorithm for computing $f : \ftwo^n \to \ftwo$ with error probability $\delta$ has to use space that is at least $\rl{6\delta}(f) - O(\log n + \log(1/\delta))$ under adversarial sequences of updates.
The proof is based on the recent line of work that shows that this relationship holds for real-valued sketches~\cite{G08,LNW14,AHLW16}.
The proof framework developed by~\cite{G08,LNW14,AHLW16} for real-valued sketches consists of two steps. First, a turnstile streaming algorithm is converted into a path-independent stream automaton (Definition~\ref{def:path-ind-automaton}). Second, using the theory of modules and their representations it is shown that such automata can always be represented as linear sketches.
We observe that the first step of this framework can be left unchanged under $\ftwo$.
However, as we show the second step can be significantly simplified as path-independent automata over $\ftwo$ can be directly seen as linear sketches without using module theory.
Furthermore, since we are working over $\ftwo$ we also avoid the $O(\log m)$ factor loss in the reduction between path independent automata and linear sketches that is present in~\cite{G08}.

We use the following abstraction of a \textit{stream automaton} from~\cite{G08,LNW14,AHLW16} adapted to our context to represent general turnstile streaming algorithms over $\ftwo$.
\begin{definition}[Deterministic Stream Automaton]\label{def:stream-automaton}
A \emph{deterministic stream automaton} $\cA$ is a Turing machine that uses two tapes, an undirectional read-only input tape and a bidirectional work tape.
The input tape contains the input stream $\sigma$.
After processing the input, the automaton writes an output, denoted as $\phi_\cA(\sigma)$, on the work tape.
A configuration (or state) of $\cA$ is determined by the state of its finite control, head position, and contents of the work tape.
The computation of $\cA$ can be described by a transition function $\oplus_\cA : C \times \ftwo \to C$, where $C$ is the set of all possible configurations.
For a configuration $c \in C$ and a stream $\sigma$, we denote by $c \oplus_\cA \sigma$ the configuration of $\cA$ after processing $\sigma$ starting from the initial configuration $c$.
The set of all configurations of $\cA$ that are reachable via processing some input stream $\sigma$ is denoted as $C(\cA)$.
The space of $\cA$ is defined as $\cS(\cA) = \log |C(\cA)|$.
\end{definition}

We say that a deterministic stream automaton computes a function $f:\ftwo^n \to \ftwo$ over a distribution $\Pi$ if $\Pr_{\sigma \sim \Pi}[\phi_\cA(\sigma) = f(\fr \sigma)] \ge 1- \delta$.
\begin{definition}[Path-independent automaton]\label{def:path-ind-automaton}
An automaton $\cA$ is said to be \emph{path-independent} if for any configuration $c$ and any input stream $\sigma$, $c \oplus_\cA \sigma$ depends only on $\fr \sigma$ and $c$.
\end{definition}

\begin{definition}[Randomized Stream Automaton]
A \emph{randomized stream automaton}  $\cA$ is a deterministic automaton with an additional tape for the random bits.
This random tape is initialized with a random bit string $R$ before the automaton is executed.
During the execution of the automaton this bit string is used in a bidirectional read-only manner while the rest of the execution is the same as in the deterministic case.
A randomized automaton $\cA$ is said to be path-independent if for each possible fixing of its randomness $R$ the deterministic automaton $\cA_R$ is path-independent.
The space complexity of $\cA$ is defined as $\cS(\cA) = \max_R (|R| + \cS(\cA_R))$.
\end{definition}

Theorems 5 and 9 of~\cite{LNW14} combined with the observation in Appendix A of~\cite{AHLW16} that guarantees path independence yields the following:
\begin{theorem}[Theorems 5 and 9 in~\cite{LNW14} +~\cite{AHLW16}]
Suppose that a randomized stream automaton $\cA$ computes $f$ on any stream with probability at least $1 - \delta$. For an arbitrary distribution $\Pi$ over streams there exists a deterministic\footnote{We note that~\cite{LNW14} construct $\cB$ as a randomized automaton in their Theorem 9 but it can always be made deterministic by fixing the randomness that achieves the smallest error.}  path independent stream automaton $\cB$ that computes $f$ with probability $1 - 6 \delta$ over $\Pi$ such that $\cS(\cB) \le \cS(\cA) + O(\log n +\log (1/\delta))$.
\end{theorem}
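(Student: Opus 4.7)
The plan is to combine two reductions, following the framework of~\cite{LNW14} with the path-independence patch from Appendix A of~\cite{AHLW16}. First, I would derandomize $\cA$ against the distribution $\Pi$. Since $\Pr_R[\phi_\cA(\sigma)=f(\fr\sigma)]\ge 1-\delta$ for every individual stream $\sigma$, swapping expectations gives $\E_{R}[\Pr_{\sigma\sim\Pi}[\phi_{\cA_R}(\sigma)\ne f(\fr\sigma)]]\le\delta$. Markov's inequality then yields a fixed choice of randomness $R^*$ with $\Pr_{\sigma\sim\Pi}[\phi_{\cA_{R^*}}(\sigma)\ne f(\fr\sigma)]\le 2\delta$, producing a deterministic automaton $\cA'=\cA_{R^*}$ of space at most $\cS(\cA)+|R^*|=\cS(\cA)+O(\log(1/\delta))$ (any extra $\log n$ absorbs standard encoding overhead).

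Next I would enforce path-independence. The obstruction is that two streams $\sigma,\tau$ with $\fr\sigma=\fr\tau$ may leave $\cA'$ in different configurations even though they should produce the same output. To kill this, I would precompose with a random prefix $\rho$ whose frequency vector $\fr\rho$ is uniform over $\ftwo^n$: for any two streams $\sigma,\tau$ with $\fr\sigma=\fr\tau$, the random streams $\rho\circ\sigma$ and $\rho\circ\tau$ have identically distributed frequency vectors (both uniform), so the original any-stream guarantee on $\cA$ forces the distributions of $\phi_{\cA}(\rho\circ\sigma)$ and $\phi_{\cA}(\rho\circ\tau)$ to concentrate on the common value $f(\fr\sigma)$. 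Combining this precomposition with the derandomization (another Markov step on the joint randomness, yielding the factor $6$) fixes a specific $\rho^*$ and $R^*$ such that error over $\Pi$ is at most $6\delta$.

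To turn this into an honest path-independent automaton $\cB$ rather than a distributional guarantee, I would canonicalize on the fly. Pick, for each reachable frequency vector $v$, a canonical stream $\pi_v$ with $\fr\pi_v=v$, and define $\cB$'s state after processing $\sigma$ to be the state of $\cA'$ after processing $\rho^*\circ\pi_{\fr\sigma}$. The transition function is built update-by-update: on seeing $e_i$ from a state encoding $\rho^*\circ\pi_v$, move to the state encoding $\rho^*\circ\pi_{v+e_i}$, which is computable within $\cA'$ because one only needs to process a short fixed substream determined by $v$ and $v+e_i$. By construction, the state of $\cB$ after reading any stream $\sigma$ depends only on $\fr\sigma$, so $\cB$ is path-independent. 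The reachable state set is a subset of the reachable configurations of $\cA'$ on streams of the form $\rho^*\circ\pi_v$, so $\cS(\cB)\le\cS(\cA')+O(\log n)=\cS(\cA)+O(\log n+\log(1/\delta))$.

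The main obstacle is ensuring the last step is genuinely implementable in bounded space: one must argue that the canonical paths $\pi_v$ can be chosen so that the transition from the $\pi_v$-state to the $\pi_{v+e_i}$-state requires only a small bounded number of simulated updates (e.g.\ by choosing $\pi_v$ to be the stream of standard unit vectors in the support of $v$, so that a flip $e_i$ either appends or cancels one symbol). This keeps transitions computable on the work tape without blowing up space, and it confirms that path-independence can be enforced without losing more than the additive $O(\log n+\log(1/\delta))$ promised in the statement.
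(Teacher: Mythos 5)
The statement you are proving is not actually proved in the paper: it is imported wholesale from Theorems 5 and 9 of the cited work of Li, Nguyen and Woodruff, together with the path-independence observation in Appendix~A of Ai, Hu, Li and Woodruff, so the benchmark is that known argument. Your first step (fixing the randomness by Markov against $\Pi$) is fine, but the two steps that carry the real content have genuine gaps. First, the random-prefix argument does not show what you claim: if $\rho$ has a uniformly random (generally nonzero) frequency vector, then the any-stream guarantee makes $\phi_{\cA}(\rho\circ\sigma)$ concentrate on $f(\fr\rho+\fr\sigma)$, not on $f(\fr\sigma)$; to compare two streams $\sigma,\tau$ with $\fr\sigma=\fr\tau$ one has to work with \emph{zero-frequency} continuations, which is exactly the device used in the original proof to define an equivalence relation on frequency vectors.

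Second, and more seriously, the ``canonicalize on the fly'' automaton is not implementable within the claimed space. When an update $e_i$ arrives and $v_i=1$, the configuration of $\cA'$ after $\rho^*\circ\pi_{v+e_i}$ is a configuration on a \emph{different} computation path: a stream automaton reads its input one-way and is not assumed reversible, so there is no way to ``cancel one symbol'' and no guarantee that the target configuration is reachable from the current one by feeding further updates. Computing it honestly would require re-running $\cA'$ from scratch on $\rho^*\circ\pi_{v+e_i}$, which requires remembering $v$ itself --- an $n$-bit frequency vector --- which is precisely the space blow-up the theorem forbids. This difficulty is the heart of the matter, and the cited proof circumvents it by a different construction altogether: one defines an equivalence relation on frequency vectors through the automaton's behavior under random zero-frequency continuations, shows the classes form cosets of a subgroup (a submodule in the integer case, a linear subspace over $\ftwo$ here), and builds the path-independent automaton $\cB$ abstractly on these classes, so that its number of states is bounded by that of $\cA$ without simulating any canonical streams; the Appendix~A observation of the later paper is then what guarantees that this abstract automaton is genuinely path independent. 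The $6\delta$ loss and the $O(\log n+\log(1/\delta))$ additive space term also come out of that chain of reductions rather than from the bookkeeping you sketch. As it stands, your proposal would need the path-independent automaton to be constructed by a fundamentally different mechanism, so the proof is incomplete.
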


The rest of the argument below is based on the work of Ganguly~\cite{G08} adopted for our needs.
Since we are working over a finite field we also avoid the $O(\log m)$ factor loss in the reduction between path independent automata and linear sketches that is present in Ganguly's work.

Let $A_n$ be a path-independent stream automaton over $\ftwo$ and let $\oplus$ abbreviate $\oplus_{A_n}$.
%Let $C(A_n)$ denote the space of all its configurations.
Define the function $\ast: \ftwo^n\times C(A_n) \rightarrow C(A_n)$ as: $x \ast a = a \oplus \sigma\text{, where } freq(\sigma)=x.$
Let $o$ be the initial configuration of $A_n$.
The \textit{kernel}  $M_{A_n}$ of $A_n$ is defined as $M_{A_n} = \{x \in \ftwo^n : x \ast o = 0^n \ast o\}$.
\begin{proposition}
The kernel $M_{A_n}$ of a path-independent automaton $A_n$ is a linear subspace of $\ftwo^n$.
\end{proposition}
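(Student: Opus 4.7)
The plan is to verify the two axioms of an $\ftwo$-linear subspace, namely that $M_{A_n}$ contains $0^n$ and is closed under vector addition; closure under scalar multiplication is automatic over $\ftwo$ since the only scalars are $0$ and $1$. Both properties should follow from path-independence combined with elementary manipulations of the stream semantics.

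First I would check that $0^n \in M_{A_n}$. The empty stream $\epsilon$ has frequency vector $0^n$ and trivially leaves any configuration fixed, so $o \oplus_{A_n} \epsilon = o$. By path-independence, every stream $\sigma$ with $\fr \sigma = 0^n$ also satisfies $o \oplus_{A_n} \sigma = o$, which simultaneously confirms that $\ast$ is well-defined and yields $0^n \ast o = o$. Trivially $0^n \ast o = 0^n \ast o$, so $0^n \in M_{A_n}$.

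Next I would establish closure under addition. Given $x, y \in M_{A_n}$, pick streams $\sigma_x, \sigma_y$ with $\fr \sigma_x = x$ and $\fr \sigma_y = y$. Since each update is a unit vector $e_i \in \ftwo^n$ and frequencies add in $\ftwo^n$, the concatenation satisfies $\fr(\sigma_x \circ \sigma_y) = x + y$. Because $\oplus_{A_n}$ processes its input one symbol at a time, $o \oplus_{A_n} (\sigma_x \circ \sigma_y) = (o \oplus_{A_n} \sigma_x) \oplus_{A_n} \sigma_y = (x \ast o) \oplus_{A_n} \sigma_y$. The assumption $x \in M_{A_n}$ together with the first step gives $x \ast o = o$, so the right-hand side equals $o \oplus_{A_n} \sigma_y = y \ast o$, which is $o$ by $y \in M_{A_n}$. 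Thus $(x+y) \ast o = o = 0^n \ast o$, placing $x + y$ in $M_{A_n}$.

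The proof is short and I do not expect a serious obstacle. The only point that demands care is the double invocation of path-independence: once to ensure that $\ast$ is well-defined regardless of which representing stream is chosen, and once to replace $(x \ast o) \oplus_{A_n} \sigma_y$ by $o \oplus_{A_n} \sigma_y$ after deducing $x \ast o = o$. As long as these two uses are stated cleanly, the verification of both subspace axioms is mechanical.
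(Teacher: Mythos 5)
Your proof is correct and follows essentially the same route as the paper: decompose a stream of frequency $x+y$ as a concatenation, use path-independence to write $(x+y)\ast o = x \ast (y \ast o)$, and conclude from membership of $x$ and $y$ in the kernel. The only difference is that you spell out the well-definedness of $\ast$ and the identity $0^n \ast o = o$ (via the empty stream), which the paper leaves implicit.
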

\begin{proof}
For $x,y \in M_{A_n}$ by path independence $(x + y) \ast o = x \ast (y \ast o) = 0^n \ast o$ so $x + y \in M_{A_n}$.
\end{proof}
Since $M_{A_n} \le \ftwo^n$ the kernel partitions $\ftwo^n$ into cosets of the form $x + M_{A_n}$.
Next we show that there is a one to one mapping between these cosets and the states of $A_n$.

\begin{proposition}
For $x, y \in \ftwo^n$ and a path independent automaton $A_n$ with a kernel $M_{A_n}$ it holds that $x \ast o = y \ast o$ if and only if $x$ and $y$ lie in the same coset of $M_{A_n}$.
\end{proposition}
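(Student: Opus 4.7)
The plan is to reduce the statement to two basic consequences of path independence and then close the argument by a short $\ftwo$-linear computation. The two facts I would extract first are (i) the zero-action identity $0^n \ast o = o$, and (ii) the additive-action identity $(u+v) \ast o = u \ast (v \ast o)$ for all $u, v \in \ftwo^n$. Both should follow immediately once one identifies the right witness streams.

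For (i), the empty stream has frequency vector $0^n$ and trivially leaves $o$ unchanged, so by the definition of $\ast$ and path independence every stream of frequency $0^n$ must send $o$ to $o$; hence $0^n \ast o = o$, and in particular the kernel equals $\{x : x \ast o = o\}$. For (ii), I would pick any streams $\sigma_v, \sigma_u$ with $\fr \sigma_v = v$ and $\fr \sigma_u = u$; then $\sigma_v \circ \sigma_u$ has frequency $u+v$, and evaluating $A_n$ on it starting from $o$ gives $(u+v) \ast o$ by path independence on the one hand and $u \ast (v \ast o)$ by reading the concatenation sequentially on the other.

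With (i) and (ii) in hand, both directions reduce to one-line computations using $x + x = 0^n$ over $\ftwo$. For the forward direction, assume $x \ast o = y \ast o$; then $(x+y) \ast o = x \ast (y \ast o) = x \ast (x \ast o) = (x+x) \ast o = 0^n \ast o$, placing $x+y$ in $M_{A_n}$ by definition, which is exactly the statement that $x$ and $y$ lie in the same coset. For the backward direction, set $z = x + y \in M_{A_n}$, so that $z \ast o = 0^n \ast o = o$ by (i); since $y = x + z$ over $\ftwo$, identity (ii) gives $y \ast o = (x+z) \ast o = x \ast (z \ast o) = x \ast o$.

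I do not foresee any real obstacle. The content of the proposition is essentially that the fibers of the action map $x \mapsto x \ast o$ are the cosets of its kernel, and the usual additive-group argument goes through verbatim over $\ftwo$ as soon as (i) and (ii) are established; the only thing that needs mild care is to phrase (ii) so that it invokes path independence correctly when applied from the non-initial configuration $v \ast o$, which is handled by reading path independence as a statement about the configurations reachable from an arbitrary starting state.
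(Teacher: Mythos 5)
Your proof is correct and follows essentially the same route as the paper's: both rest on the path-independence identity $(x+y)\ast o = x\ast(y\ast o)$ together with $x+x=0^n$ over $\ftwo$, reducing the claim to the definition of the kernel $M_{A_n}$. Your version merely spells out the two directions separately and adds the (true but not strictly needed) fact $0^n \ast o = o$, where the paper's terse ``iff'' chain works directly with $0^n\ast o$.
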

\begin{proof}
By path independence $x \ast o = y \ast o$ iff $x \ast (x \ast o) = x \ast (y \ast o)$ or equivalently $0^n \ast o = (x + y) \ast o$. The latter condition holds iff $x + y \in M_{A_n}$ which is equivalent to $x$ and $y$ lying in the same cost of $M_{A_n}$.
\end{proof}
The same argument implies that the the transition function of a path-independent automaton has to be linear since $(x + y) \ast o = x \ast (y \ast o)$. Combining these facts together we conclude that a path-independent automaton has at least as many states as the best  deterministic $\ftwo$-sketch for $f$ that succeeds with probability at least $1 - 6\delta$ over $\Pi$ (and hence the best randomized sketch as well). Putting things together we get:

\begin{theorem}\label{thm:adversarial-streaming}
	Any randomized streaming algorithm that computes $f:\ftwo^n \to \ftwo$ under arbitrary updates over $\ftwo$ with error probability at least $1 - \delta$ has space complexity at least $\rl{6\delta}(f) - O(\log n + \log(1/\delta))$.
\end{theorem}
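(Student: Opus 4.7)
The plan is to pipe the randomized streaming algorithm through the reduction to a path-independent automaton and then observe that, over $\ftwo$, such an automaton is already a deterministic linear sketch (with a post-processing function), so its number of reachable configurations must be at least $2^{\rl{6\delta}(f)}$.

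First, I would apply the reduction of~\cite{LNW14,AHLW16} stated above to the given randomized automaton $\cA$: for the distribution $\Pi$ that puts all its mass on streams of the form $\sigma$ with $\fr\sigma = x$ (ranging over all $x \in \ftwo^n$ via a worst-case $x$), I obtain a deterministic path-independent automaton $\cB$ over $\ftwo$ that computes $f$ with error at most $6\delta$ and satisfies $\cS(\cB)\le \cS(\cA)+O(\log n + \log(1/\delta))$. Since the guarantee holds for every $x$ chosen adversarially, in particular $\cB$ correctly outputs $f(x)$ for every $x \in \ftwo^n$ with probability at least $1-6\delta$ over any internal randomness (which, since $\cB$ is deterministic, collapses to: for every $x$, $\phi_\cB(\sigma)=f(x)$ whenever $\fr\sigma=x$; this is exactly what path-independence plus the reduction buys).

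Next, I would use the two propositions already proved in the text: the kernel $M_\cB \le \ftwo^n$ is a linear subspace and two frequency vectors $x,y \in \ftwo^n$ end up in the same configuration if and only if they lie in the same coset of $M_\cB$. Combining these, the map $\pi\colon \ftwo^n \to \ftwo^n / M_\cB$, $x \mapsto x+M_\cB$, is $\ftwo$-linear, and the reachable configuration $x \ast o$ is a function of $\pi(x)$ alone. Fixing any complementary subspace of $M_\cB$ gives a basis so that $\pi$ can be realized as multiplication by a matrix with $k := n - \dim M_\cB$ rows, i.e.\ as $k$ parities $\chi_{S_1}(x),\ldots,\chi_{S_k}(x)$. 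The output function $\phi_\cB$ factors through $\pi$ and thus yields a deterministic post-processing function $g\colon \ftwo^k \to \ftwo$ such that $g(\chi_{S_1}(x),\ldots,\chi_{S_k}(x)) = f(x)$ for all but a $6\delta$-fraction of $x$. This is a deterministic (hence randomized) $\ftwo$-sketch of size $k$ with error $6\delta$, so $k \ge \rl{6\delta}(f)$.

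Finally, I would combine the two bounds: the number of reachable configurations of $\cB$ is exactly $|\ftwo^n / M_\cB| = 2^k$, so $\cS(\cB) = k \ge \rl{6\delta}(f)$, giving $\cS(\cA) \ge \rl{6\delta}(f) - O(\log n + \log(1/\delta))$. The only subtle point, and the one I would check carefully, is the error translation: the reduction guarantees error $6\delta$ over the distribution $\Pi$, and I want error $6\delta$ on every $x$ to apply the sketching lower bound $\rl{6\delta}(f)$. Since we can choose $\Pi$ adversarially (for instance, concentrated on worst-case $x$) and since $\cB$ is deterministic and path-independent (so its output depends only on $\fr\sigma$), this is immediate, but it is worth stating explicitly rather than glossing over, since randomized vs.\ distributional complexities for $\rl{}$ have to agree via Yao; the path-independence is exactly what makes this agreement harmless here.
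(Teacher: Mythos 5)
The structural half of your argument (kernel is a subspace, cosets of the kernel are in bijection with reachable configurations, hence the path-independent automaton is a $k$-dimensional linear sketch with $2^k$ states) is exactly the paper's route and is fine. The gap is in how you handle the distribution $\Pi$ and the error quantifier. The reduction of~\cite{LNW14,AHLW16} produces, for each fixed $\Pi$, a possibly different deterministic automaton $\cB=\cB_\Pi$ whose only guarantee is error at most $6\delta$ \emph{over} $\Pi$. Choosing $\Pi$ to be a point mass on a worst-case $x$ therefore yields an automaton that only has to be right on that single input, and your claim that ``$\phi_\cB(\sigma)=f(\fr\sigma)$ for every $x$'' does not follow; indeed it cannot be true in general, since a deterministic small-space object correct on \emph{every} input would force space $\dl(f)-O(\log n+\log(1/\delta))$, a bound the randomized streaming algorithm need not pay (think of functions with small $\rl{\delta}$ but maximal Fourier dimension). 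Your later step has the same problem in the other direction: a deterministic sketch that is correct on ``all but a $6\delta$-fraction of $x$'' is a \emph{distributional} sketch, and the inequality you need runs the wrong way --- such a sketch only certifies $k\ge \distl{6\delta}{\mu}(f)$ for that particular $\mu$, which is in general \emph{smaller} than $\rl{6\delta}(f)$; a distributional sketch is not a worst-case randomized sketch.

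The missing idea is Yao's minimax principle applied on the sketching side, which is how the paper's phrase ``and hence the best randomized sketch as well'' should be read. Set $k=\rl{6\delta}(f)-1$; since every randomized sketch of dimension $k$ has worst-case error exceeding $6\delta$, minimax gives a hard input distribution $\mu^*$ under which every \emph{deterministic} sketch of dimension $k$ errs with probability more than $6\delta$. Now choose $\Pi$ to be any stream distribution whose frequency vectors are distributed according to $\mu^*$, run the reduction to get the path-independent $\cB$, and extract from it (via your kernel/coset argument) a deterministic sketch of dimension equal to $\log$ of its number of configurations with error at most $6\delta$ over $\mu^*$. By the choice of $\mu^*$ this dimension must be at least $\rl{6\delta}(f)$, and combining with $\cS(\cB)\le\cS(\cA)+O(\log n+\log(1/\delta))$ finishes the proof. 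With this substitution for your choice of $\Pi$ and the erroneous ``correct on every $x$'' step, the rest of your write-up goes through.
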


\section{Linear threshold functions}\label{sec:ltf}
In this section it will be convenient to represent the domain as $\{0,1\}^n$ rather than $\ftwo^n$.
We define the sign function $sign(x)$ to be $1$ if $x \ge 0$ and $0$ otherwise.
\begin{definition}\label{def:ltf}
	A monotone linear threshold function (LTF) $f \colon \{0,1\} \rightarrow \oo$ is defined by a collection of weights $w_1 \ge w_2 \dots \ge w_n \ge 0$ as follows:
	\begin{align*}
	f(x_1, \dots, x_n) = sign\left(\sum_{i = 1}^n w_i x_i - \theta\right),
	\end{align*}
	where $\theta$ is called the \em{threshold} of the LTF.
	The \em{margin} of the LTF is defined as: 
	\begin{align*}
	m = \min_{x \in \{0,1\}^n} \left|\sum_{i = 1}^n w_i x_i - \theta\right|.
	\end{align*}
\end{definition}

W.l.o.g we can assume that LTFs normalized so that $\sum_{i = 1}^n w_i = 1$. The monotonicity in the above definition is also without loss of generality as for negative weights we can achieve monotonicity by complementing individual bits.

\begin{theorem}~\cite{MO09}\label{thm:mo10}
	There is a randomized linear sketch for LTFs of size $O(\left(\frac{\theta}{m}\right)^2)$.
\end{theorem}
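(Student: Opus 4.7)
The plan is to use an importance-sampling $\mathbb{F}_2$-sketch. Using public randomness, Alice and Bob would draw $k$ independent indices $I_1,\dots,I_k$ from the distribution $\Pr[I=i] = w_i$ (valid because $\sum_i w_i = 1$ after the normalization and $w_i \ge 0$ by monotonicity). Alice sends the $k$ singleton parities $x_{I_1},\dots,x_{I_k}$, giving an $\mathbb{F}_2$-linear sketch of dimension $k$. Bob forms the unbiased estimator $\hat p = \frac{1}{k}\sum_{j=1}^k x_{I_j}$ of $p = \sum_i w_i x_i$ and outputs $1$ iff $\hat p \ge \theta$. Correctness follows from Hoeffding's inequality applied to the i.i.d.\ Bernoulli($p$) summands: $\Pr[|\hat p - p| \ge m] \le 2\exp(-2 k m^2)$, so $k = O(1/m^2)$ samples suffice for constant error.

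To sharpen this to the $O((\theta/m)^2)$ bound stated, I would instead use a refined random-subset scheme where each index $i$ is included in each sampled set $S_j$ independently with probability $q_i = c\,w_i/\theta$ for a small constant $c$, handling ``heavy'' variables with $w_i > \theta/c$ separately via direct singleton parities (any such variable set to $1$ forces $f(x)=1$ by monotonicity, so reading it off decides $f$). Alice sends the $k$ parities $\chi_{S_1}(x),\dots,\chi_{S_k}(x)$. The expected parity $\E_S[\chi_S(x)] = \prod_{i : x_i = 1}(1 - 2q_i)$ behaves as $\exp(-2c\,p/\theta)$ to leading order and therefore has a $\Theta(m/\theta)$ gap between the two sides $p \le \theta - m$ and $p \ge \theta + m$ of the threshold. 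A Chernoff bound then shows that $k = O((\theta/m)^2)$ such random-subset parities let Bob decide $f(x)$ by averaging the $\chi_{S_j}(x)$ values and comparing to a midpoint threshold.

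The main obstacle is verifying that the $\Theta(m/\theta)$ gap holds uniformly over all $x$: the product $\prod_{i : x_i = 1}(1 - 2q_i)$ depends on the specific active set and not only on $p(x)$, so a careful choice of $c$ is needed so that the configuration-dependent deviations from the idealized $e^{-2cp/\theta}$ behavior are dominated by the signal $\Theta(cm/\theta)$ at the threshold. This tradeoff between approximation accuracy (smaller $c$) and signal strength (larger $c$) is what determines the exponent in the final bound and is the technical heart of Montanaro and Osborne's argument; bookkeeping for the heavy variables (whose count is $O(1/\theta)$ after removing them) and showing it is subsumed by the main $O((\theta/m)^2)$ term completes the proof.
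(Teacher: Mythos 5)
First, note that the paper does not prove Theorem~\ref{thm:mo10} at all --- it is imported from~\cite{MO09} --- so there is no internal proof to compare against; your proposal has to stand on its own, and as written it does not. Your first paragraph only yields a sketch of size $O(1/m^2)$, which (since $\theta\le 1$ after normalization) is weaker than the claimed $O((\theta/m)^2)$; you acknowledge this. The refined scheme in the second paragraph is the right kind of idea (parities of random subsets with inclusion probabilities $q_i = c\,w_i/\theta$), but the step you explicitly defer --- that $\prod_{i:x_i=1}(1-2q_i)$ has a uniform $\Theta(m/\theta)$ gap across the threshold for \emph{every} configuration --- is precisely where the proof lives, and with your parameter choices it is false. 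Under your cutoff (``heavy'' means $w_i>\theta/c$), retained variables can have $q_i$ arbitrarily close to, or equal to, $1$. Take two coordinates with $w_i=w_j=\theta/c$: each alone already exceeds $\theta$, so the input $x=e_i+e_j$ has $f(x)=1$, yet $q_i=q_j=1$ makes both coordinates appear in every sampled set, so $\chi_{S}(x)=0$ always and the sketch distribution on $x$ is \emph{identical} to that on $x=0$, where $f=0$; no post-processing can then succeed on both. Even away from this extreme ($q_i=0.9$, say), a pair of active near-heavy variables contributes $(1-2q_i)(1-2q_j)\approx +0.8$, placing an $f=1$ input \emph{above} the product value $\approx e^{-2c(1-m/\theta)}$ of a thin-spread $f=0$ input, so no single threshold on the empirical average works, let alone one resolving a gap as small as $\Theta(m/\theta)$.

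Repairing this forces the heavy cutoff down to $w_i=O(\theta)$ so that all $q_i$ are small, but then two new problems appear that your sketch does not address. First, such heavy variables no longer individually force $f=1$, there can be $\Theta(1/\theta)$ of them, and your bookkeeping claim that ``direct singleton parities'' for them are subsumed by $O((\theta/m)^2)$ fails in general: with all weights equal to $2\theta$ one has $m=\theta$, so $(\theta/m)^2=O(1)$ while there are $\Theta(1/\theta)$ heavy coordinates (for the cutoff you state this particular case could instead be handled by an OR-sketch as in Fact~\ref{prop:l0-bound}, but that rescue is unavailable once the cutoff is lowered). Second, even in the small-$q_i$ regime the configuration-dependent correction in $\log\prod_{i:x_i=1}(1-2q_i)$ is of order $(c/\theta)^2\sum_{i:x_i=1}w_i^2\le c^2(\max_i w_i/\theta)(p/\theta)$, which at the threshold is $\Theta(c^2)$ when $\max_i w_i=\Theta(\theta)$; this is not dominated by the signal $\Theta(cm/\theta)$ for constant $c$, and shrinking $c$ to restore the separation degrades the sample complexity beyond $(\theta/m)^2$. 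So the ``careful choice of $c$'' you appeal to cannot, by itself, close the argument: an additional idea is needed to control heavy coordinates (compare how the paper's own stronger Theorem~\ref{thm:naive-sketching+hashing} first prunes weights via Lemma~\ref{lem:ltf-margin}, tests $\|x\|_0$, and hashes before invoking a sketch). As it stands, the proposal identifies the correct mechanism but leaves the theorem unproven.
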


Below we prove the following conjecture.

\begin{conj}~\cite{MO09}\label{conj:mo10}
	There is a randomized linear sketch for LTFs of size $O\left(\frac{\theta}{m} \log \left(\frac{\theta}{m}\right)\right)$.
\end{conj}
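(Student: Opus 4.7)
}
The target bound $O(k \log k)$ for $k := \theta/m$ exactly matches the tight one-way communication complexity $\Theta(k \log k)$ of the threshold Hamming weight function $\mathrm{Ham}_{\geq k}$, via the $\ftwo$-sketch of~\cite{HSZZ06} (cited as Fact~\ref{prop:l0-bound}) on the upper-bound side and the small-set disjointness lower bound of~\cite{DKS12} on the lower-bound side. My plan is therefore to reduce $\ftwo$-sketching of an arbitrary LTF with threshold $\theta$ and margin $m$ to $\ftwo$-sketching of $\mathrm{Ham}_{\geq k}$, and to pull the resulting sketch back to the original $n$ coordinates without picking up a $\log n$ factor.

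The core mechanism is \emph{variable duplication}. Suppose for a moment that the rescaled weights $\tilde w_i := w_i/m$ are nonnegative integers, so that $f(x) = \mathrm{sign}(\sum_i \tilde w_i x_i - k)$. Replace each $x_i$ by $\tilde w_i$ identical copies in an expanded vector $X \in \ftwo^N$ where $N = \sum_i \tilde w_i$. Then $f(x) = \mathrm{Ham}_{\geq k}(X)$, and any $\ftwo$-sketch for $\mathrm{Ham}_{\geq k}$ in the expanded $N$-dimensional space pulls back to an $\ftwo$-sketch in the original $n$-dimensional space: a parity $\chi_S(X)$ over a multiset of copies collapses modulo $2$ to $\chi_{S'}(x)$, where $S' \subseteq [n]$ contains each original coordinate $i$ exactly when an odd number of its copies lies in $S$. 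Invoking the $O(k\log k)$-size sketch for $\mathrm{Ham}_{\geq k}$ in the expanded space therefore gives an $O(k\log k)$-size $\ftwo$-sketch in the original space, independently of $N$.

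The remaining (and main) obstacle is that the $\tilde w_i$ need not be integers. Per-coordinate rounding accumulates error $\Theta(n m)$, ruining the margin. I would instead split the coordinates by magnitude. The \emph{heavy} coordinates $H = \{i : w_i \geq m\}$ satisfy $|H| \leq 1/m$, so each can be individually snapped to a multiple of $m$ with per-coordinate error $\leq m/2$ and total error $\leq m/2$; after rescaling, the integer weights on $H$ sum to $O(k)$, and duplication followed by the $\mathrm{Ham}_{\geq k}$ sketch handles them in $O(k\log k)$ bits. The \emph{light} coordinates $L = \{i : w_i < m\}$ have individual weight below $m$ but total weight possibly close to $1$; their contribution $L(x) := \sum_{i\in L} w_i x_i$ need only be estimated to within an additive error of $m/2$ to preserve the LTF value, given the surviving margin.

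For the light part I would use a weighted random-parity construction in the spirit of the Grolmusz $\ell_1$-Fourier sampling mentioned in the preliminaries: draw $O(k\log k)$ random subsets in which each $i\in L$ is included independently with probability $\Theta(w_i / m)$, and combine the resulting $\ftwo$-parities via the sign trick appearing in Part~1 of Theorem~\ref{thm:linear-sketch-uniform} to produce an estimator of $L(x)$. The hardest part of the proof will be calibrating these inclusion probabilities so that (i) the variance of each biased parity is $O(m^2)$, (ii) Chernoff-type concentration over $O(k\log k)$ parities gives accuracy $m/2$ with failure probability $1/\mathrm{poly}(k)$, and (iii) the number of parities does not depend on $n$. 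Both $w_i < m$ and $\sum_{i\in L} w_i \leq 1$ are crucial here: the former controls the per-sample variance, the latter bounds the range to be estimated. Combining the heavy-part $\mathrm{Ham}_{\geq k}$ sketch with the light-part biased-parity estimator yields the desired $O((\theta/m)\log(\theta/m))$ total.
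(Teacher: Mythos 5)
There is a genuine gap, in fact two. First, the heavy-part rounding does not work as claimed: with the normalization $\sum_i w_i = 1$ you have up to $|H| \le 1/m$ heavy coordinates, so snapping each weight to the nearest multiple of $m$ incurs total additive error up to $|H|\cdot m/2 \approx 1/2$ on inputs with many active heavy coordinates, not $m/2$; since the margin is only $m$, this can flip the function. Rounding at a finer granularity (multiples of $m/|H|$ or $m/k$) restores correctness but inflates the Hamming threshold in the duplicated space from $k=\theta/m$ to roughly $k/m$ or $k^2$, so the $\mathrm{Ham}_{\ge k}$ sketch no longer gives $O(k\log k)$. Second, the light-part estimator cannot achieve the stated cost: a single $\ftwo$-parity is a Boolean random variable with constant variance (not $O(m^2)$), so resolving $L(x)\in[0,1]$ to additive accuracy $m/2$ by averaging such bits requires on the order of $1/m^2$ samples; for constant $\theta$ this is $\Theta(k^2)\gg k\log k$, i.e.\ you only recover the $O((\theta/m)^2)$ bound of Theorem~\ref{thm:mo10} rather than the conjectured one.

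The structural fact you are missing, and which the paper's proof hinges on, is Lemma~\ref{lem:ltf-margin}: every coordinate with weight below $2m$ can be \emph{deleted outright} without changing the LTF at all, so the light contribution never needs to be estimated, and after deletion one may take $m = w_n/2$. The paper then avoids duplication and rounding entirely: a few sparse random parities (each coordinate included with probability $\Theta(m^2/\theta^2)$) test whether $\|x\|_0 > \theta^2/m^2$, in which case the answer is $1$ because all surviving weights are at least $2m$; otherwise $x$ is $\mathrm{poly}(\theta/m)$-sparse, the universe is hashed (linearly over $\ftwo$) down to $\mathrm{poly}(\theta/m)$ coordinates, and the counting argument of Corollary~\ref{cor:naive-sketching} together with the random-parity bound of Fact~\ref{prop:l0-bound} gives $O\bigl(\frac{\theta}{m}\log\frac{\theta}{m}\bigr)$ on the hashed instance. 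Your duplication-and-pullback observation (parities over copies collapse mod $2$) is correct and could be a nice ingredient elsewhere, but as the backbone of this proof it founders on the non-integrality of the weights, which is precisely the difficulty the paper's sparsity-test-plus-hashing route is designed to bypass.
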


In fact, all weights which are below the margin can be completely ignored when evaluating the LTF.
\begin{lemma}\label{lem:ltf-margin}
	Let $f$ be a monotone LTF with weights $w_1 \ge w_2 \ge \dots \ge w_n$, threshold $\theta$ and margin $m$.
	Let $f^{\ge 2m}$ be an LTF with the same threshold and margin but only restricted to weights $w_1 \ge w_2 \ge \dots \ge w_t$, where $t$ is the largest integer such that $w_t \ge 2m$.
	Then $f = f^{\ge m}$.
\end{lemma}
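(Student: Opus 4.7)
The plan is to establish $f(x) = f^{\ge 2m}(x)$ for every $x \in \{0,1\}^n$ (the stated conclusion $f = f^{\ge m}$ appears to be a typo for $f = f^{\ge 2m}$, since the following section would otherwise be talking about a differently indexed object). Partition the coordinates into the heavy set $H = \{1,\dots,t\}$ of indices with $w_i \ge 2m$ and the light set $L = \{t+1,\dots,n\}$ of indices with $w_i < 2m$, and write $S_H(x) = \sum_{i \in H} w_i x_i$ and $S_L(x) = \sum_{i \in L} w_i x_i$, so that $f(x)$ depends on whether $S_H(x)+S_L(x) \ge \theta$ while $f^{\ge 2m}(x)$ depends only on whether $S_H(x) \ge \theta$.

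The first direction is immediate from monotonicity: if $S_H(x) \ge \theta$, then $S_H(x)+S_L(x) \ge \theta$ as well because $S_L(x) \ge 0$, so both $f(x)$ and $f^{\ge 2m}(x)$ equal $1$. It thus suffices to rule out the possibility that $S_H(x) < \theta$ while simultaneously $S_H(x)+S_L(x) \ge \theta$.

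To do so I would argue by contradiction, using the margin to forbid a ``small-step'' crossing of the threshold. Walk from $(x_H,0_L)$ to $(x_H,x_L)$ by flipping the light coordinates from $0$ to $1$ one at a time, producing a sequence of inputs $x^{(0)},x^{(1)},\dots,x^{(k)}$ with $x^{(0)} = (x_H,0_L)$ and $x^{(k)} = x$. Writing $s_j := w \cdot x^{(j)}$, this sequence satisfies $s_0 = S_H(x) < \theta$, $s_k = S_H(x)+S_L(x) \ge \theta$, and $s_{j+1}-s_j = w_{i_j}$ for some $i_j \in L$, so in particular $s_{j+1}-s_j < 2m$ by the definition of the heavy/light split. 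Since the partial sums cross $\theta$, there exists an index $j$ with $s_j < \theta \le s_{j+1}$; the margin of $f$ applied to the two inputs $x^{(j)}$ and $x^{(j+1)}$ then forces $\theta - s_j \ge m$ and $s_{j+1} - \theta \ge m$, hence $s_{j+1} - s_j \ge 2m$, a contradiction.

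The only delicate point, and hence the only place where the argument could fail if the statement were weakened, is the alignment of the strict inequality $w_i < 2m$ for $i \in L$ with the non-strict margin inequality $|w \cdot x - \theta| \ge m$. This is exactly why the heavy/light threshold is placed at $2m$ and not at $m$; with $2m$ the two bounds on $s_{j+1}-s_j$ are genuinely incompatible and no extra machinery is required.
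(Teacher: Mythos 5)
Your proof is correct and follows essentially the same route as the paper's: both argue by contradiction, walk through partial sums until they cross $\theta$, and use the margin to force a jump of at least $2m$ at the crossing step, contradicting that the light weights are strictly below $2m$. You are also right that the stated conclusion $f = f^{\ge m}$ is a typo for $f = f^{\ge 2m}$, and your explicit handling of the easy monotonicity direction (omitted in the paper) is fine.
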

\begin{proof}
	For the sake of contradiction assume there exists an input $(x_1, \dots, x_n)$ such that $f(x_1, \dots, x_n) = 1$ while  $f^{\ge 2m}(x_1, \dots, x_t) = 0$.
	Fix the largest $t^* \ge t$ such that $sign\left(\sum_{i = 1}^{t^*} w_i x_i- \theta\right) = 0$ while $sign\left(\sum_{i = 1}^{t^* + 1} w_i x_i - \theta\right) = 1$. Clearly $w_{t^*+1} \ge 2m$, a contradiction.
\end{proof}

The above lemma implies that after dropping the weights which are below $2m$ together with the corresponding variables and reducing the value of $n$ accordingly we can also make the margin equal to $w_n/2$.
This observation also gives the following straightforward corollary that proves Conjecture~\ref{conj:mo10} about LTFs (up to a logarithmic factor in $n$).

\begin{corollary}\label{cor:naive-sketching}
	There is a randomized linear sketch for LTFs of size $O\left(\frac{\theta}{m} \log n\right)$.
\end{corollary}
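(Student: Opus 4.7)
The plan is to combine Lemma~\ref{lem:ltf-margin} with the classical $k$-sparse recovery protocol that uses $O(k \log n)$ random $\ftwo$-parities (Fact~\ref{prop:l0-bound}), exploiting the following dichotomy: either the projection of the input $x$ onto the relevant variables is $O(\theta/m)$-sparse, in which case it can be recovered exactly, or it has too many ones, in which case the LTF is automatically $1$.

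First I would invoke Lemma~\ref{lem:ltf-margin} to pass from $f$ to the equivalent truncated LTF $f^{\ge 2m}$ on $t$ variables, where every surviving weight satisfies $w_i \ge 2m$. Set $k = \lceil \theta/(2m) \rceil = O(\theta/m)$ and let $y = (x_1,\ldots,x_t)$ denote the restriction of the input to these $t$ coordinates. If $\|y\|_0 \ge k+1$ then
$$\sum_{i=1}^t w_i y_i \;\ge\; 2m\,(k+1) \;>\; \theta,$$
so $f^{\ge 2m}(x) = 1$ without any further information being needed. The complementary case $\|y\|_0 \le k$ is exactly the regime handled by Fact~\ref{prop:l0-bound}: $O(k \log n) = O((\theta/m)\log n)$ random $\ftwo$-parities supported on the index set $\{1,\ldots,t\}$ suffice to identify $y$ with high probability. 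These parities constitute the sketch; the post-processing function runs the recovery procedure, outputs $1$ if the recovery reports $\|y\|_0 > k$, and otherwise evaluates $f^{\ge 2m}$ on the reconstructed $k$-sparse candidate. The $n-t$ discarded coordinates of $x$ play no role in either the sketch or the output.

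There is no real obstacle in this argument: correctness follows directly from the dichotomy above, and the only randomness and only error probability enter through Fact~\ref{prop:l0-bound}. The single point one has to match against the precise statement of that fact is that it must do double duty, both recovering $k$-sparse vectors and certifying non-sparsity; this is a standard property of the underlying random-parity protocol referenced in the introduction, so it is a formality rather than a genuine difficulty.
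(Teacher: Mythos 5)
Your argument is correct and rests on the same two ingredients as the paper's proof: the truncation of Lemma~\ref{lem:ltf-margin} (after which every surviving weight is at least $2m$, so $f(x)=0$ forces $\|x\|_0 \le \theta/2m$) and random-parity fingerprinting over a set of at most $\binom{n}{\theta/2m} \le (n+1)^{\theta/2m}$ inputs. The difference is packaging: the paper simply notes that $|\{x : f(x)=0\}| \le (n+1)^{\theta/2m}$ and applies Fact~\ref{prop:l0-bound} to $f$ itself as a black box, which immediately yields $\log(\epsilon 2^{n+1}/\delta) = O(\frac{\theta}{m}\log n)$, whereas you reconstruct the (sparse) input and then evaluate the LTF on the reconstruction. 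One caveat on attribution: Fact~\ref{prop:l0-bound} as stated is a sketching bound for Boolean functions whose minority value has small mass, not a $k$-sparse recovery guarantee, so it cannot be cited verbatim for ``identify $y$''. This is not a gap, because your decoder is justified by a direct union bound: search for any $k$-sparse vector consistent with the measurements; if none exists output $1$ (then $\|y\|_0 > k$, so $f=1$ deterministically), and otherwise evaluate $f^{\ge 2m}$ on the candidate. Over the at most $(n+1)^k$ sparse candidates, $O(k\log n + \log(1/\delta))$ uniformly random parities leave a spurious consistent candidate with probability at most $\delta$, which simultaneously handles the ``double duty'' (recovery plus non-sparsity certification) you flagged. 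So your route re-proves the content of Fact~\ref{prop:l0-bound} in sparse-recovery form, while the paper's proof is a one-line application of it; the black-box application is slightly shorter, but your version makes the decoder explicit.
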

\begin{proof}
	We will give a bound on $|\{x \colon f(x) = 0\}|$.
	If $f(x) = 0$ then $\sum_{i = 1}^n w_i x_i < \theta$.
	Since all weights are at least $w_n$ the total number of such inputs is at most $\binom{n}{\theta/w_n} = \binom{n}{\theta/ 2m} \le (n + 1)^{\theta/2m}$.
	Thus applying the random $\ftwo$-sketching bound (Fact~\ref{prop:l0-bound}) we get a sketch of size $O\left(\frac{\theta }{ m} \log n\right)$ as desired.
\end{proof}

Combined with Theorem~\ref{thm:mo10}
the above corollary proves Conjecture~\ref{conj:mo10} except in the case when $\beta \log \left(\theta/ m\right) < \theta/m < n^{\alpha}$ for all $\alpha > 0$ and $\beta < \infty$.
This matches the result of~\cite{LZ13}.

A full proof of Conjecture~\ref{conj:mo10} can be obtained by using hashing to reduce the size of the domain from $n$ down to $poly(\theta /m)$.

\begin{theorem}\label{thm:naive-sketching+hashing}
		There is a randomized linear sketch for LTFs of size $O\left(\frac{\theta}{m} \log \left(\frac{\theta}{m}\right)\right)$ that succeeds with any constant probability.
\end{theorem}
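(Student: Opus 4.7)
My plan is to combine Lemma~\ref{lem:ltf-margin} with a random-hashing preprocessing step and then apply Corollary~\ref{cor:naive-sketching} to the reduced instance. After normalizing $\sum_i w_i = 1$ and discarding every variable with $w_i < 2m$ (permissible by Lemma~\ref{lem:ltf-margin}), we are left with an LTF on $n' \le 1/(2m)$ variables with weights in $[2m, 1]$ and the same threshold $\theta$ and margin $m$. Write $k = \theta/m$, the parameter in the target bound.

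The heart of the argument is a random hash $h : [n'] \to [N]$ with $N = \mathrm{poly}(k)$, used to collapse the universe. Define bucket parities $y_j = \bigoplus_{i:h(i)=j} x_i$; since every parity of $y$ is also a parity of $x$, an $\ftwo$-sketch on $y$ is automatically an $\ftwo$-sketch on $x$ of the same size. I would construct an LTF-like function $\tilde f$ on $\{0,1\}^N$ whose threshold-to-margin ratio is still $O(k)$ and which satisfies $\tilde f(y) = f(x)$ with probability $1 - O(1)$ over the choice of $h$ and of the input $x$. Applying Corollary~\ref{cor:naive-sketching} to $\tilde f$ then yields an $\ftwo$-sketch of size $O(k \log N) = O(k \log k)$, as desired.

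To make $\tilde f$ decodable from the bucket parities, I would first split the surviving variables by dyadic weight class $C_\ell = \{i : w_i \in [2^\ell m, 2^{\ell+1} m)\}$ for $\ell = 1, \dots, O(\log k)$. Since $\sum_i w_i \le 1$, each class satisfies $|C_\ell| \le 2^{-\ell}/m$. Within a class the weights are comparable, so knowing the number of one-valued variables in $C_\ell$ up to additive error $O(k / 2^\ell)$ suffices to estimate $\sum_{i \in C_\ell} w_i x_i$ up to $O(m/\log k)$; summing over classes reconstructs $\sum_i w_i x_i$ up to a small fraction of the margin, which is enough to recover the sign. For each class I would hash the variables of $C_\ell$ into $\mathrm{poly}(k)$ buckets so that, with high probability over $h$, the bucket parities encode the count of 1-variables in $C_\ell$ to the required precision; the underlying tool is the standard $Ham_{\ge k}$-style random-parity sketch already implicit in Fact~\ref{prop:l0-bound}.

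The main obstacle is the per-class decoding step: showing that a single random hash into $\mathrm{poly}(k)$ buckets suffices to recover the Hamming weight of the restriction of $x$ to $C_\ell$ with the required additive accuracy and constant failure probability, uniformly over typical inputs. I expect this to follow from combining bucket-collision control (standard balls-in-bins) with the Hamming-weight $\ftwo$-sketch result, invoked at a sparsity level matched to the target precision inside the class. Once the per-class decoding is in place, a union bound over the $O(\log k)$ classes controls the overall error at a constant, the reduced universe remains $N = \mathrm{poly}(k)$, and Corollary~\ref{cor:naive-sketching} applied to $\tilde f$ then delivers the target bound $O(k \log k)$.
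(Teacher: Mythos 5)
Your high-level skeleton (prune weights below $2m$ via Lemma~\ref{lem:ltf-margin}, hash down to $\mathrm{poly}(\theta/m)$ variables, then invoke Corollary~\ref{cor:naive-sketching}) is the right one, but the decoding step you build it on does not work, and you are missing the ingredient that makes the hashing step legitimate. The per-class count argument fails arithmetically and structurally: if you know the number of ones in $C_\ell=\{i: w_i\in[2^\ell m,2^{\ell+1}m)\}$ up to additive error $O(k/2^\ell)$ with $k=\theta/m$, the induced error on $\sum_{i\in C_\ell}w_ix_i$ is $O(k/2^\ell)\cdot 2^{\ell}m=O(\theta)$, not $O(m/\log k)$; to get class error $O(m/\log k)$ you would need the count to accuracy $O(1/(2^\ell\log k))<1$, i.e.\ exactly. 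Worse, even the \emph{exact} count does not determine the class contribution: weights inside a dyadic class vary by a factor of $2$, so the contribution of a class with many ones is uncertain by up to its total weight, which can be of order $1\gg m$. An LTF with tiny margin genuinely depends on \emph{which} variables are set, not on per-class counts, so no estimator of $\sum_i w_ix_i$ built from approximate class counts can recover the sign within the margin guarantee.

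The idea you are missing is the sparsity dichotomy, which is what makes "hash into $\mathrm{poly}(\theta/m)$ buckets" sound. After pruning, every weight is at least $2m$, so $\|x\|_0>\theta/m$ already forces $\sum_i w_ix_i\ge 2\theta$ and the answer is $1$; a constant number of random parities with inclusion probability $\Theta(m^2/\theta^2)$ distinguishes $\|x\|_0>\theta^2/m^2$ from $\|x\|_0\le\theta/m$ with constant probability, so in the dense case one simply outputs $1$. Only in the remaining case $\|x\|_0\le\theta^2/m^2$ does one hash, into $O(\theta^4/m^4)$ buckets, where with constant probability \emph{no two nonzero coordinates collide}; then each bucket parity pins down the support exactly (not a count mod $2$ of many colliding ones), the reduced instance lives on $\mathrm{poly}(\theta/m)$ variables, and Corollary~\ref{cor:naive-sketching} — which is an identification-type sketch via Fact~\ref{prop:l0-bound}, not a weighted-sum estimator — gives $O\left(\frac{\theta}{m}\log\frac{\theta}{m}\right)$. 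Without that conditioning, your hash maps $n'\le 1/(2m)$ surviving variables (which can vastly exceed $\mathrm{poly}(\theta/m)$) into $\mathrm{poly}(\theta/m)$ buckets, collisions within the support of $x$ are unavoidable, and no function $\tilde f$ of the bucket parities with threshold-to-margin ratio $O(\theta/m)$ computing $f$ can exist.
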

\begin{proof}
	It suffices to only consider the case when $\theta/m >100$ since otherwise the bound follows trivially from Theorem~\ref{thm:mo10}. Consider computing a single linear sketch $\sum_{i \in S} x_i$ where $S$ is a random vector in $\ftwo^n$ with each coordinate set to $1$ independently with probability $10 m^2 / \theta^2$.
	This sketch lets us distinguish the two cases $\|x\|_0 > \theta^2 / m^2$ vs. $\|x\|_0 \le \theta/m$ with constant probability. Indeed:
	
	\textbf{Case 1.} $\|x\|_0 > \theta^2 / m^2$. The probability that a set $S$ contains a non-zero coordinate of $x$ in this case is at least:
	$$1 - \left(1 - \frac{10 m^2}{\theta^2}\right)^{\frac{\theta^2}{m^2}} \ge 1 - (1/e)^{10} > 0.9$$
	Conditioned on this event the parity evaluate to $1$ with probability at least $1/2$.
	Hence, overall in this case the parity evaluates to $1$ with probability at least $0.4$.
	
	\textbf{Case 2.} $\|x\|_0 \le \theta/m$. In this case this probability that $S$ contains a non-zero coordinate and hence the parity can evaluate to $1$ is at most: 
	$$1 - \left(1 - \frac{10 m^2}{\theta^2}\right)^{\theta/m} < 1 - \left(1/2e\right)^{1/10} < 0.2$$
	
	Thus, a constant number of such sketches allows to distinguish the two cases above with constant probability. 
	If the test above declares that $\|x\|_0 > \theta^2/m^2$ then we output $1$ and terminate.
	Note that conditioned on the test above being correct it never declares that $\|x\|_0 > \theta^2/m^2$ while $\|x\|_0 \le \theta/m$.
	Indeed in all such cases, i.e. when $\|x\|_0 > \theta/m$ we can output $1$ since if $\|x\|_0 > \theta/m$ then $\sum_{i = 1}^n w_i x_i \ge \|x\|_0 w_n \ge \frac{\theta w_n}{m} = 2 \theta$, where we used the fact that by Lemma~\ref{lem:ltf-margin} we can set $m = w_n/2$.

	For the rest of the proof we thus condition on the event that $\|x\|_0 \le \theta^2 /m^2$.
	By hashing the domain $[n]$ randomly into $O\left(\theta^4/m^4\right)$ buckets we can ensure that no non-zero entries of $x$ collide with any constant probability that is arbitrarily close to $1$. This reduces the input length from $n$ down to $O\left(\theta^4/m^4\right)$ and we can apply Corollary~\ref{cor:naive-sketching} to complete the proof.
	\footnote{We note that random hashing doesn't interfere with the linearity of the sketch as it corresponds to treating collections of variables that have the same hash as a single variable representing their sum over $\ftwo$. Assuming no collisions this sum evaluates to $1$ if and only if a variable of interest is present in the collection.}
\end{proof}

This result is also tight as follows from the result of Dasgupta, Kumar and Sivakumar~\cite{DKS12} discussed in the introduction.
Consider the Hamming weight function $Ham_{\ge d}(x) \equiv \|x\|_0 \ge d$. This function satisfies $\theta = d/n$, $m = 1/2n$. A straightforward reduction from small set disjointness shows that the one-way communication complexity of the XOR-function $Ham_{\ge d}(x \oplus y)$ is $\Omega(d \log d)$. This shows that the bound in Theorem~\ref{thm:naive-sketching+hashing} can't be improved without any further assumptions about the LTF.

\section{Towards the proof of Conjecture~\ref{conj:main}}\label{sec:one-bit-lb}\todo{add some intro remarks}

We call a function $f:\ftwo^n \to \oo$ \textit{non-linear} if for all $S \in \ftwo^n$ there exists $x \in \ftwo^n$ such that $f(x) \neq \chi_S(x)$.
Furthermore, we say that $f$ is $\eps$-far from being linear if: 
$$\max_{S \in \ftwo^n}\left[\Pr_{x \sim U(\ftwo^n)}[\chi_S(x) = f(x)]\right] = 1 - \eps.$$

The following theorem is our first step towards resolving Conjecture~\ref{conj:main}. Since non-linear functions don't admit $1$-bit linear sketches we show that the same is also true for the corresponding communication complexity problem, namely no $1$-bit communication protocol for such functions can succeed with a small constant error probability.
\begin{theorem}
	For any non-linear function $f$ that is at most $1/10$-far from linear $\distcm{1/200}{(\fplus{})} > 1$.
\end{theorem}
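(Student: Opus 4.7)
Let $\chi_{S^*}$ be a character closest to $f$ so that $\Pr_x[f(x) = \chi_{S^*}(x)] = 1 - \eps$ with $\eps \in (0, 1/10]$, and set $B = \{z \in \ftwo^n : f(z) \neq \chi_{S^*}(z)\}$; $B$ is nonempty because $f$ is non-linear. The plan is to produce a two-dimensional subspace $V = \{0, a, b, a+b\} \subseteq \ftwo^n$ on which $\prod_{z \in V} f(z) = -1$, and then on the uniform distribution $\mu$ over $V \times V$ to show that every one-bit deterministic protocol errs on at least a $1/8$ fraction of inputs, already exceeding $1/200$. Yao's principle then yields $\distcm{1/200}(\fplus{}) \ge \distc{1/200}{\mu}(\fplus{}) > 1$.

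\textbf{Producing $V$ and its Fourier structure.} I split on $f(0)$. If $f(0) = 1$, non-linearity alone forces some $a, b$ with $f(a)\,f(b)\,f(a+b) = -1$, for otherwise $f$ would be a multiplicative homomorphism $\ftwo^n \to \{\pm 1\}$, hence a character; the same product condition precludes $a = 0$, $b = 0$, and $a = b$, so $V = \langle a, b\rangle$ is truly two-dimensional. If $f(0) = -1$, I instead pick $a, b \in \overline{B}$ with $a + b \in \overline{B}$ and $\{0, a, b, a+b\}$ of size $4$; inclusion--exclusion gives $|\overline{B} \cap (\overline{B} + a)| \ge (1 - 2\eps) 2^n$, which for $\eps \le 1/10$ is comfortably more than enough. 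In either case $|V \cap B|$ is odd, equivalently $\prod_{z \in V} f(z) = -1$. Identifying $V \cong \ftwo^2$ and computing directly, every $\pm 1$-valued function on $\ftwo^2$ whose four values multiply to $-1$ has all four Fourier coefficients of magnitude exactly $1/2$. Consequently the $4 \times 4$ matrix $M$ defined by $M_{x,y} = f(x+y)$ for $x, y \in V$ equals $\sum_T \widehat{f|_V}(T)\, \chi_T \chi_T^{\top}$, and since each restricted character is a $\pm 1$-vector of Euclidean norm $2$ in $\R^4$, all four singular values of $M$ equal $2$.

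\textbf{Rank-$2$ obstruction.} Any deterministic one-bit protocol computes $G(x,y) = P_{\phi(x)}(y)$; writing $\phi'(x) = (-1)^{\phi(x)}$, $u = (P_0 + P_1)/2$, and $v = (P_1 - P_0)/2$, we have $G(x,y) = u(y) + \phi'(x)\, v(y)$, so every row of $G$ lies in $\mathrm{span}(u, v)$ and $\mathrm{rank}_{\R}(G|_{V \times V}) \le 2$. Eckart--Young therefore gives $\|M - G|_{V \times V}\|_F^2 \ge \sigma_3^2 + \sigma_4^2 = 8$, and since both matrices are $\pm 1$-valued each disagreement contributes $4$ to the squared Frobenius norm, forcing at least $2$ of the $16$ entries to disagree and giving error at least $1/8$ on $\mu$. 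The only nontrivial use of the hypothesis $\eps \le 1/10$ is in the $f(0) = -1$ subcase of producing $V$; once $V$ is fixed, the rank / Eckart--Young step is automatic and in fact delivers the much stronger constant $1/8 \gg 1/200$.
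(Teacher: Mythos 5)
Your proof is correct, but it takes a genuinely different route from the paper. The paper fixes the best-correlated character $\chi_{S}$ and a point $z$ where $f(z)\neq\chi_{S}(z)$, builds a global hard distribution ($y$ uniform, $x$ equal to $y+z$ with probability $1/2$ and uniform otherwise), and then bounds the advantage of each of Bob's four possible one-bit decoders (two constants and $\pm M(x)$) via convolution identities, Parseval and Cauchy--Schwarz, balancing the two bounds to get error at least $1/200$. You instead localize: you extract a two-dimensional subspace $V=\{0,a,b,a+b\}$ on which $f$ disagrees with linearity an odd number of times (your case analysis is sound: if $f(0)=1$ this needs only non-linearity via the homomorphism argument, and if $f(0)=-1$ the counting $|\overline B\cap(\overline B+a)|\ge(1-2\eps)2^n$ works since non-linearity together with $\eps\le 1/10$ forces $2^n\ge 10$, so the two excluded choices of $b$ are harmless), note that the resulting $4\times 4$ matrix $M_{x,y}=f(x+y)$ has all four singular values equal to $2$, and invoke Eckart--Young against the rank-$\le 2$ matrix of any one-bit protocol to force at least two disagreements, i.e.\ error at least $1/8$ on the uniform distribution over $V\times V$, which via $\distcm{1/200}(\fplus{})\ge\distc{1/200}{\mu}(\fplus{})$ gives the claim. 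What your approach buys: it is more elementary (no global Fourier estimates), yields a much stronger constant ($1/8$ versus $1/200$), and isolates the exact obstruction --- such a $V$ exists precisely when $f\neq\pm\chi_S$, so your argument in fact proves the theorem under the weaker hypothesis that $f$ is not a (near-)negated character, which is the right boundary since $f=-\chi_S$ admits an exact one-bit protocol. What the paper's approach buys is a style of estimate (bounding the correlation of message-based estimators over a globally supported distribution) that is more naturally suited to attempts at extending the lower bound to longer messages toward Conjecture~\ref{conj:main}. Two cosmetic nits in your write-up: with $v=(P_1-P_0)/2$ and $\phi'(x)=(-1)^{\phi(x)}$ you have swapped $P_0$ and $P_1$ (harmless, the rank-$2$ bound is unaffected), and you should say a word that $n\ge 2$ (indeed $n\ge 4$) is automatic so that a two-dimensional subspace exists at all.
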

\begin{proof}
	Let $S = \argmax_T \left[\Pr_{x \in \ftwo^n}[\chi_T(x) = f(x)\right]$.
	Pick $z \in \ftwo^n$ such that $f(z) \neq \chi_S(z)$.
	Let the distribution over the inputs $(x,y)$ be as follows:
	 $y \sim U(\ftwo^n)$ and $x \sim \mathcal D_y$ where $D_y$ is defined as:
	\begin{align*}
	D_y = \begin{cases}
	&y + z \text { with probability } 1/2, \\
	&U(\ftwo^n) \text{ with probability } 1/2.
	\end{cases}
	\end{align*}
	Fix any deterministic Boolean function $M(x)$ that is used by Alice to send a one-bit message based on her input.
	For a fixed Bob's input $y$ he outputs $g_y(M(x))$ for some function $g_y$ that can depend on $y$.
	Thus, the error that Bob makes at predicting $f$ for fixed $y$ is at least:
	\begin{align*}
	\frac{1 - \left|\mathbb E_{x \sim D_y} \left[g_y(M(x)) f(x + y)\right]\right|}{2}.
	\end{align*}	\todo{Do we actually need an absolute value here?}
	The key observation is that since Bob only receives a single bit message there are only four possible functions $g_y$ to consider for each $y$: constants $-1/1$ and $\pm M(x)$.
	
	\paragraph{Bounding error for constant estimators.} For both constant functions we introduce notation $B^{c}_y = \left|\mathbb E_{x \sim D_y} \left[g_y(M(x)) f(x + y)\right]\right|$ and have: 
	\begin{align*}
	B^c_y &= \left|\mathbb E_{x \sim D_y} \left[g_y(M(x)) f(x + y)\right]\right| 
	= |\mathbb E_{x \sim D_y}[f(x + y)]| = \left|\frac12 f(z) + \frac12 \mathbb E_{w \sim U(\ftwo^n)} [f(w)]\right|
	\end{align*}
	
	If $\chi_S$ is not constant then $\left|\mathbb E_{w \sim U(\ftwo^n)} [f(w)]\right| \le 2\epsilon$ we have:
	\begin{align*}
	\left|\frac12 f(z) + \frac12 \mathbb E_{w \sim U(\ftwo^n)} [f(w)]\right| \le \frac12 \left(|f(z)| + \left|\mathbb E_{w \sim U(\ftwo^n)} [f(w)]\right|\right)\le 1/2 + \epsilon.
	\end{align*}
	
	If $\chi_S$ is a constant then w.l.o.g $\chi_S$ = 1 and $f(z) = -1$. Also $\mathbb E_{w \sim U(\ftwo^n)}[f(w)] \ge 1-2\epsilon$. Hence we have:
	
	\begin{align*}
	\left|\frac12 f(z) + \frac12 \mathbb E_{w \sim U(\ftwo^n)} [f(w)]\right| = \frac12 \left|-1 + \mathbb E_{w \sim U(\ftwo^n)} [f(w)]\right| \le \epsilon.
	\end{align*}
	Since $\epsilon \le 1/10$ in both cases $B^c_y \le \frac{1}{2} + \epsilon$ which is the bound we will use below.
	
	\paragraph{Bounding error for message-based estimators.}
	For functions $\pm M(x)$ we need to bound $\left|\mathbb E_{x \sim D_y} \left[M(x) f(x + y)\right]\right|$.
	We denote this expression as $B^M_y$.
	Proposition~\ref{prop:message-bound} shows that $\mathbb E_y[B^M_y] \le \frac{\sqrt{2}}{2}\left(1 + \epsilon\right)$.

\begin{proposition}\label{prop:message-bound}
	$\mathbb E_{y \sim U(\ftwo^n)}\left[\left|\mathbb E_{x \sim D_y} \left[M(x) f(x + y)\right]\right|\right] \le \frac{\sqrt{2}}{2}\left(1 + \epsilon\right)$.
\end{proposition}
We have:
\begin{align*}
&\mathbb E_{y}\left[\left|\mathbb E_{x \sim D_y} \left[M(x) f(x + y)\right]\right|\right] \\
& = \mathbb E_{y}\left[\left|\frac{1}{2} \left( M(y + z) f(z) + \mathbb E_{x \sim D_y}[M(x) f(x + y)]\right)\right|\right] \\
& = \frac{1}{2} \mathbb E_{y}\left[\left|\left( M(y + z) f(z) + (M*f)(y)\right)\right|\right] \\
& \le\frac{1}{2}  \left(\mathbb E_{y}\left[\left( \left( M(y + z) f(z) + (M*f)(y)\right)\right)^2\right]\right)^{1/2} \\
& =\frac{1}{2}  \left(\mathbb E_{y}\left[ \left( (M(y + z) f(z))^2 + ((M*f)(y))^2 + 2 M(y + z) f(z) (M * f)(y))\right)\right]\right)^{1/2} \\
& =\frac{1}{2}  \left(\mathbb E_{y}\left[ \left( (M(y + z) f(z))^2\right] + \mathbb E_{y}\left[((M*f)(y))^2\right] + 2 \mathbb E_{y}\left[M(y + z) f(z) (M * f)(y))\right)\right]\right)^{1/2} 
\end{align*}

We have $(M(y+ z) f(z))^2 = 1$ and also by Parseval, expression for the Fourier spectrum of convolution and Cauchy-Schwarz:
\begin{align*}
\mathbb E_y[((M * f)(y))^2]  
= \sum_{S \in \ftwo^n} \widehat {M * f}(S)^2 
= \sum_{S \in \ftwo^n} \widehat {M}(S)^2 \hat{f}(S)^2 
\le ||M||_2 ||f||_2 = 1
\end{align*}

Thus, it suffices to give a bound on $\mathbb E[M(y + z) f(z) (M * f)(y))]$.
First we give a bound on $(M* f)(y)$:
\begin{align*}
(M* f)(y) = \mathbb E_x[M(x) f(x + y)]  \le \mathbb E_x[M(x)\chi_S(x + y)]  + 2\epsilon  \\
\end{align*}
Plugging this in we have:
\begin{align*}
&\mathbb E_y[M(y + z) f(z) (M * f)(y))] \\
&= - \chi_S(z)\mathbb E_y[M(y + z)(M * f)(y))] \\
& \le - \chi_S(z)\mathbb E_y\left[M(y + z) (M * \chi_S)(y)\right] + 2 \epsilon  \\
& =  - \chi_S(z) (M * (M * \chi_S))(z) + 2 \epsilon \\
& = - \chi_S(z)^2 \hat{M}(S)^2 + 2 \epsilon \\
& \le 2 \epsilon.
\end{align*}
where we used the fact that the Fourier spectrum of $(M * (M * \chi_S))$ is supported on $S$ only and $\widehat{M * (M * \chi_S)}(S) = \hat{M}^2(S)$ and thus $ (M * (M * \chi_S))(z) = \hat{M}^2(S)\chi_S(z)$.

Thus, overall, we have: 
\begin{align*}
\mathbb E_{y}\left[\left|\mathbb E_{x \sim D_y} \left[M(x) f(x + y)\right]\right|\right] \le 
\frac{1}{2} \sqrt{2 + 4\epsilon} \le \frac{\sqrt{2}}{2}(1 + \epsilon).\qedhere
\end{align*}
\end{proof}

	\paragraph{Putting things together.}
	We have that the error that Bob makes is at least:
	\begin{align*}
	&\mathbb E_y\left[\frac{1 - max(B^c_y,B^M_y)}{2}\right]  = \frac{1 - \mathbb E_y[max(B^c_y,B^M_y)]}{2} 
	\end{align*}
	Below we now bound $\mathbb E_y[max(B^c_y,B^M_y)]$ from above by $99/100$ which shows that the error is at least $1/200$.
	\begin{align*}
	&\mathbb E_y[max(B^c_y,B^M_y)] \\
	& = \Pr[B^M_y \ge 1/2 + \epsilon] \mathbb E[B^M_y | B^M_y \ge 1/2 + \epsilon] + Pr[B^M_y < 1/2 + \epsilon] \left(\frac{1}{2} + \epsilon\right) \\
	& = \mathbb E_y[B^M_y] + Pr[B^M_y < 1/2 + \epsilon] \left(\frac{1}{2} + \epsilon -\mathbb E[B^M_y | B^M_y < 1/2 + \epsilon] \right)
	\end{align*}
	Let $\delta = Pr[B^M_y < 1/2 + \epsilon]$.
	Then the first of the expressions above gives the following bound:
	\begin{align*}
	\mathbb E_y[max(B^c_y,B^M_y)] \le (1 - \delta) + \delta \left(\frac{1}{2} + \epsilon\right) = 1 - \frac{\delta}{2} + \epsilon \delta  \le 1 - \frac{\delta}{2} + \epsilon
	\end{align*}
	The second expression gives the following bound:
	\begin{align*}
	\mathbb E_y[max(B^c_y,B^M_y)] \le \frac{\sqrt{2}}{2}\left(1 + \epsilon\right) + \delta\left(\frac{1}{2} + \epsilon\right) \le \frac{\sqrt{2}}{2} + \frac{\delta}{2} + \frac{\sqrt{2}}{2} \epsilon + \epsilon.
	\end{align*}
	These two bounds are equal for $\delta = 1 - \frac{\sqrt{2}}{2} \left(1 + \epsilon\right)$ and hence the best of the two bounds is always at most $(\frac{\sqrt{2}}{4} + \frac{1}{2}) + \epsilon \left(\frac{\sqrt{2}}{4} + 1\right) \le \frac{99}{100}$ where the last inequality uses the fact that $\epsilon \le \frac{1}{10}$.

\bibliographystyle{alpha}
\bibliography{../linsketch}

\appendix
\section*{Appendix}
\addcontentsline{toc}{section}{Appendix}
\renewcommand{\thesection}{\Alph{section}}
\setcounter{theorem}{0}

\section{Deterministic $\ftwo$-sketching}\label{app:deterministic}

%Recall that we denote the deterministic sketching complexity of $f$ as $\ds(f)$.
In the deterministic case it will be convenient to represent $\ftwo$-sketch of a function $f \colon \ftwo^n \to \ftwo$ as a $d \times n$ matrix $M_f \in \ftwo^{d \times n}$ that we call the \textit{sketch matrix}. The $d$ rows of $M_f$ correspond to vectors $\alpha_1, \dots, \alpha_d$ used in the deterministic sketch so that the sketch can be computed as $M_f x$. W.l.o.g below we will assume that the sketch matrix $M_f$ has linearly independent rows and that the number of rows in it is the smallest possible among all sketch matrices (ties in the choice of the sketch matrix are broken arbitrarily).

The following fact is standard (see e.g.~\cite{MO09,GOSSW11}):
\begin{fact}\label{fact:fourier-dimension}
	For any function $f \colon \ftwo^n \to \ftwo$ it holds that $\ds(f) = dim(f) = rank(M_f)$.
Moreover, set of rows of $M_f$ forms a basis for a subspace $A\le \ftwo^n$ containing all non-zero coefficients of $f$.
\end{fact}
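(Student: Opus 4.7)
The plan is to prove the two nontrivial equalities $\ds(f) = \dim(f)$ and $\dim(f) = \mathrm{rank}(M_f)$ by a single cyclic chain of inequalities $\ds(f) \le \dim(f) \le \mathrm{rank}(M_f) \le \ds(f)$, after which the ``moreover'' clause will follow immediately from the minimality assumption on $M_f$.

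\textbf{Step 1: $\ds(f) \le \dim(f)$.} Suppose $Spec(f)$ is contained in a subspace $A \le \ftwo^n$ of dimension $k = \dim(f)$, with basis $\alpha_1,\dots,\alpha_k$. Every $S \in Spec(f)$ can be written uniquely as $S = \sum_{i \in T} \alpha_i$ for some $T \subseteq [k]$, so $\chi_S(x) = \prod_{i \in T} \chi_{\alpha_i}(x)$. Hence the Fourier expansion $f(x) = \sum_{S} \hat f(S) \chi_S(x)$ is a (fixed, input-independent) function of the $k$ bits $\chi_{\alpha_1}(x),\dots,\chi_{\alpha_k}(x)$, giving a deterministic sketch of size $k$.

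\textbf{Step 2: $\dim(f) \le \mathrm{rank}(M_f)$.} Let $\alpha_1,\dots,\alpha_d$ be the rows of $M_f$, where $d = \ds(f)$, and let $g\colon \ftwo^d \to \ftwo$ be the postprocessing function so that $f(x) = g(\chi_{\alpha_1}(x),\dots,\chi_{\alpha_d}(x))$. Expanding $g$ in its own Fourier basis over $\ftwo^d$ and substituting yields
\[
f(x) = \sum_{T \subseteq [d]} \hat g(T) \prod_{i \in T}\chi_{\alpha_i}(x) = \sum_{T \subseteq [d]} \hat g(T)\, \chi_{\sum_{i \in T}\alpha_i}(x).
\]
By uniqueness of the Fourier expansion of $f$, every $S \in Spec(f)$ must appear as some $\sum_{i \in T} \alpha_i$, i.e.\ $Spec(f) \subseteq \mathrm{span}(\alpha_1,\dots,\alpha_d)$. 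Therefore $\dim(f) \le d = \mathrm{rank}(M_f)$, where the last equality uses that the rows of $M_f$ are linearly independent by assumption.

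\textbf{Step 3: $\mathrm{rank}(M_f) \le \ds(f)$ and closing the loop.} By construction $M_f$ has $\ds(f)$ linearly independent rows, so $\mathrm{rank}(M_f) = \ds(f)$ (in particular $\le \ds(f)$). Combined with Steps 1 and 2 this gives $\ds(f) = \dim(f) = \mathrm{rank}(M_f)$. Finally, the ``moreover'' claim is just a restatement of Step 2: the rows of $M_f$ are linearly independent (so they form a basis for their span $A$) and Step 2 showed $Spec(f) \subseteq A$.

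The entire argument is essentially bookkeeping on top of uniqueness of the $\ftwo$-Fourier expansion; there is no real obstacle. The only subtlety worth flagging is the direction $\ds(f) \ge \dim(f)$ in Step 2, where one must resist the temptation to think of $g$ as an arbitrary postprocessing function and instead exploit its own (trivial) Fourier expansion over $\ftwo^d$ to pull the spectrum of $f$ into $\mathrm{span}(\alpha_1,\dots,\alpha_d)$.
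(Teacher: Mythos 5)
Your proof is correct. The paper itself offers no proof of this fact, stating it as standard with a pointer to~\cite{MO09,GOSSW11}; your argument is exactly the standard one being invoked there --- the only substantive direction is your Step 2, where expanding the postprocessing function $g$ in its own Fourier basis over $\ftwo^d$ and invoking uniqueness of the Fourier expansion pulls $Spec(f)$ into the row span of $M_f$, and you handle it correctly.
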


\subsection{Disperser argument}\label{sec:disperser}

We show that the following basic relationship holds between deterministic linear sketching complexity and the property of being an affine disperser.
For randomized $\ftwo$-sketching an analogous statement holds for affine extractors as shown in Lemma~\ref{lem:rand-sketch-extractors}.

\begin{definition}[Affine disperser]
	A function $f$ is an affine disperser of dimension at least $d$ if for any affine subspace of $\ftwo^n$ of dimension at least $d$ the restriction of $f$ on it is a non-constant function.
\end{definition}

\begin{lemma}
	Any  function $f \colon \ftwo^n \rightarrow \ftwo$ which is an  affine disperser of dimension at least $d$ has deterministic linear sketching complexity at least $n-d + 1$.
\end{lemma}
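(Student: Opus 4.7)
My plan is to argue the contrapositive: if $f$ admits a deterministic $\ftwo$-sketch of dimension $k \le n-d$, then $f$ must be constant on some affine subspace of dimension at least $d$, which directly contradicts the disperser property.

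First, suppose $\ds(f) = k$ and let $M_f \in \ftwo^{k \times n}$ be the corresponding sketch matrix whose rows $\alpha_1,\ldots,\alpha_k$ may, by Fact~\ref{fact:fourier-dimension}, be taken to be linearly independent, together with a postprocessing function $g \colon \ftwo^k \to \ftwo$ satisfying $f(x) = g(\alpha_1 \cdot x,\ldots,\alpha_k \cdot x)$ for all $x \in \ftwo^n$. For any fixed $b = (b_1,\ldots,b_k) \in \ftwo^k$ define
\[
A_b = \{x \in \ftwo^n : \alpha_i \cdot x = b_i \text{ for all } i \in [k]\}.
\]
Since the $\alpha_i$ are linearly independent, $A_b$ is an affine subspace of $\ftwo^n$ of dimension exactly $n-k$, and by construction $f$ is identically equal to $g(b)$ on $A_b$.

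Now assume for contradiction that $\ds(f) \le n-d$. Then $n - k \ge d$, so each nonempty $A_b$ is an affine subspace of dimension at least $d$ on which $f$ is constant. Since the $A_b$ partition $\ftwo^n$ as $b$ ranges over $\ftwo^k$, in particular we can pick any $b$ and obtain a witness subspace on which $f$ is constant. This contradicts the assumption that $f$ is an affine disperser of dimension at least $d$. Therefore $\ds(f) \ge n - d + 1$.

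I do not expect any real obstacle here: the argument is essentially a dimension count, and the only ingredient beyond the definitions is Fact~\ref{fact:fourier-dimension}, which lets me take the sketch rows to be linearly independent and hence each level set of the sketch to be a genuine affine subspace of codimension $k$.
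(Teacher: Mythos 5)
Your proposal is correct and follows essentially the same argument as the paper: both consider a level set $\{x : M_f x = b\}$ of the sketch, observe it is an affine subspace of dimension at least $n-k \ge d$ on which the sketch forces $f$ to be constant, and derive a contradiction with the disperser property. The only cosmetic difference is that you phrase it via the contrapositive and use linear independence of the rows to guarantee nonempty level sets, whereas the paper restricts to $b$ in the image of the sketch map.
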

\begin{proof}
	Assume for the sake of contradiction that there exists a linear sketch matrix $M_f$ with $k \le n - d$ rows and a deterministic function $g$ such that $g(M_f x) = f(x)$ for every $x \in \ftwo^n$.
	For any vector $b \in \ftwo^k$, which is in the span of the columns of $M_f$, the set of vectors $x$ which satisfy $M_f x = b$ forms an affine subspace of dimension at least $n - k \ge d$.
	Since $f$ is an affine disperser for dimension at least $d$ the restriction of $f$ on this subspace is non-constant. 
	However, the function $g(M_f x) = g(b)$ is constant on this subspace and thus there exists $x$ such that $g(M_f x) \neq f(x)$, a contradiction.
\end{proof}

\subsection{Composition and convolution}\label{sec:det-composition-convolution}
In order to prove a composition theorem for $\dl$ we introduce the following operation on matrices which for a lack of a better term we call matrix super-slam\footnote{This name was suggested by Chris Ramsey.}.

\begin{definition}[Matrix super-slam]
	For two matrices $A \in \mathbb F_2^{a \times n}$ and $B \in \mathbb F_2^{b \times m}$ their \textit{super-slam} $A \dagger B \in \mathbb F_2^{a b^n \times nm}$ is a block matrix consisting of $a$ blocks $(A \dagger B)_i$. The $i$-th block $(A \dagger B)_i \in \mathbb F_2^{b^n \times nm}$ is constructed as follows: for every vector $j \in \{1, \dots, b\}^n$ the corresponding row of $(A \dagger B)_i$ is defined as $(A_{i,1}B_{j_1}, A_{i,2} B_{j_2}, \dots, A_{i,n}B_{j_n})$, where $B_k$ denotes the $k^{th}$ row of $B$.
\end{definition}
\begin{proposition}\label{prop:super-slam-rank}
	$rank(A \dagger B) \ge rank(A) rank(B)$.
\end{proposition}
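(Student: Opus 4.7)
The approach is to recognize that the ordinary Kronecker product $A \otimes B$ appears as a row-submatrix of $A \dagger B$. The inequality then follows from the classical identity $\mathrm{rank}(A \otimes B) = \mathrm{rank}(A) \cdot \mathrm{rank}(B)$.

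Concretely, I would specialize the row index $j \in \{1,\ldots,b\}^n$ of the block $(A \dagger B)_i$ to the ``constant'' choices $j = (j_0, j_0, \ldots, j_0)$ for $j_0 \in \{1,\ldots,b\}$. By the definition of the super-slam, the corresponding row of $(A \dagger B)_i$ equals
$$(A_{i,1} B_{j_0},\ A_{i,2} B_{j_0},\ \ldots,\ A_{i,n} B_{j_0}) \;=\; A_i \otimes B_{j_0},$$
which is exactly the $(i, j_0)$-row of the Kronecker product $A \otimes B$. Letting $(i, j_0)$ range over $[a] \times [b]$ therefore exhibits the entire $ab \times nm$ matrix $A \otimes B$ (up to reordering of rows) as a row-subset of $A \dagger B$. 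Since passing to a row-submatrix cannot increase rank,
$$\mathrm{rank}(A \dagger B) \;\ge\; \mathrm{rank}(A \otimes B) \;=\; \mathrm{rank}(A) \cdot \mathrm{rank}(B),$$
where the final equality is the standard Kronecker rank identity, which holds over any field (in particular over $\mathbb{F}_2$) and can be verified by reducing each of $A$ and $B$ to its Smith-like normal form via invertible row/column operations and taking tensors.

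There is no real technical obstacle: the proof rests entirely on the observation that the ``diagonal'' index vectors $j = (j_0,\ldots,j_0) \in [b]^n$ are valid row indices of each block $(A \dagger B)_i$ and that picking them recovers precisely the rows of $A \otimes B$, which is immediate from the definition of the super-slam.
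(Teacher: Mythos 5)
Your proof is correct and follows essentially the same route as the paper: it selects from each block the rows indexed by constant vectors $j=(j_0,\ldots,j_0)$, identifies the resulting row-submatrix with the Kronecker product $A\otimes B$, and applies the rank identity $rank(A\otimes B)=rank(A)\,rank(B)$. No further comment is needed.
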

\begin{proof}
	Consider the matrix $C$ which is a subset of rows of $A \dagger B$ where from each block $(A \dagger B)_i$ we select only $b$ rows corresponding to the vectors $j$ of the form $\alpha^n$ for all $\alpha \in \{1, \dots, b\}$.
	Note that $C \in \mathbb F_2^{ab \times mn}$ and $C_{(i,k),(j,l)} = A_{i,j} B_{k,l}$.
	Hence, $C$ is a Kronecker product of $A$ and $B$ and we have:
	\begin{align*}
	rank(A \dagger B) \ge rank(C) = rank(A)rank(B).\qedhere
	\end{align*} 
\end{proof}

The following composition theorem for $\dl$ holds as long as the inner function is balanced:

\begin{lemma}\label{lem:det-sketch-composition}
	For $f \colon \ftwo^n \rightarrow \ftwo$ and $g \colon \ftwo^m \rightarrow \ftwo$ if $g$ is a balanced function then:
	\begin{align*}
	\ds (f \circ g) \ge \ds (f) \ds (g)
	\end{align*}
\end{lemma}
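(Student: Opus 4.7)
The plan is to exhibit $ab$ linearly independent vectors in $Spec(f\circ g)$ (where $a = \ds(f)$ and $b = \ds(g)$) and invoke Fact~\ref{fact:fourier-dimension} to conclude. Balancedness of $g$ will enter exactly as the condition $\hat g(\emptyset) = 0$ in the Fourier expansion of the composition.

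My first step would be to work out the Fourier spectrum of $f\circ g$. Expanding $f$ as a multilinear polynomial in its $\{\pm 1\}$-valued arguments and substituting $g(x^{(i)}) = \sum_{S_i} \hat g(S_i)\chi_{S_i}(x^{(i)})$ for each block $x^{(i)}\in\ftwo^m$ gives
\[
(f\circ g)(x) \;=\; \sum_{T \subseteq [n]} \hat f(T) \sum_{(S_i)_{i \in T}} \Bigl(\prod_{i\in T}\hat g(S_i)\Bigr)\,\chi_{(S_1,\ldots,S_n)}(x),
\]
where one pads $S_i = \emptyset$ for $i \notin T$. Because $\hat g(\emptyset) = 0$, for each target index $(S_1,\ldots,S_n)$ only the unique $T = \{i : S_i \ne \emptyset\}$ survives, yielding the clean characterization
\[
(S_1,\ldots,S_n) \in Spec(f\circ g) \iff \{i : S_i \ne \emptyset\} \in Spec(f) \text{ and } S_i \in Spec(g)\ \text{for every}\ i\ \text{with}\ S_i \ne \emptyset.
\]

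Using Fact~\ref{fact:fourier-dimension}, I would then choose $T_1,\ldots,T_a \in Spec(f)$ whose characteristic vectors in $\ftwo^n$ form a basis of $\vecspan(Spec(f))$, and $R_1,\ldots,R_b \in Spec(g)$ forming a basis of $\vecspan(Spec(g)) \le \ftwo^m$. For each $(j,k) \in [a]\times [b]$ I would define $v^{(j,k)} \in \ftwo^{nm} = (\ftwo^m)^n$ to have $i$-th block equal to $R_k$ if $i \in T_j$ and $\emptyset$ otherwise; the criterion above then immediately places $v^{(j,k)} \in Spec(f \circ g)$, since $\{i : v^{(j,k)}_i \ne \emptyset\} = T_j \in Spec(f)$ and every nonempty block is $R_k \in Spec(g)$.

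The main remaining step, which I expect to be the only one requiring a short calculation, is linear independence of the $ab$ vectors $\{v^{(j,k)}\}$. A relation $\sum_{j,k}c_{j,k}v^{(j,k)} = 0$, read block by block, becomes $\sum_k \bigl(\sum_{j : i\in T_j} c_{j,k}\bigr) R_k = 0$ in $\ftwo^m$ for each $i \in [n]$; independence of the $R_k$'s collapses this to $\sum_{j : i \in T_j} c_{j,k} = 0$ for every $i$ and $k$, which, for each fixed $k$, is the statement that $\sum_j c_{j,k} e_{T_j}$ vanishes in $\ftwo^n$, so independence of the $e_{T_j}$'s forces all $c_{j,k}=0$. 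This yields $\dim\vecspan(Spec(f\circ g))\ge ab$, and Fact~\ref{fact:fourier-dimension} delivers the lemma. The balancedness hypothesis is essential: without $\hat g(\emptyset)=0$, each Fourier coefficient of $f \circ g$ would pick up extra contributions from every $T \supsetneq \{i : S_i\ne\emptyset\}$, and the $v^{(j,k)}$'s above would no longer be guaranteed to lie in $Spec(f\circ g)$.
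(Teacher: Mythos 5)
Your proposal is correct and follows essentially the same route as the paper: expand $f\circ g$ multilinearly, use $\hat g(\emptyset)=0$ to rule out cancellations and characterize $Spec(f\circ g)$ as block vectors whose nonempty blocks come from $Spec(g)$ in the pattern of an element of $Spec(f)$, then lower-bound the dimension by a Kronecker-product-type family. The only cosmetic difference is that you exhibit the $ab$ vectors $v^{(j,k)}$ and verify their linear independence directly, whereas the paper packages the same selection as a submatrix of its ``super-slam'' matrix and cites the rank bound $rank(A\dagger B)\ge rank(A)\,rank(B)$ via the Kronecker product.
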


\begin{proof}
	The multilinear expansions of $f$ and $g$ are given as $f(y)= \sum_{S \in \ftwo^n} \hat f(S) \chi_S(y)$ and $g(y)= \sum_{S \in \ftwo^m} \hat g(S) \chi_S(y)$.
	The multilinear expansion of $f \circ g$ can be obtained as follows.
	For each monomial $\hat f(S) \chi_S(y)$ in the multilinear expansion of $f$ and each variable $y_i$ substitute $y_i$ by the multilinear expansion of $g$ on a set of variables $x_{ m (i - 1) + 1, \dots, m i}$.
	Multiplying all these multilinear expansions corresponding to the term $\hat f(S)\chi_S$ gives a polynomial which is a sum of at most $b^n$ monomials where $b$ is the number of non-zero Fourier coefficients of $g$. Each such monomial is obtained by picking one monomial from the multilinear expansions corresponding to different variables in $\chi_S$ and multiplying them.
	Note that there are no cancellations between the monomials corresponding to a fixed $\chi_S$.
	Moreover, since $g$ is balanced and thus $\hat g(\emptyset) = 0$ all monomials corresponding to different characters $\chi_S$ and $\chi_{S'}$ are unique since $S$ and $S'$ differ on some variable and substitution of $g$ into that variable doesn't have a constant term but introduces new variables. Thus, the characteristic vectors of non-zero Fourier coefficients of $f \circ g$ are the same as the set of rows of the super-slam of the sketch matrices $M_f$ and $M_g$ (note, that in the super-slam some rows can be repeated multiple times but after removing duplicates the set of rows of the super-slam and the set of characteristic vectors of non-zero Fourier coefficients of $f \circ g$ are exactly the same).
	Using Proposition~\ref{prop:super-slam-rank} and Fact~\ref{fact:fourier-dimension} we have: 
	\begin{align*}
	\ds (f \circ g) = rank(M_{f \circ g}) = rank(M_f \dagger M_g) \ge rank(M_f) rank(M_g) = \ds (f) \ds (g). \qedhere
	\end{align*}
\end{proof}

Deterministic $\ftwo$-sketch complexity of convolution satisfies the following property:
\begin{proposition}\label{prop:det-sketch-convolution}
	$\ds (f * g) \le \min(\ds (f), \ds (g)).$
\end{proposition}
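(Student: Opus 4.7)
The plan is to reduce the statement to a simple Fourier-support containment and then invoke Fact~\ref{fact:fourier-dimension}, which identifies deterministic $\ftwo$-sketch complexity with the Fourier dimension (the smallest dimension of a linear subspace of $\ftwo^n$ containing $Spec(\cdot)$).

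First, I would recall the convolution identity for Fourier coefficients stated right after the definition in the preliminaries: for every $S \in \ftwo^n$,
\[
\widehat{f*g}(S) = \hat f(S)\,\hat g(S).
\]
From this it follows immediately that $Spec(f*g) \subseteq Spec(f) \cap Spec(g)$, since if either $\hat f(S)$ or $\hat g(S)$ vanishes then so does $\widehat{f*g}(S)$.

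Next I would translate this containment into a statement about subspaces. Let $A_f$ be a linear subspace of $\ftwo^n$ of dimension $dim(f)$ with $Spec(f) \subseteq A_f$, and similarly $A_g$ with $Spec(g) \subseteq A_g$ of dimension $dim(g)$ (these exist by definition of Fourier dimension). Then $Spec(f*g) \subseteq Spec(f) \subseteq A_f$ and likewise $Spec(f*g) \subseteq A_g$, so $f*g$ has Fourier dimension at most $\min(dim(f), dim(g))$. Applying Fact~\ref{fact:fourier-dimension} on all three sides yields $\ds(f*g) = dim(f*g) \le \min(dim(f), dim(g)) = \min(\ds(f), \ds(g))$, which is the claim.

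There is essentially no obstacle here: the entire argument is a one-line Fourier-support observation plus an invocation of Fact~\ref{fact:fourier-dimension}. The only thing worth being careful about is not to claim equality; the containment can be strict because $\hat f(S)\hat g(S)$ can vanish on coefficients where individually $\hat f(S)$ and $\hat g(S)$ are both nonzero only if one of them is zero, but more importantly the supports $Spec(f)$ and $Spec(g)$ may span subspaces that are not directly comparable, so the minimum of the two dimensions is the right bound rather than something smaller in general.
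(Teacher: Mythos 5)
Your proof is correct and follows essentially the same route as the paper: use $\widehat{f*g}(S)=\hat f(S)\hat g(S)$ to get $Spec(f*g)\subseteq Spec(f)\cap Spec(g)$ and then apply Fact~\ref{fact:fourier-dimension}. (Minor aside: since a product of nonzero reals is nonzero, the support containment is in fact an equality, as the paper notes; only the dimension bound can be strict.)
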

\begin{proof}
	The Fourier spectrum of convolution is given as $\widehat {f * g}(S) = \hat f(S) \hat g(S)$.
	Hence, the set of non-zero Fourier coefficients of $f * g$ is the intersection of the sets of non-zero coefficients of $f$ and $g$. Thus by Fact~\ref{fact:fourier-dimension} 
	we have $\dl(f * g) \le \min(rank(M_f, M_g)) = \min(\dl(f), \dl(g))$.
\end{proof}

\section{Randomized $\ftwo$-sketching}\label{app:randomized}
We represent randomized $\ftwo$-sketches as distributions over $d \times n$ matrices over $\ftwo$. For a fixed such distribution $\mathcal M_f$ the randomized sketch is computed as $\mathcal M_f x$. If the set of rows of $\mathcal M_f$ satisfies Definition~\ref{def:rand-f2-sketch} for some reconstruction function $g$ then we call it a \textit{randomized sketch matrix} for $f$.

\subsection{Extractor argument}\label{sec:extractor}

We now establish a connection between randomized $\ftwo$-sketching and affine extractors which will be used to show that the converse of Part 1 of Theorem~\ref{thm:linear-sketch-uniform} doesn't hold for arbitrary distributions.
\begin{definition}[Affine extractor]
	A function $f:\ftwo^n \to \ftwo$ is an affine $\delta$-extractor if for any affine subspace $A$ of $\ftwo^n$ of dimension at least $d$ it satisfies:
	$$\min_{z \in \{0,1\}} \Pr_{x \sim U(A)}[ f(x) = z] > \delta.$$
\end{definition}

\begin{lemma}\label{lem:rand-sketch-extractors}
	For any $f \colon \ftwo^n \rightarrow \ftwo$ which is an affine $\delta$-extractor of dimension at least $d$ it holds that:
	$$\rl{\delta}(f) \ge n - d + 1.$$
\end{lemma}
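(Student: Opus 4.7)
The argument is a natural randomized extension of the disperser-based lower bound above, using the standard ``average over the randomness'' trick in Yao-style arguments. Assume for contradiction that there exists a randomized $\ftwo$-sketch of dimension $k \le n-d$ with error $\le \delta$, i.e.\ a distribution $\mathcal M$ over matrices $M \in \ftwo^{k \times n}$ together with (for each $M$ in the support) a reconstruction function $g_M \colon \ftwo^k \to \ftwo$ such that for every $x \in \ftwo^n$,
\[
\Pr_{M \sim \mathcal M}\bigl[g_M(Mx) \neq f(x)\bigr] \le \delta.
\]
My goal is to use the affine extractor property to derive a strictly greater lower bound on the average-case error and obtain a contradiction.

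The key observation is: fix any matrix $M$ in the support of $\mathcal M$ and any $b \in \ftwo^k$. The preimage $\{x : Mx = b\}$, if non-empty, is an affine subspace of dimension $n - \mathrm{rank}(M) \ge n - k \ge d$. By the affine $\delta$-extractor property, for every such preimage and for both possible outputs $z \in \{0,1\}$, strictly more than a $\delta$-fraction of inputs $x$ in the preimage satisfy $f(x) \neq z$. In particular, no matter which value $g_M(b)$ the reconstruction outputs, the error on this coset is strictly greater than $\delta$. Since the preimages of $M$ partition $\ftwo^n$, averaging over a uniformly random $x \in \ftwo^n$ gives, for every $M$,
\[
\Pr_{x \sim U(\ftwo^n)}\bigl[g_M(Mx) \neq f(x)\bigr] > \delta.
\]
Averaging this inequality over $M \sim \mathcal M$ preserves the strict inequality, so the expected error over both $M$ and $x$ is strictly greater than $\delta$.

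On the other hand, the assumed sketch guarantee implies that for every $x$, the error over $M$ is at most $\delta$, and hence the expected error under uniform $x$ is also at most $\delta$. By Fubini these two averages are equal, yielding the contradiction $\delta < \E_{x,M}[\mathbf{1}[g_M(Mx) \neq f(x)]] \le \delta$. Therefore $k \ge n - d + 1$, as claimed. I expect no real obstacle here; the only thing to be careful about is that the strict inequality in the extractor definition is what gives the strict inequality after averaging over the partition (so we do not need any slack in $\delta$), and that the rank of $M$ may be smaller than $k$, but this only makes the cosets larger and hence the extractor argument still applies.
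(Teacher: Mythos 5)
Your proof is correct and follows essentially the same route as the paper's: fix a matrix $M$ in the support, apply the extractor property to each coset $\{x : Mx = b\}$ (of dimension at least $n-k \ge d$) to force error strictly greater than $\delta$ for every reconstruction value, average over uniform $x$ and then over $M$, and contradict the per-input guarantee. The only cosmetic difference is that the paper averages over $x$ at the end to exhibit a single bad $x^*$, whereas you invoke Fubini directly; these are equivalent.
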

\begin{proof}
	For the sake of contradiction assume that there exists a randomized linear sketch with a reconstruction function $g : \ftwo^k \to \ftwo$ and a randomized sketch matrix $\mathcal M_f$ which is a distribution over matrices with $k \le n-d$ rows.
	First, we show that:
		\begin{align*}
		\Pr_{x \sim U(\ftwo^n) M \sim \mathcal M_f}\left[g(Mx) \neq f(x)\right] > \delta.
		\end{align*}
		Indeed, fix any matrix $M \in  supp (\mathcal M_f)$.
		For any affine subspace $\mathcal S$ of the form $\mathcal S = \{x \in \ftwo^n | Mx = b\}$ of dimension at least $n - k \ge d$ we have that $\min_{z \in \{0,1\}} \Pr_{x \sim U(\mathcal S)}[ f(x) = z] > \delta $.
		This implies that $\Pr_{x \sim U(\mathcal S)}[ f(x) \neq g(Mx)] > \delta $.
		Summing over all subspaces corresponding to the  fixed $M$ and all possible choices of $b$ we have that 
		$\Pr_{x \sim U(\ftwo^n)}[ f(x) \neq g(Mx)] > \delta$.
		Since this holds for any fixed $M$ the bound follows.
	
	Using the above observation it follows by averaging over $x \in \{0,1\}^n$
	that there exists $x^* \in \{0,1\}^n$ such that: 
	\begin{align*}
	\Pr_{M \sim \mathcal M_f}\left[g(Mx^*) \neq f(x^*)\right] > \delta.
	\end{align*}
	This contradicts the assumption that $\mathcal M_f$ and $g$ form a randomized linear sketch of dimension $k \le n - d$.\qedhere
\end{proof}

\begin{fact}
	The inner product function $IP(x_1, \dots x_n)$ = $\sum_{i = 1} ^{n/2} x_{2 i - 1} \wedge x_{2i}$ is an $(1/2 - \epsilon)$-extractor for affine subspaces of dimension $\ge (1/2 + \alpha) n$ where $\epsilon = \exp(-\alpha n)$. 
	
\end{fact}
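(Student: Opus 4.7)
The plan is to bound the bias of $IP$ on any affine subspace and then convert this bias bound into the extractor guarantee. Recall that $IP$ decomposes as a parity of $n/2$ independent AND-clauses on disjoint pairs of variables, so its Fourier transform factors over the pairs.

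First I would compute the Fourier spectrum of $IP$ viewed as a $\pm 1$-valued function. For each pair $(x_{2i-1}, x_{2i})$ and each pair of mask bits $(S_{2i-1}, S_{2i})$, the four possible assignments of the pair give character-weighted contributions that sum to $\pm \tfrac{1}{2}$; multiplying over all $n/2$ pairs shows that $|\widehat{IP}(S)| = 2^{-n/2}$ for every $S \in \ftwo^n$. (Parseval then confirms $\sum_S \widehat{IP}(S)^2 = 1$, so this flat-spectrum property is consistent.)

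Next I would relate the bias of $IP$ on a coset to its Fourier expansion. Write the affine subspace as $A = H + a$ where $H \le \ftwo^n$ has dimension $d \ge (1/2+\alpha)n$. Using the standard identity $\mathbb E_{x \sim U(A)}[\chi_S(x)] = \chi_S(a)\cdot \mathbb 1[S \in H^\perp]$, Fourier inversion gives
\[
\mathbb E_{x \sim U(A)}\bigl[(-1)^{IP(x)}\bigr] \;=\; \sum_{S \in H^\perp} \widehat{IP}(S)\,\chi_S(a).
\]
Applying the triangle inequality and the flat-spectrum bound yields
$\bigl|\mathbb E_{x \sim U(A)}[(-1)^{IP(x)}]\bigr| \le |H^\perp| \cdot 2^{-n/2} = 2^{n-d} \cdot 2^{-n/2} = 2^{n/2 - d} \le 2^{-\alpha n}$.

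Finally I would translate the bias bound into the extractor statement. Since $\Pr_{x\sim U(A)}[IP(x)=z] = \tfrac{1}{2}\bigl(1 + (-1)^z\,\mathbb E_{x \sim U(A)}[(-1)^{IP(x)}]\bigr)$ for $z \in \{0,1\}$, both probabilities lie within $2^{-\alpha n - 1}$ of $1/2$, which (absorbing the constant factor of $\ln 2$ and the $\tfrac12$ into the $\exp(\cdot)$) gives the claimed $(1/2 - \exp(-\alpha n))$-extractor guarantee. There is no real obstacle here; the only point requiring care is the Fourier computation for $IP$, which is a routine product-of-pairs calculation using the tensor structure of the function.
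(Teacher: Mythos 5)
Your argument is correct, and since the paper states this fact without proof (it is folklore), there is no paper proof to diverge from: the route you take --- the flat spectrum $|\widehat{IP}(S)| = 2^{-n/2}$ obtained from the tensor structure over the $n/2$ pairs, the coset-bias identity $\E_{x \sim U(a+H)}[(-1)^{IP(x)}] = \sum_{S \in H^\perp} \widehat{IP}(S)\chi_S(a)$, and the count $|H^\perp| = 2^{n-d}$ --- is exactly the canonical argument. One small quibble with your last step: the deviation from $1/2$ that you actually obtain is $\tfrac12\, 2^{n/2-d} \le 2^{-\alpha n - 1}$, which matches the claimed $\epsilon = \exp(-\alpha n)$ if $\exp$ is read base $2$, but since $2^{-\alpha n} = e^{-(\ln 2)\alpha n} > e^{-\alpha n}$ for large $\alpha n$, you cannot literally ``absorb the factor $\ln 2$'' in the direction you indicate; the bound is weaker than base-$e$ $\exp(-\alpha n)$ by a constant factor in the exponent. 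This is immaterial for how the paper uses the fact (its corollary that $\rl{\delta}(IP) \ge n/2 - O(1)$ only needs a constant-bias extractor at dimension $n/2 + O(1)$), but the statement of your final sentence should be adjusted accordingly.
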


\begin{corollary}\label{cor:ip-lower-bound}
	Randomized linear sketching complexity of the inner product function is at least $n/2 - O(1)$.
\end{corollary}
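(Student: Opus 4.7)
The plan is to combine the stated Fact (that $IP$ is an affine extractor) with Lemma~\ref{lem:rand-sketch-extractors} after a careful choice of the parameter $\alpha$ that appears in both. The Fact relates dimension of the subspace to the quality of the extractor via $\epsilon = \exp(-\alpha n)$, while the Lemma converts an extractor guarantee into a sketching lower bound of the form $n - d + 1$ where $d$ is the dimension threshold. To get a final bound of $n/2 - O(1)$, I want $d = n/2 + O(1)$, i.e.\ $\alpha n = O(1)$.

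Concretely, I would fix a constant $c > 0$ (any value large enough to ensure $e^{-c}$ is bounded away from $1/2$, say $c = 1$), and instantiate the Fact with $\alpha := c/n$. This yields that $IP$ is an affine $\bigl(\tfrac12 - e^{-c}\bigr)$-extractor for all affine subspaces of dimension at least $d := \bigl(\tfrac12 + \alpha\bigr) n = n/2 + c$. Feeding these values of $\delta = \tfrac12 - e^{-c}$ and $d$ into Lemma~\ref{lem:rand-sketch-extractors} immediately gives
\[
\rl{\,\tfrac12 - e^{-c}}(IP) \;\ge\; n - d + 1 \;=\; n/2 - c + 1 \;=\; n/2 - O(1).
\]
Since the randomized sketching complexity $\rl{\delta}$ is monotone non-increasing in $\delta$, this implies the same lower bound for $\rl{\delta'}(IP)$ for every constant $\delta' \le \tfrac12 - e^{-c}$, in particular for the conventional error $\delta' = 1/3$ provided $c$ is chosen large enough that $e^{-c} \le 1/6$.

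The only point requiring any care is the parameter balancing: $\alpha$ must be large enough that $\epsilon = e^{-\alpha n}$ is a useful constant (otherwise the extractor error $\tfrac12 - \epsilon$ is essentially $1/2$ and the lemma gives no bound for small-error sketches), yet small enough that the gap $\alpha n$ between $d$ and $n/2$ remains $O(1)$. The scaling $\alpha = \Theta(1/n)$ is the unique window in which both requirements hold, and this is really the only observation behind the corollary. There is no substantive technical obstacle once Lemma~\ref{lem:rand-sketch-extractors} and the extractor Fact are in hand.
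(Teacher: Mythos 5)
Your proposal is correct and follows exactly the route the paper intends: the corollary is stated as an immediate consequence of the affine-extractor Fact for $IP$ combined with Lemma~\ref{lem:rand-sketch-extractors}, and the only real content is the parameter choice $\alpha = \Theta(1/n)$ (so that the dimension threshold is $n/2 + O(1)$ while $\epsilon = e^{-\alpha n}$ stays a constant bounded away from $1/2$), which you identify and execute correctly, including the monotonicity-in-$\delta$ remark needed to phrase the bound for a fixed constant error.
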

\todo{This can be improved to $n - \Omega(1)$ using spectral argument, I think.}

\begin{remark}\label{rem:extractor-vs-concentration}
We note that the extractor argument of Lemma~\ref{lem:rand-sketch-extractors} is often much weaker than the arguments we give in Part 2 and Part 3 Theorem~\ref{thm:linear-sketch-uniform} and wouldn't suffice for our applications in Section~\ref{sec:applications}.
In fact, the extractor argument is too weak even for the majority function $Maj_n$.
If the first $100\sqrt{n}$ variables of $Maj_n$ are fixed to $0$ then the resulting restriction has value $0$ with probability $1 - e^{-\Omega(n)}$. Hence for constant error $Maj_n$ isn't an extractor for dimension greater than $100 \sqrt{n}$.
However, as shown in Section~\ref{sec:symmetric} for constant error $\ftwo$-sketch complexity of $Maj_n$ is linear.
\end{remark}

\subsection{Existential lower bound for arbitrary distributions}

Now we are ready to show that an analog of Part 1 of Theorem~\ref{thm:linear-sketch-uniform} doesn't hold for arbitrary distributions, i.e. concentration on a low-dimensional linear subspace doesn't imply existence of randomized linear sketches of small dimension.

\begin{lemma}\label{lem:rand-sketch-complexity-converse}
	For any fixed constant $\epsilon > 0$ there exists a function $f \colon \ftwo^n \rightarrow \oo$ such that $\rl{\epsilon/8}(f) \ge n - 3 \log n$ such that $f$ is $(1 - 2\epsilon)$-concentrated on the $0$-dimensional linear subspace.
\end{lemma}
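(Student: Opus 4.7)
The high-level idea is to construct $f$ as a small random perturbation of the constant function $+1$: the constant bias forces almost all Fourier energy onto $\hat f(\emptyset)$, while the random perturbation will be dense enough on every moderately large affine subspace to serve as an affine extractor, at which point Lemma~\ref{lem:rand-sketch-extractors} yields the desired $\ftwo$-sketch lower bound.

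Concretely, fix a parameter $p$ slightly smaller than $\epsilon$ (for instance $p=\epsilon/4$), and sample a set $T\subseteq\ftwo^n$ by including each $x\in\ftwo^n$ independently with probability $p$. Define $f \colon \ftwo^n\to\oo$ by $f(x)=-1$ if $x\in T$ and $f(x)=+1$ otherwise. I will show that with positive probability the following two events occur simultaneously:
\begin{itemize}
\item[(a)] $|T|/2^n \le \epsilon/2$, so that $\hat f(\emptyset)=\mathbb E[f] = 1 - 2|T|/2^n \ge 1-\epsilon$ and hence $\hat f(\emptyset)^2 \ge (1-\epsilon)^2 \ge 1-2\epsilon$, giving the required $(1-2\epsilon)$-concentration;
\item[(b)] for every affine subspace $A\le\ftwo^n$ of dimension $d\ge 3\log n$, $|T\cap A|/|A| > \epsilon/8$, so that both values of $f$ appear on $A$ with probability exceeding $\epsilon/8$; i.e., $f$ is an affine $(\epsilon/8)$-extractor for dimension $3\log n$.
\end{itemize}
Given (a) and (b), the concentration claim is immediate, and Lemma~\ref{lem:rand-sketch-extractors} (applied with $\delta=\epsilon/8$ and $d=3\log n$) yields $\rl{\epsilon/8}(f) \ge n - 3\log n + 1$, completing the proof.

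Both events follow from standard concentration plus a union bound. Event (a) is a single Chernoff bound on $|T|$, since $\mathbb E[|T|]=p\cdot 2^n$ and $p \le \epsilon/4$. For event (b), fix an affine subspace $A$ of dimension $d\ge 3\log n$, so $|A|\ge n^3$; the Chernoff bound gives
\[
\Pr\bigl[\,|T\cap A|/|A| \le \epsilon/8\,\bigr] \;\le\; \exp\bigl(-\Omega(p\,|A|)\bigr) \;\le\; \exp\bigl(-\Omega(\epsilon\, n^3)\bigr).
\]
The number of affine subspaces of dimension $d$ in $\ftwo^n$ is at most $2^{dn}$, so summing over $d\in[3\log n, n]$ the total number of affine subspaces to consider is at most $2^{O(n^2)}$. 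Since $\epsilon n^3 \gg n^2$ for large $n$, the union bound shows that event (b) fails with probability $o(1)$, and combined with (a) the two events hold simultaneously with positive probability.

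The only subtlety I anticipate is making sure the union bound covers every dimension $d \ge 3\log n$ (not just $d = 3\log n$), because good intersection on a $(3\log n)$-dimensional sub-subspace does not automatically extend to a much larger containing subspace with the right density. The counting above handles this because the subspace count $2^{O(n^2)}$ is still dominated by the Chernoff tail $\exp(-\Omega(\epsilon n^3))$ at every dimension simultaneously; this is where having $|A|$ of size polynomial in $n$ (via the choice $d\ge 3\log n$) is crucial.
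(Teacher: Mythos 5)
Your proposal is correct and follows essentially the same route as the paper: a random function taking value $-1$ independently with probability $\Theta(\epsilon)$, a Chernoff bound forcing $\hat f(\emptyset)^2 \ge 1-2\epsilon$, a Chernoff-plus-union bound over all affine subspaces of dimension at least $3\log n$ to get the affine $(\epsilon/8)$-extractor property, and then Lemma~\ref{lem:rand-sketch-extractors} to conclude $\rl{\epsilon/8}(f) \ge n - 3\log n$. The only cosmetic difference is that you count subspaces by dimension while the paper counts by codimension; both counts are $2^{O(n^2)}$ and are beaten by the $\exp(-\Omega(\epsilon n^3))$ tail, exactly as you note.
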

\begin{proof}
	The proof is based on probabilistic method. Consider a distribution over functions from $\ftwo^n$ to $\oo$ which independently assigns to each $x$ value $1$ with probability $1 - \epsilon/4$ and value $-1$ with probability $\epsilon/4$.
	By a Chernoff bound with probability $e^{- \Omega(\epsilon 2^n)}$ a random function $f$ drawn from this distribution has at least an $\epsilon/2$-fraction of $-1$ values and hence
	$\hat f(\emptyset) = \frac{1}{2^n}\sum_{\alpha \in \ftwo^n} f(x) \ge 1 - \epsilon$.
	This implies that $\hat f(\emptyset)^2 \ge (1 - \epsilon)^2 \ge 1 - 2 \epsilon$ so $f$ is $(1 - 2 \epsilon)$-concentrated on a linear subspace of dimension $0$. However, as we show below the randomized sketching complexity of some functions in the support of this distribution is large.
	
	The total number of affine subspaces of codimension $d$ is at most $(2 \cdot 2^n)^d =2^{(n + 1) d}$ since each such subspace can be specified by $d$ vectors in $\ftwo^n$ and a vector in $\ftwo^d$.
	The number of vectors in each such affine subspace is $2^{n - d}$.
	The probability that less than $\epsilon/8$ fraction of inputs in a fixed subspace have value $-1$ is by a Chernoff bound at most $e^{- \Omega(\eps 2^{n - d})}$.
	By a union bound the probability that a random function takes value $-1$ on less than $\epsilon/8$ fraction of the inputs in any affine subspace of codimension $d$ is at most $e^{- \Omega(\eps 2^{n - d})} 2^{(n + 1) d}$.
	For $d \le n - 3 \log n$ this probability is less than $e^{-\Omega(\epsilon n)}$.
	By a union bound, the probability that a random function is either not an $\epsilon/8$-extractor or isn't $(1 - 2\epsilon)$-concentrated on $\hat f(\emptyset)$ is at most $e^{-\Omega(\epsilon n)} + e^{-\Omega(\epsilon 2^n)} \ll 1$.
	Thus, there exists a function $f$ in the support of our distribution which is an $\epsilon/8$-extractor for any affine subspace of dimension at least $3 \log n$ while at the same time is $(1 - 2 \epsilon)$-concentrated on a linear subspace of dimension $0$. By Lemma~\ref{lem:rand-sketch-extractors} there is no randomized linear sketch of dimension less than $n - 3 \log n$ for $f$ which errs with probability less than $\epsilon/8$.
\end{proof}

%However, despite the fact that the direct converse of Lemma~\ref{lem:rand-sketch-complexity} doesn't hold, we show that for small values of $\epsilon$ a weak converse does in fact hold.
%
%\begin{lemma}
%	If $f$ is $(1 - \epsilon)$-concentrated on a linear subspace of dimension $d$ then there is a randomized linear sketch of dimension $n + d - \log 1/\epsilon + 2 \log 1/\delta + 2$  that predicts $f$ with probability at least $1 - \delta$ for any $\delta > 0$.
%\end{lemma}
%\begin{proof}
%	Let $\mathcal A_d$ be the linear subspace that contains $(1-\epsilon)$ fraction of the Fourier spectrum of $f$.
%	Let $g = \sum_{S \in \mathcal A_d} \hat f(S) \chi_S$ and $h = \sum_{S \notin \mathcal A_d} \hat f(S) \chi_S$.
%	Then $f  = g + h$ where $||g||^2_2 \ge 1 - \epsilon$ and $||h||^2_2 \le \epsilon$.
%	Let $G = sign(g)$. Note $G$ is a $d$-dimensional function since it is a function of $g$.
%	We will need the following standard fact (see e.g.~\cite{GOSSW11,OD14}). We include the proof for completeness.
%	\begin{fact}[~\cite{GOSSW11}]
%		$
%		\Pr_x[f(x) \neq G(x)] \le \epsilon
%		$
%	\end{fact}
%	\begin{proof}
%		We have $\Pr_x[f(x) \neq G(x)] \le \mathbb E_x[|f(x) - g(x)|^2] = ||f - g||^2_2 \le \epsilon$.
%	\end{proof}
%	Thus, for all $x$ we have $f(x) = G(x)H(x)$ where $\Pr_x[H(x) = -1] \le \epsilon$.
%	We can sketch $G$ on every input using $d$ deterministic sketches. Thus, if we can sketch $H$ then we also have a sketch for $f$.

\subsection{Random $\ftwo$-sketching}\label{sec:rand-sketching}

The following result is folklore as it corresponds to multiple instances of the communication protocol for the equality function~\cite{KN97,GKW04} and can be found e.g. in~\cite{MO09} (Proposition 11). We give a proof for completeness.

	\begin{fact}\label{prop:l0-bound}
		A function $f:\ftwo^n \to \ftwo$ such that $\min_{z \in \{0,1\}} \Pr_x[f(x) = z] \le \epsilon$ satisfies
		$$\rl{\delta}(f) \le \log \frac{\epsilon 2^{n+1}}{\delta} .$$
	\end{fact}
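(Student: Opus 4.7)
The plan is to prove this via a standard "random parity hashing" argument that essentially extends the communication protocol for equality. Without loss of generality, assume the minority value of $f$ is $0$ (otherwise flip the output), and let $S = f^{-1}(0) \subseteq \ftwo^n$, so $|S| \le \epsilon 2^n$. Set $k = \lceil \log (\epsilon 2^{n+1}/\delta)\rceil$. The randomized sketch I would use samples $k$ vectors $r_1,\ldots, r_k$ independently and uniformly from $\ftwo^n$ (using the shared randomness that underlies the distribution $\mathcal M$ in Definition~\ref{def:rand-f2-sketch}) and outputs the $k$ bits $\chi_{r_1}(x),\ldots,\chi_{r_k}(x)$.

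For the postprocessing $g$, which is allowed to depend on the sampled $r_1,\ldots,r_k$ (treating the postprocessing as a random family as permitted by the footnote of Definition~\ref{def:rand-f2-sketch}), I would brute-force search the minority set: return $0$ iff there exists some $x'\in S$ with $\chi_{r_i}(x')=\chi_{r_i}(x)$ for every $i\in[k]$, and return $1$ otherwise. Correctness on the majority side is automatic, since if $x\in S$ then the witness $x'=x$ matches all $k$ parities, so $g$ correctly outputs $0$.

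The only error source is a false positive when $f(x)=1$ but some $x'\in S$ happens to agree with $x$ on all $k$ random parities. For any fixed $x'\ne x$, the difference $x+x'$ is nonzero, so for a uniformly random $r\in \ftwo^n$ the bit $\chi_r(x+x')$ is uniform in $\ftwo$; hence all $k$ parities agree with probability exactly $2^{-k}$. A union bound over the at most $\epsilon 2^n$ elements of $S$ then gives error at most $\epsilon 2^n \cdot 2^{-k} \le \delta/2 < \delta$, as desired.

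There is no real obstacle here — the argument is entirely standard. The only mildly delicate point is the interpretation of the postprocessing $g$: strictly deterministic $g:\ftwo^k\to\ftwo$ cannot implement the above lookup, so the proof implicitly uses the randomized-$g$ variant of Definition~\ref{def:rand-f2-sketch} mentioned in the footnote. If a purely deterministic $g$ were required, one would instead absorb the description of $(r_1,\ldots,r_k)$ into the random choice of the sketch itself and leave $g$ as a fixed lookup on ordered pairs (sketch, seed), with no change to the parameter $k$.
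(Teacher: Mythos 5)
Your proof is correct, and it follows the same random-parity hashing idea as the paper's proof of Fact~\ref{prop:l0-bound}, but with a cleaner decoding rule. The paper also draws roughly $\log(\epsilon 2^{n+1}/\delta)$ uniform parities and union-bounds collisions, but its union bound is over pairs \emph{within} the minority set: it conditions on the input's coset containing a unique minority element and then runs a second stage --- an equality/OR test using $O(\log(1/\delta))$ additional random parities --- splitting the error as $\delta/2+\delta/2$, so as written it spends $\log(\epsilon 2^{n+1}/\delta)+O(\log(1/\delta))$ parities. Your decoder (output the minority value iff some minority element is consistent with all observed parities) is one-sided: it never errs on minority inputs, and on a majority input $x$ the only error event is a collision between $x$ and some minority element, bounded by $\epsilon 2^n\cdot 2^{-k}\le \delta/2$. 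This eliminates the second stage entirely and matches the stated bound exactly, so your route is, if anything, tighter than the paper's own argument. One caveat about your final remark: the proposed fallback for a purely deterministic $g$ does not stay within the literal definition --- a ``lookup on ordered pairs (sketch, seed)'' is just the randomized-$g$ variant again, since the seed is not among the $k$ sketch bits and cannot be realized as linear functions of $x$. This is not a defect relative to the paper, however: the paper's own decoder equally needs to know the sampled parities (to reconstruct the coset and its unique candidate in the minority set), so both proofs implicitly work in the public-randomness/random-family interpretation mentioned in the footnote of Definition~\ref{def:rand-f2-sketch}, which is the interpretation consistent with the one-way communication application.
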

	\begin{proof}
		We assume that $\argmin_{z \in \{0,1\}} \Pr_x[f(x) = z] = 1$ as the other case is symmetric.
		Let $T = \{ x \in \ftwo^n | f(x) = 1\}$.
		For every two inputs $x \neq x' \in T$ a random $\ftwo$-sketch $\chi_{\alpha}$ for $\alpha \sim U(\ftwo^n)$ satisfies $\Pr[\chi_\alpha(x) \neq \chi_\alpha(x')] = 1/2$.
		If we draw $t$ such sketches $\chi_{\alpha_1}, \dots, \chi_{\alpha_t}$ then $\Pr[\chi_{\alpha_i}(x) = \chi_{\alpha_i}(x'), \forall i \in [t] ] = 1/2^t$.
		For any fixed $x \in T$ we have:
		\begin{align*}
		\Pr[\exists x'\neq x \in T \text{ } \forall i \in [t]: \chi_{\alpha_i}(x) = \chi_{\alpha_i}(x') ] \le  \frac{|T| - 1}{2^t} \le \frac{\epsilon 2^n}{2^t} \le \frac{\delta}{2}.
		\end{align*}
		Conditioned on the negation of the event above for a fixed $x \in T$ the domain of $f$ is partitioned by the linear sketches into affine subspaces such that $x$ is the only element of $T$ in the subspace that contains it.
		We only need to ensure that we can sketch $f$ on this subspace which we denote as $\mathcal A$.
		On this subspace $f$ is isomorphic to an OR function (up to taking negations of some of the variables) and hence can be sketched using $O(\log 1/\delta)$ uniformly random sketches with probability $1 - \delta/2$.
		For the OR-function existence of the desired protocol is clear since we just need to verify whether there exists at least one coordinate of the input that is set to $1$.
		In case it does exist a random sketch contains this coordinate with probability $1/2$ and hence evaluates to $1$ with probability at least $1/4$. Repeating $O(\log 1/\delta)$ times the desired guarantee follows.
	\end{proof}

\section{Tightness of Theorem~\ref{thm:linear-sketch-uniform} for the Majority function} \label{app:tightness}
An important question is whether Part 3 of Theorem~\ref{thm:linear-sketch-uniform} is tight. In particular, one might ask whether the dependence on the error probability can be improved by replacing $\dgap{d}{f}$ with a larger quantity.
As we show below this is not the case and hence Part 3 of Theorem~\ref{thm:linear-sketch-uniform} is tight. 

We consider the majority function $Maj_n$ where $n$ is an odd number.
The total Fourier weight on Fourier coefficients corresponding vectors of Hamming weight $k$ is denoted as $W^k(f) = \sum_{\alpha \colon \|\alpha\|_0 = k} \hat f(\alpha)^2$.
For the majority function it is well-known (see e.g.~\cite{OD14}) that for $\xi = \left(\frac{2}{\pi}\right)^{3/2}$ and odd $k$ it holds that: 
$$W^k(Maj_n) = \xi k^{-3/2} (1 \pm O(1/k)).$$
Since $Maj_n$ is a symmetric function whose spectrum decreases monotonically with the Hamming weight of the corresponding Fourier coefficient by a normalization argument as in Lemma~\ref{lem:linear-subspace-hamming-weight-intersection} among all linear subspaces of dimension $d$ the maximum Fourier weight is achieved by the standard subspace $\mathcal S_d$ which spans $d$ unit vectors.
Computing the Fourier weight of $\mathcal S_{n-1}$ we have:
\begin{align*}
\sum_{\alpha \in \mathcal S_{n-1}} \widehat{Maj}_n(\alpha)^2 &= 1 - \sum_{\alpha \notin \mathcal S_{n-1}} \widehat{Maj}_n(\alpha)^2 \\
&= 1 - \sum_{i = 0}^{n/2 - 1} W^{2i + 1}(Maj_n) \frac{\binom{n - 1}{2i} }{ \binom{n}{2i + 1}} \\
&= 1 - \sum_{i = 0}^{n/2 - 1} \xi \frac{1}{(2i + 1)^{3/2}} \frac{2i + 1 }{n} \left(1 \pm O\left(\frac{1}{2i + 1}\right)\right) \\
&= 1 - \frac{\gamma}{\sqrt{n}} \pm O\left(\frac{1}{n^{3/2}}\right),
\end{align*}
where $\gamma > 0$ is an absolute constant.
Thus, we can set $\epsu{n}{Maj_n} = 1, \epsu{n-1}{Maj_n} = 1 - \frac{\gamma}{\sqrt{n}} - O(1/n^{3/2})$ in Part 3 of Theorem~\ref{thm:linear-sketch-uniform}. This gives the following corollary:
\begin{corollary}
	It holds that $\distcu{\delta}(Maj_n^+) \ge n$, where $\delta = \frac{\gamma}{\sqrt{n}} + O\left(\frac{1}{n^{3/2}}\right)$ for some constant $\gamma >0$.
\end{corollary}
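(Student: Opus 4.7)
The plan is to derive the corollary by a direct application of Part 3 of Theorem~\ref{thm:linear-sketch-uniform} with the parameter choice $d = n$, using the Fourier-weight computation that was just carried out in the preceding display. Concretely, since $\epsu{n}{Maj_n} = 1$ trivially and $\epsu{n-1}{Maj_n} = 1 - \gamma/\sqrt{n} - O(1/n^{3/2})$ as computed, the approximate Fourier dimension gap at level $n$ is
\[
\dgap{n}{Maj_n} = \epsu{n}{Maj_n} - \epsu{n-1}{Maj_n} = \frac{\gamma}{\sqrt{n}} + O\!\left(\frac{1}{n^{3/2}}\right).
\]
Part 3 of Theorem~\ref{thm:linear-sketch-uniform} then gives $\distcu{\delta}(Maj_n^+) \ge n$ for $\delta = \dgap{n}{Maj_n}/4$, and absorbing the factor $1/4$ into the absolute constant $\gamma$ yields the stated bound.

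The first step I would emphasize is justifying that among all $d$-dimensional linear subspaces of $\ftwo^n$ the standard subspace $\mathcal{S}_d$ (spanned by $d$ coordinate unit vectors) maximizes $\sum_{\alpha \in \mathcal{A}} \widehat{Maj_n}(\alpha)^2$. This uses two ingredients: (i) $Maj_n$ is symmetric, so $\widehat{Maj_n}(\alpha)^2$ depends only on $\|\alpha\|_0$, and (ii) the Fourier weight $W^k(Maj_n) = \xi k^{-3/2}(1 \pm O(1/k))$ is monotonically decreasing in $k$ over odd $k$ (and vanishes for even $k$ since $Maj_n$ is odd). Combined with the counting bound on $|W_k \cap \mathcal{A}|$ from Lemma~\ref{lem:linear-subspace-hamming-weight-intersection} (used in the symmetric function argument), a standard exchange/averaging argument shows that spreading Hamming weights across the standard basis dominates any other $d$-dimensional subspace.

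The second step is the explicit evaluation: starting from Parseval, subtract the contribution of Fourier characters $\alpha \notin \mathcal{S}_{n-1}$. A character $\alpha$ of Hamming weight $2i+1$ lies outside $\mathcal{S}_{n-1}$ exactly when its support contains the last coordinate, which happens for a $\binom{n-1}{2i}/\binom{n}{2i+1} = (2i+1)/n$ fraction of weight-$(2i+1)$ characters. Summing $W^{2i+1}(Maj_n) \cdot (2i+1)/n$ over $i = 0, \ldots, n/2 - 1$, using the asymptotic formula for $W^{2i+1}(Maj_n)$, produces a telescoping sum of order $\frac{1}{n}\sum_i (2i+1)^{-1/2} = \Theta(1/\sqrt{n})$, yielding the advertised $\gamma/\sqrt{n} + O(1/n^{3/2})$ with an explicit constant $\gamma > 0$ (essentially $\xi$ times the leading coefficient of the Riemann-sum approximation to $\int_0^1 t^{-1/2}\,dt$).

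The main obstacle is the first step: rigorously certifying that no non-standard subspace achieves strictly larger Fourier weight. One clean route is to note that for symmetric $f$ with monotonically decreasing weight profile $W^k(f)$ in $k$, the problem reduces to maximizing $\sum_k W^k(f)\cdot |W_k \cap \mathcal{A}|/|W_k|$, and an exchange argument comparing $\mathcal{A}$ to $\mathcal{S}_d$ at each level $k$ (bounding $|W_k \cap \mathcal{A}|/|W_k| \le |W_k \cap \mathcal{S}_d|/|W_k|$ via the Gaussian-elimination normalization used in Lemma~\ref{lem:linear-subspace-hamming-weight-intersection}) completes it. The remaining Fourier sum is a routine asymptotic calculation, after which the corollary follows by plugging into Part 3 of Theorem~\ref{thm:linear-sketch-uniform}.
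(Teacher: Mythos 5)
Your proposal matches the paper's own proof: the paper likewise takes $\epsu{n}{Maj_n}=1$, computes $\epsu{n-1}{Maj_n} = 1 - \gamma/\sqrt{n} \pm O(1/n^{3/2})$ by evaluating the Fourier weight of the standard subspace $\mathcal{S}_{n-1}$ (using symmetry and the monotone decay of the spectrum, with exactly the $(2i+1)/n$ fraction of weight-$(2i+1)$ characters falling outside $\mathcal{S}_{n-1}$), and then applies Part 3 of Theorem~\ref{thm:linear-sketch-uniform} with $d=n$, absorbing the factor $1/4$ into the constant $\gamma$. The only difference is cosmetic: you spell out the extremality of standard subspaces somewhat more explicitly than the paper's one-line normalization remark.
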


Tightness follows from the fact that error $O(1/\sqrt{n})$ for $Maj_n$ can be achieved using a trivial $(n-1)$-bit protocol in which Alice sends the first $n-1$ bits of her input $x_1, \dots, x_{n-1}$ and Bob outputs $Maj_{n - 1}(x_1 + y_1, x_2 + y_2, \dots, x_{n -1} + y_{n - 1})$.
The only inputs on which this protocol can make an error are inputs where there is an equal number of zeros and ones among $x_1 + y_1, \dots, x_{n-1} + y_{n-1}$. It follows from the standard approximation of binomials that such inputs are an $O(1/\sqrt{n})$ fraction under the uniform distribution.

\end{document}